\documentclass[11pt,letterpaper]{article}

\usepackage[a4paper,top=1in,bottom=1in,left=1in,right=1in,marginparwidth=1in]{geometry}
\usepackage{times}
\usepackage[symbol]{footmisc}
\usepackage{amssymb}
\usepackage{amsmath}
\usepackage{amsthm}
\usepackage{thm-restate}
\usepackage{hyperref}
\usepackage{enumerate}
\usepackage{xcolor}
\usepackage{tikz}
\usepackage[capitalize]{cleveref}
\usepackage{tcolorbox}

\def\colorschemesepia{sepia}
\def\colorschemedark{dark}
\def\colorschemelight{light}

\ifx\colorscheme\undefined
\let\colorscheme\colorschemelight
\fi

\ifx\colorscheme\colorschemelight
\colorlet{textColor}{black}
\colorlet{bgColor}{white}
\fi

\ifx\colorscheme\colorschemesepia
\definecolor{textColor}{HTML}{433423}
\definecolor{bgColor}{HTML}{fbf0da}
\fi

\ifx\colorscheme\colorschemedark
\definecolor{textColor}{HTML}{bdc1c6}
\definecolor{bgColor}{HTML}{202124}
\definecolor{textRed}{HTML}{ff968c}  
\definecolor{textGreen}{HTML}{70cc70}  
\definecolor{textBlue}{HTML}{8cbcff}  
\definecolor{textCyan}{HTML}{70cccc}  
\definecolor{textMagenta}{HTML}{d982d9}  
\definecolor{textYellow}{HTML}{bfbf69}  
\else
\colorlet{textRed}{red!50!black}
\colorlet{textGreen}{green!50!black}
\colorlet{textBlue}{blue!50!black}
\colorlet{textCyan}{cyan!80!black}
\colorlet{textMagenta}{magenta!80!black}
\colorlet{textYellow}{yellow!60!black}
\definecolor{textPurple}{HTML}{681da8}
\fi

\ifx\colorscheme\colorschemelight\else
\pagecolor{bgColor}
\color{textColor}
\fi

\hypersetup{
    hypertexnames=false,bookmarksnumbered=true,
    colorlinks,allcolors=textBlue
}

\newcommand*{\Th}{^{\textrm{th}}}
\newcommand*{\WLoG}{Without loss of generality}
\newcommand*{\wLoG}{without loss of generality}

\let\eps\varepsilon
\newcommand*{\defeq}{:=}
\newcommand*{\boolOne}{\mathbf{1}}  

\DeclareMathOperator*{\E}{\mathbb{E}}
\DeclareMathOperator*{\argmin}{argmin}

\DeclareMathOperator{\poly}{poly}
\DeclareMathOperator{\range}{range}
\DeclareMathOperator{\subsidy}{subsidy}
\DeclareMathOperator{\sortAndSplit}{\mathtt{sortAndSplit}}
\DeclareMathOperator{\condorcetSplit}{\mathtt{condorcetSplit}}

\newcommand{\bx}{\mathbf{x}}
\newcommand{\by}{\mathbf{y}}
\newcommand{\bX}{\mathbf{X}}
\newcommand{\bp}{\mathbf{p}}
\newcommand{\bs}{\mathbf{s}}

\newcommand{\N}{\mathbb{N}}

\newcommand{\cI}{\mathcal{I}}
\newcommand*{\Ical}{\mathcal{I}}
\newcommand*{\Icalhat}{\hat{\Ical}}
\newcommand*{\Ahat}{\hat{A}}
\newcommand*{\Mhat}{\hat{M}}

\newcommand*{\vhat}{\hat{v}}

\newcommand*{\betahat}{\hat{\beta}}

\newcommand{\LP}{\mathsf{LP}}
\newenvironment{proofof}[1]{{\vspace*{5pt} \noindent\bf Proof of #1:  }}{\hfill\rule{2mm}{2mm}\vspace*{5pt}}

\theoremstyle{plain}
\newtheorem{theorem}{Theorem}[section]
\newtheorem{lemma}[theorem]{Lemma}
\newtheorem{corollary}[theorem]{Corollary}

\newtheorem{observation}[theorem]{Observation}

\theoremstyle{definition}
\newtheorem{definition}[theorem]{Definition}
\newtheorem{example}[theorem]{Example}

\allowdisplaybreaks

\makeatletter
\g@addto@macro{\UrlBreaks}{%
\do\/%
\do\a\do\b\do\c\do\d\do\e\do\f\do\g\do\h\do\i\do\j\do\k\do\l\do\m%
\do\n\do\o\do\p\do\q\do\r\do\s\do\t\do\u\do\v\do\w\do\x\do\y\do\z%
\do\A\do\B\do\C\do\D\do\E\do\F\do\G\do\H\do\I\do\J\do\K\do\L\do\M%
\do\N\do\O\do\P\do\Q\do\R\do\S\do\T\do\U\do\V\do\W\do\X\do\Y\do\Z%
\do\0\do\1\do\2\do\3\do\4\do\5\do\6\do\7\do\8\do\9%
}
\makeatother

\makeatletter
\@ifpackageloaded{enumitem}{%
\newenvironment*{tightemize}{\begin{itemize}[noitemsep]}{\end{itemize}}%
\newenvironment*{tightenum}{\begin{enumerate}[noitemsep]}{\end{enumerate}}%
}{%
}
\makeatother

\let\citet\cite
\let\citep\cite

\title{Tight Subsidy Bounds for Weighted Proportional Allocation \\
of Mixed Manna}
\author{Jugal Garg\thanks{University of Illinois at Urbana-Champaign, USA. J.~Garg and E.~Sharma were supported by NSF Grant CCF-2334461.}\\\texttt{\small jugal@illinois.edu} \and Eklavya Sharma\footnotemark[1]\\\texttt{\small eklavya2@illinois.edu} \and Xiaowei Wu\thanks{University of Macau, China. Xiaowei Wu is funded by the Science and Technology Development Fund (FDCT), Macau SAR (file no. 0147/2024/RIA2, 001/2024/SKL, 0002/2025/EQP and CG2026-IOTSC), and University of Macau (file no. MYRG-GRG2025-00033-IOTSC).}\\\texttt{\small xiaoweiwu@um.edu.mo}}
\date{\empty}
\begin{document}

\maketitle

\begin{abstract}
We study the problem of fairly allocating $m$ indivisible items among $n$ agents with possibly unequal entitlements in the mixed manna setting, where each item may be perceived as a good or a chore by different agents. We focus on the fundamental fairness notion of proportionality. Since proportional allocations need not exist in this setting, we allow monetary subsidies
to restore proportionality while minimizing the total subsidy.
When each item's (dis)utility is bounded by 1, a total subsidy of at least $\tau(n) \approx n/4$ may be necessary. For goods-only or chores-only instances, the best previously known upper bound was $n/3-1/6$ due to Wu and Zhou~(2024). 
We close this gap by proving that a total subsidy of at most $\tau(n)$ always suffices, thereby establishing the tight subsidy bound. Our results hold even in the more general setting of weighted mixed manna, resolving an open question posed by~Wu et al. (2023) and Garg et al. (2026).
The allocation also satisfies weighted proportionality up to one item (WPROP1).
Our proof develops a novel application of the Knaster–Kuratowski–Mazurkiewicz (KKM) fixed-point theorem, extending the KKM framework to share-based fairness notions.
Finally, we design a polynomial-time algorithm to compute such allocations for any fixed number of agents.
\end{abstract}
\thispagestyle{empty}
\newpage
\setcounter{page}{1}
\section{Introduction}
\label{sec:intro}

Fair allocation of a set $M$ of $m$ indivisible items among $n$ agents is a fundamental problem in economics and computer science. We assume that each agent has additive preferences, where agent $i$ assigns a value $v_{i}(e)$ to item $e$ and the value of any bundle $S\subseteq M$ to agent $i$ is given by $v_i(S) = \sum_{e\in S} v_{i}(e)$.
These problems arise in various contexts, such as assigning tasks among employees or distributing environmental compliance responsibilities across companies. 
In many of these scenarios, the items to be allocated consist of both desirable goods and undesirable chores---a setting known as \emph{mixed manna}.
Furthermore, the agents might have different capacities, obligations, or asymmetric entitlements. 
For example, a hospital team with more personnel may be expected to handle a larger share of shifts, or a company with higher emissions should bear a proportionally larger portion of environmental duties. In this general framework, each agent $i$ is associated with a weight $w_i$, which represents different entitlements or responsibilities, where the total weight sums to 1.

Two central notions of fairness in the continuous setting, where items are divisible and can be fractionally allocated, are \emph{proportionality}~\cite{steihaus1948problem} and \emph{envy-freeness}~\cite{foley1967resource}, which are share-based and envy-based fairness criteria, respectively.
An allocation $\bX=(X_1, \dots, X_n)$, which is a partition of $M$ into $n$ bundles (where $X_i$ is assigned to agent $i$) is weighted proportional (WPROP) if every agent receives a bundle worth at least a fraction of their weight of the total value, i.e., $v_i(X_i) \ge w_i\cdot v_i(M)$ for all agents $i$. On the other hand, an allocation is weighted envy-free if no agent envies another agent's bundle relative to their weight, i.e., $v_i(X_i)/w_i \ge v_i(X_j)/w_j$ for all agents $i, j$.
While these fairness criteria are compelling, it is well-known that allocations satisfying them do not always exist when items are indivisible, which motivates the study of approximate fairness or the use of monetary compensation. In this paper, we focus on the latter approach for fair allocation of indivisible mixed manna, aiming to achieve weighted proportionality through monetary adjustments, referred to as \emph{subsidies}.

For the setting of symmetric agents, i.e., where agents have the same weights,
\citet{sagt/HalpernS19} initiated the study of envy-free allocation with subsidy
for indivisible goods.
An allocation with subsidy is envy-free if no agent envies the bundle plus subsidy owned by any other agent. Assuming that every good has value at most $1$ to every agent, they showed that a total subsidy of at least $n-1$ is necessary to guarantee envy-freeness. \citet{sigecom/BrustleDNSV20} established a matching upper bound by proposing an algorithm that computes envy-free allocations with subsidy at most $n-1$. An analogous result was also shown for chores by~\citet{wine/WuZZ23}, under the assumption that every chore has disutility at most $1$ to every agent.
In addition, they initiated the study of proportional allocation with subsidy, where an allocation with subsidy is proportional if the utility of every agent obtained from the assigned bundle and subsidy is at least her proportional share.
They showed that achieving proportionality requires a total subsidy of at least $\tau(n)$, where $\tau(n)=n/4$ when $n$ is even and $(n^2-1)/(4n)$ when $n$ is odd, and proposed polynomial-time algorithms that compute proportional allocations for both goods and chores using a subsidy of at most $n/4$, which is tight when $n$ is even. In a follow-up work, \citet{ijcai/WuXZ25} generalized the result to the allocation of indivisible mixed manna, and improved the upper bound to $\tau(n)$, which is tight for all $n$.

For the more general setting of asymmetric agents, the first algorithm for computing WPROP allocations with subsidy was proposed by~\citet{wine/WuZZ23}, who provided an upper bound of $\left(n-1\right)/2$ on the total subsidy.
Later, \citet{wine/WuZ24} improved this bound to $n/3-1/6$ using a \emph{tree-splitting rounding} technique. Both results apply to goods-only and chores-only instances, but do not cover general mixed manna.
Recently, \citet{garg2025wprop+PO4chores} showed that the subsidy bound of $n/3-1/6$ for chores can be achieved together with the efficiency guarantee of \emph{Pareto optimality} (PO), by rounding a competitive equilibrium. However, whether the tight guarantee of $\tau(n)$ can be obtained for weighted proportionality has remained unknown, and this question has been highlighted as an open problem in several works~\cite{wine/WuZZ23,wine/WuZ24,ijcai/WuXZ25,garg2025wprop+PO4chores}.

\subsection{Our Contributions}

In this work, we study the subsidy requirement for achieving weighted proportionality and answer the above open question in the affirmative.

\begin{tcolorbox}[colback=textColor!7!bgColor, frame empty, colupper=textColor]
{\bf Result 1.~}
For any mixed-manna instance with $n$ agents with additive valuation functions and general weights, there always exists a WPROP1 allocation that can be made WPROP with subsidy at most $\tau(n)$.
\end{tcolorbox}

Our work not only improves the subsidy bounds established in previous works  
for goods or chores, but also shows a tight guarantee for the more general setting of mixed manna.
Prior to our work, the same tight bound was known only for the unweighted case~\cite{ijcai/WuXZ25}.
Our bound depends only on the number of agents and is independent of their entitlements. This differs from weighted envy-freeness with subsidy, for which the bounds for general additive goods in~\cite{corr/abs-2502-09006} depend on the entitlements.
The allocation simultaneously satisfies WPROP1; for canonical instances, it is also fPO. The reduction to general instances preserves WPROP1 and the subsidy guarantee, but does not establish PO or fPO.

\smallskip

Our analysis is based on a novel application of the Knaster–Kuratowski–Mazurkiewicz (KKM) fixed point theorem to the allocation of indivisible items. This framework has recently demonstrated strong potential in establishing the existence of near envy-free and PO allocations~\cite{IgarshiM26,corr/abs-2507-09544,corr/abs-2509-18673,corr/abs-2507-03946}. In this work, we extend the framework for the first time to handle share-based fairness notions.
More broadly, these recent developments establish the KKM theorem as a powerful new tool for discrete fair division. At the same time, applying the KKM theorem presents its own technical challenges, which require problem-specific construction and arguments. We hope that our work will inspire further applications of this framework to other challenging open problems.

Below, we provide an overview of our analysis framework and highlight some key challenges.
Our analysis consists of three main components:
\begin{itemize}
    \setlength\itemsep{0.5em}
    \item[(1)] We first perturb the input instance to obtain some desired properties including non-degeneracy.
    Then we reduce the problem to computing certain optimal integral solutions of a linear program $\LP(\beta)$ defined on the perturbed instance, where the LP
    aims at maximizing the $\beta$-weighted social welfare, parameterized by a coefficient vector $\beta \in \Delta_{n-1}$ in the $(n-1)$-dimensional simplex.
    \item[(2)] We introduce a key concept called ``$i$-biased'' (analogous to \emph{price envy-freeness} in previous works) for vectors in $\Delta_{n-1}$, and define the closed subsets $\{C_1,C_2,\ldots,C_n\}$ where $C_i$ contains all $i$-biased vectors.
    By showing the covering condition of the subsets, we use the KKM theorem to show the existence of a vector $\beta\in\Delta_{n-1}$ that is $i$-biased for all agents $i\in N$.
    \item[(3)] For this particular vector $\beta$, we show that the linear program $\LP(\beta)$ admits an optimal integral solution requiring total subsidy at most $\tau(n)$ to achieve weighted proportionality.
\end{itemize}

The first step (reduction to non-degenerate instances) is carried out through a standard random perturbation procedure on the valuation functions, as in previous KKM-based works. However, several subtle yet crucial details arise in designing the perturbation. As noted by~\citet{corr/abs-2509-18673}, a major difficulty in applying the KKM theorem to mixed manna is handling items with zero values. 
In their algorithm, items with zero values are not perturbed.
To satisfy the covering condition of the KKM theorem, one needs to ensure that for every $\beta\in\Delta_{n-1}$, there exists an agent $i$ with $\beta_i>0$ who is price envy-free under some optimal solution of $\LP(\beta)$.
However, there are simple instances for which some $\beta$ is not covered. To circumvent this issue, \citet{corr/abs-2509-18673} introduce an auxiliary item that has an identical small positive value for all agents.
However, this approach does not work in our setting: for weighted proportionality, we need to measure the additive distances between an agent’s price and her proportional price, and this measure is insensitive to perturbation by a single item.

Specifically, we define the notion of an ``$i$-biased'' vector $\beta$ based on
\begin{equation*}
    d_i = w_i \cdot v_i(M) - v_i(X_i),
\end{equation*}
which is the \emph{distance} of agent $i$ towards achieving her weighted proportional share under allocation $\bX$.
If $d_i\le 0$, the allocation is WPROP to agent $i$; otherwise, agent $i$ must be subsidized by $d_i$ to be WPROP.
Therefore, the overall goal is to minimize the sum of $\max\{d_i,0\}$.
For every agent $i$, the smaller $d_i$ is, the higher utility agent~$i$ receives.
Analogous to the price envy-freeness used in previous works~\cite{corr/abs-2507-03946,corr/abs-2507-09544,corr/abs-2509-18673}, we (inaccurately speaking) call a vector $\beta$ ``$i$-biased'' if there exists an optimal allocation to $\LP(\beta)$ such that the \emph{price distance} $\beta_i\cdot d_i$ is the smallest among all agents.
To apply the KKM theorem, we must show that for every $\beta\in\Delta_{n-1}$, there exists an agent $i$ with $\beta_i>0$ such that $\beta$ is $i$-biased. Unfortunately, zero-valued items break this condition, and the auxiliary-item trick of~\citet{corr/abs-2509-18673} does not help as its perturbation to $d_i$ is negligible.
Therefore, rather than perturbing only non-zero values (as in~\citet{corr/abs-2509-18673}), we perturb \emph{all} values, including zeros.\footnote{Note that \citet{corr/abs-2507-03946} also perturbed zeros, but they set the perturbation sufficiently small so that if an agent receives an item, the original value the agent has on the item is non-zero.
Our reduction permits such allocations and preserves the subsidy and WPROP1 guarantees; it does not establish PO for the original instance.}
Furthermore, we modify the perturbation process to ensure an additional \emph{consistency} property (see Section~\ref{sec:redn-canon}), which is essential for establishing the KKM covering condition in our second step.

Using the KKM theorem, we show the existence of a vector $\beta$ that is $i$-biased for all agents, which defines the collection of feasible allocations. Utilizing the fact that $\beta$ is $i$-biased for all agents, we apply the augmenting tree technique~\cite{corr/abs-2507-09544,corr/abs-2509-18673} to show that there exists an allocation under which the differences in price distances ($\max_{i,j}|\beta_i\cdot d_i - \beta_j\cdot d_j|$) is upper bounded by the absolute price of some item, and the total price distances is non-positive.

A further difficulty arises when translating bounds on price differences to bounds on subsidy. To establish a subsidy bound, we must upper bound $\sum_{i\in N} \max\{d_i,0\}$, which is the sum of positive distances.
Existing works~\cite{wine/WuZZ23,ijcai/WuXZ25} obtain the optimal bound $\tau(n)$ in the unweighted case by showing that (1) the sum of distances $\sum_{i\in N} d_i$ is non-positive, and (2) the pairwise difference satisfies $|d_i-d_j|\le 1$ for all $i,j\in N$.
However, we can establish similar properties only for price distances $\beta_i\cdot d_i$, which do not immediately imply properties for the value distances $d_i$ (see Section~\ref{ssec:total_subsidy} for a detailed discussion). Instead, we express the subsidy bound as a concave function of the distances and apply Jensen’s inequality to obtain the required upper bound.

\smallskip

In addition to the existential result, we also establish some computational results.

\begin{tcolorbox}[colback=textColor!7!bgColor, frame empty, colupper=textColor]
{\bf Result 2.~}
For mixed manna with $n$ agents and additive valuations, when $n$ is a fixed constant, we can compute in polynomial time a WPROP allocation with total subsidy at most $\tau(n)$.
\end{tcolorbox}

A key property enabling computation is that the allocation identified in Step~(3) is fractionally Pareto optimal (fPO), which is implied by its optimality for $\LP(\beta)$. Thus, for perturbed instances, there always exists a WPROP and fPO allocation with total subsidy at most $\tau(n)$.
As in previous works, we show that the number of fPO allocations is at most $(m+1)^{\binom{n}{2}}$, and so, when the number of agents is a constant, we can search through all fPO allocations in polynomial time to find an allocation with subsidy at most $\tau(n)$.
Note that although compatibility between WPROP-with-low-subsidy and PO is trivial
(since any Pareto-improvement preserves or decreases the subsidy required),
demonstrating compatibility between WPROP-with-low-subsidy and fPO is not easy,
and this compatibility is crucial for our algorithm.

We can extend this idea to the setting of constantly many \emph{types} of agents. Specifically, agents belong to $k$ groups, where $k$ is a constant, and agents in the same group have the same valuation function. For every $\varepsilon>0$, we obtain subsidy at most $\tau(n) + \tau(k)+\varepsilon$ using a two-stage allocation algorithm, with running time polynomial in the input length and $\log\left(1+1/\varepsilon\right)$ for fixed $k$. In the first stage, we allocate the items among the $k$ groups using the fPO-enumeration-based algorithm. Then we further divide the items among agents within groups using a greedy algorithm. When $n$ is large, we have $\tau(n) \gg \tau(k)$, and so, we get a near-optimal bound on subsidy in polynomial time.

\subsection{Additional Related Work}
Due to the vast literature on the fair allocation of indivisible items, in the following, we only review the most relevant works.
For a more comprehensive review, please refer to the recent surveys~\cite{ai/AmanatidisABFLMVW23,sigecom/AzizLMW22,jair/LiuLSW24,ipl/Suksompong25}.
\medskip

\noindent\textbf{EF1 and PO Allocation.}
The envy-freeness up to one item (EF1) is a popular relaxation of envy-freeness.
EF1 and PO allocations for goods are known to exist~\cite{teco/CaragiannisKMPS19} and can be computed in pseudo-polynomial time~\cite{Barman18FFEA}.
For the allocation of chores, EF1 and PO allocations are shown to exist in a recent breakthrough result by~\citet{corr/abs-2507-09544} using the KKM fixed point theorem, who also presented polynomial-time algorithms for computing such allocations for a fixed number of agents, using enumeration-based techniques.
The existence of near envy-free and PO allocations has also been studied for the mixed manna~\cite{corr/abs-2509-18673,corr/abs-2507-03946} recently.
However, whether EF1 and PO allocations always exist for mixed manna remains open.
\medskip

\noindent\textbf{Weighted EF1 Allocations.}
Allocations that are WEF1 and PO are known to exist when all items are either goods~\cite{Barman18FFEA,teco/ChakrabortyISZ21} or chores~\cite{corr/abs-2507-09544}.
Without the PO requirement, WEF1 allocations can be computed efficiently for goods~\cite{teco/ChakrabortyISZ21} and chores~\cite{ai/WuZZ25,aaai/SpringerHY24}.
While pseudopolynomial-time algorithms exist for computing WEF1 and PO allocations for goods, this computation problem remains open for chores, except for a fixed number of agents, using enumeration-based techniques.
For mixed manna, the existence of unweighted EF1 allocations is known~\cite{aziz2021fair,bhaskar2021approximate}. 
Very recently, two independent works~\cite{teh2026weightedfairdivisionindivisible,lin2026envyfreenessadditivemixedmanna} establish polynomial-time computation of WEF1 allocations for arbitrary positive entitlements.
\medskip

\noindent\textbf{Proportionality and Its Relaxations.}
Weighted proportionality up to one item (WPROP1) is a well-studied relaxation of proportionality.
\citet{orl/AzizMS20} showed that WPROP1 and PO allocations always exist and can be computed efficiently for mixed manna. For the stronger notion of weighted proportionality up to any item (WPROPX), \citet{geb/AzizB22} showed that WPROPX allocations are not guaranteed to exist for goods, even when all agents have equal weights. In contrast, WPROPX allocations for chores can be computed efficiently~\cite{ai/AzizLMWZ24}.
\medskip

\noindent\textbf{Allocations with Subsidy.}
Besides proportionality, allocations with subsidy have also been studied under other fairness notions, including weighted envy-freeness~\cite{corr/DaiCWXZ24,ifaamas/0001HKS0SY25,ifaamas/ElmalemGS25,corr/abs-2502-09006}, maximin share fairness~\cite{ijcai/WuX025}, and equitability~\cite{aaai/000121,corr/abs-2505-23251}.
Envy-free allocations with subsidy have also been studied for graph allocations~\cite{corr/LiSSX25}, online settings~\cite{corr/abs-2510-13633} and beyond additive functions~\cite{ijcai/BarmanKNS22,ai/KawaseMSTY25}.
There are also works that focus on the computational complexity for allocations with minimum subsidy, e.g., see~\cite{wine/CaragiannisI21,orl/ChooLSTZ24}.


\section{Preliminaries}
\label{sec:prelims}

For any natural number $t \in \N$, define $[t] \defeq \{1, 2, \ldots, t\}$
and $\Delta_{t-1} \defeq \{\bx \in \mathbb{R}_{\ge 0}^t: \sum_{j=1}^t x_j = 1\}$.

A fair division instance $\Ical$ is given by a tuple $(N, M, v, w)$.
Here $M = \{e_1,\ldots,e_m\}$ denotes the set of items and $N = \{ 1,2,\ldots,n \}$ denotes the set of agents.
Each item $e$ has value $v_i(e)$ to agent $i$, which may be positive, negative, or zero.
Agents' valuation functions are additive, i.e., for any set of items $S \subseteq M$, we have
$v_i(S) \defeq \sum_{e \in S} v_i(e)$.
We say that an item $e \in M$ is a \emph{good} for agent $i$ if $v_i(e) \geq 0$, and a \emph{chore} otherwise.
We consider the weighted setting, where each agent $i$ has a weight $w_i > 0$ and $\sum_{i\in N} w_i = 1$.
Our goal is to find an allocation $\bX = (X_1,\ldots,X_n)$ that achieves weighted proportionality with a small total subsidy. For canonical instances, we additionally guarantee fractional Pareto optimality. For convenience, given a set of items $S\subseteq M$ and an item $e\in M$, we write $S+e$ and $S-e$ to denote $S\cup \{e\}$ and $S\setminus \{e\}$, respectively.
For each $i\in N$, we call $w_i\cdot v_i(M)$ the \emph{weighted proportional share} of agent $i$.

\begin{definition}[WPROP1]
    An allocation $\bX=(X_1,X_2,\ldots,X_n)$ is weighted proportional up to one item (WPROP1), if for every agent $i\in N$,
    either $v_i(X_i)\geq w_i\cdot v_i(M)$, or
    there exists an item $e\notin X_i$ such that $v_i(X_i + e) \geq w_i\cdot v_i(M)$,
    or there exists an item $e\in X_i$ such that $v_i(X_i - e) \geq w_i\cdot v_i(M)$.
\end{definition}

\begin{definition}[WPROP with Subsidy]
    An allocation $\bX=(X_1,X_2,\ldots,X_n)$ with non-negative subsidies $\bs=(s_1,s_2,\ldots,s_n)$ is weighted proportional (WPROP), if for every agent $i\in N$, we have
    \footnote{Another natural definition includes the total subsidy in the proportional benchmark, requiring $v_i\left(X_i\right)+s_i\geq w_i\cdot\left(v_i\left(M\right)+\sum_j s_j\right)$. Weighted envy-freeness with subsidy implies these inequalities: multiply each normalized envy inequality by $w_j$ and sum over $j$. With identical valuations and equal weights, the two notions coincide. In particular, for the unweighted goods-only and chores-only settings, the known envy-free subsidy upper bounds and the lower-bound examples in~\cite{wine/WuZZ23} give the tight bound $n-1$ under this alternative definition, rather than $\tau\left(n\right)$.}
    \begin{equation*}
        v_i(X_i) + s_i \geq w_i\cdot v_i(M).
    \end{equation*}
    We call $(\bX,\bs)$ a WPROP outcome, where $\sum_{i\in N} s_i$ is the total subsidy.
\end{definition}

Note that every allocation $\bX$ can be turned into a WPROP outcome by giving to each agent $i$ a subsidy of
$s_i = (w_i\cdot v_i(M) - v_i(X_i))^+$, where $(x)^+ = \max\{x,0\}$.
Therefore, our goal is to compute an allocation $\bX$ for which the total subsidy
\[ \subsidy(\bX,\cI) \defeq \sum_{i\in N} (w_i\cdot v_i(M) - v_i(X_i))^+ \]
is small.
For ease of notation, we introduce a function $\tau(n)$ defined as:
\[
\tau(n) =
\begin{cases}
\frac{n}{4} & \text{if } n \text{ is even}, \\
\frac{n^2 - 1}{4n} & \text{if } n \text{ is odd}.
\end{cases}
\]

We define the \emph{range} of a fair division instance to be the maximum absolute value any agent can have for any item.
Formally, for nonempty $M$, we have $\range(\Ical) \defeq \max_{i \in N, e \in M} \{|v_i(e)|\}$; for $M=\varnothing$, set $\range(\Ical)=0$.
If $n=1$, assigning all items to the single agent requires no subsidy. If $\range(\Ical)=0$, any complete allocation is WPROP and WPROP1 without subsidy. We handle these cases directly before normalizing.
We often assume \wLoG{} that $\range(\Ical) = 1$, since we can scale all valuations by the same factor.
It was shown by~\citet{wine/WuZZ23} that it requires a total subsidy of at least $\tau(n)$ to guarantee WPROP outcomes, even when all $n$ agents have identical valuation functions and equal weights.

Next, we introduce the efficiency measurements.

\begin{definition}[PO]
    An allocation $\bX=(X_1,X_2,\ldots,X_n)$ is Pareto dominated by another allocation $\mathbf{Y}=(Y_1,Y_2,\ldots,Y_n)$ if $v_i(Y_i) \geq v_i(X_i)$ for all $i\in N$ and the inequality is strict for some $i$.
    We call an allocation Pareto optimal (PO) if it is not Pareto dominated by any allocation.
\end{definition}

We also define the stronger efficiency notion of fPO.
We use $\mathbf{x} = (x_1, x_2,\ldots, x_n)$ to denote a fractional allocation, where $x_i(e)$ is the fraction of item $e$ allocated to agent $i$. The allocation satisfies $x_i(e) \geq 0$ for all $i\in N$ and $e\in M$, and $\sum_{i\in N} x_i(e) = 1$ for all $e\in M$.

\begin{definition}[fPO]
    We call an allocation $\bX=(X_1,X_2,\ldots,X_n)$ fractionally Pareto optimal (fPO) if it is not Pareto dominated by any fractional allocation.
\end{definition}

Clearly, every fPO allocation is PO, but the converse is not true.


\section{Reduction to Canonical Instances}
\label{sec:redn-canon}

In this section, we introduce some properties of the instances that are desirable for the analysis later. We show that, for the purpose of computing allocations with subsidy, it is without loss of generality (w.l.o.g.) to assume that all these properties hold.

\begin{definition}[Non-degeneracy]
\label{defn:degen}
    We call an instance $(N,M,v,w)$ \emph{non-degenerate} if for every simple alternating cycle $\left(i_0,e_1,i_1,e_2,\ldots,e_k,i_k\right)$, where $k\geq2$, $i_k=i_0$, the agents $i_0,\ldots,i_{k-1}$ are distinct, and the items $e_1,\ldots,e_k$ are distinct, we have
    \begin{equation*}
        \prod_{t=1}^k v_{i_{t-1}}(e_t) \neq \prod_{t=1}^k v_{i_t}(e_t).
    \end{equation*}
\end{definition}

\begin{definition}[Objectiveness and Strict Objectiveness]
An instance of mixed manna is \emph{objective} if every item is either a good for all agents or a chore for all agents, i.e.,
\[
M=M^+\cup M^-, \text{ where  }
M^+=\{e:\forall i,\; v_i(e)\ge0\}\  \text{ and } \ 
M^-=\{e:\forall i,\; v_i(e)<0\}.
\]
It is \emph{strictly objective} if 
$v_i(e)>0$ for every agent $i$ and every good $e\in M^+$.
\end{definition}

\begin{definition}[Consistency]
    We call an instance of mixed manna \emph{consistent}, if we have either $v_i(M) > 0$ for all $i\in N$, or  $v_i(M) < 0$ for all $i\in N$.
\end{definition}

We call an instance \emph{canonical} if it is non-degenerate, strictly objective, and consistent.
We show that for the purpose of computing a WPROP outcome with a small total subsidy, it is w.l.o.g. to assume that the instance is canonical.
Intuitively, if an item $e$ satisfies $v_i(e) > 0$ but $v_j(e) < 0$ for two agents $i,j\in N$, then we set $v_j(e) = 0$ in the modified instance, since an allocation assigning $e$ to $j$ can be improved by transferring it to $i$.
After removing redundant items and obtaining objectiveness, we can remove agents whose total value is negative whenever some agent has a nonnegative total value. The removed agents are proportional with empty bundles, and rescaling the remaining weights does not decrease their nonnegative proportional benchmarks.
We can also assume non-degeneracy and strict objectiveness by adding infinitesimal positive random noises to the value of every agent on every item.
Similar observations and reduction techniques have also been used
in existing works, e.g., see
\citet[Section 2]{corr/abs-2507-09544},
\citet[Lemma 3.4]{corr/abs-2507-03946}, \citet[Section 4]{corr/abs-2509-18673}, and \citet[Section 4.2]{ijcai/WuXZ25}.
In the following, we show the details of the reduction for completeness. 


Let $\cI = (N,M,v,w)$ be an arbitrary instance of mixed manna.
After handling the one-agent and zero-range cases, normalize $\range\left(\cI\right)=1$.
We construct a new instance $\cI' = (N', M', v', w')$, and show that it is canonical.
Our construction has four steps.
\medskip

\noindent\textbf{Step 1 (Remove Redundant Items).}
First, if there exists an item $e$ such that $v_i(e) = 0$ for some agent $i\in N$ and $v_j(e)\leq 0$ for all other agents $j\in N$, then we set this item aside.
Intuitively speaking, such items should only be allocated to agent $i$ if one tries to compute PO allocations, or minimize subsidy.
Therefore, we can focus on the allocation of other items, and then add these items back (e.g., by allocating the item $e$ to agent $i$) without breaking the Pareto optimality or subsidy bound.
Let $M'\subseteq M$ be the remaining items after this step.
If $M'=\varnothing$, allocate each removed item to an agent valuing it at zero and stop; the resulting allocation is WPROP without subsidy.
\medskip

\noindent\textbf{Step 2 (Obtain Objectiveness).}
Next, for every item $e\in M'$, if there exists an agent $i$ with $v_i(e) > 0$ (i.e., agent $i$ considers item $e$ as a good), then for all $j\in N$ such that $v_j(e) < 0$, we artificially set $v_j(e) = 0$.
Let $\hat{v}$ be the resulting valuation functions.
After this modification, for all $e\in M'$, we have either $\hat{v}_i(e)\geq 0$ for all $i\in N$ (i.e., item $e$ is a good to all agents), or $\hat{v}_i(e) < 0$ for all $i\in N$ (i.e., item $e$ is a chore to all agents).
Note that the instance is objective but not strictly objective, as some items could have a value of $0$ to some agents.
\medskip

\noindent\textbf{Step 3 (Remove Inconsistent Agents).}
If there exists an agent $i\in N$ with $\hat{v}_i(M') \geq 0$, then we remove all agents $j$ with $\hat{v}_j(M') < 0$, and rescale the weights of the remaining agents by a common factor so that they sum to $1$.
Let $N'\subseteq N$ be the set of remaining agents.
After this modification, we have either $\hat{v}_i(M') \geq 0$ for all $i\in N'$, or $\hat{v}_i(M') < 0$ for all $i\in N'$.
Note that the instance is not consistent yet, as it is possible that $\hat{v}_i(M') = 0$ holds for some agent $i$.
\medskip

\noindent\textbf{Step 4 (Random Perturbation).}
Finally, we add infinitesimal positive random noises to the valuation functions to achieve non-degeneracy and eliminate zero values.
Let
\begin{equation*}
    \mathcal{P} = \left\{ (P_i)_{i \in N'} : \forall i\neq j,
    P_i\cap P_j=\varnothing, \bigcup_{i\in N'} P_i \subseteq M' \right\}
\end{equation*}
be the set of all $|N'|$-tuples of
disjoint bundles (which are not necessarily a complete partition of $M'$).
We define two (positive and small) parameters $\lambda$ and $\eta$ satisfying the following:
\begin{align*}
    \lambda & < \min\left\{ \left|v_i(S) - w_i\cdot v_i(M)\right|:
        i\in N, \; S\subseteq M,\; v_i(S) \neq w_i\cdot v_i(M) \right\},
    \\ \lambda & < \min\left\{ \sum_{i\in N'} \hat{v}_i(A_i) - \sum_{i\in N'} \hat{v}_i(B_i):
        A, B \in \mathcal{P}, \sum_{i\in N'} \hat{v}_i(A_i) > \sum_{i\in N'} \hat{v}_i(B_i)  \right\},
    \\ \eta & = \frac{\lambda}{m}.
\end{align*}
Here a minimum over an empty set is interpreted as $+\infty$. The first bound preserves both positive and negative proportionality gaps; the second preserves nonzero item values and total values.
Then for all $i\in N'$ and $e\in M'$, we let $v'_i(e) = \hat{v}_i(e) + \eta_{ie}$, where $\eta_{ie}\in (0,\eta)$ is an independent random variable drawn uniformly from $(0, \eta)$.
Note that after the perturbations, we might have $v'_i(e) > 1$.
However, we still have $v'_i(e) \in [-1, 1+\eta)$, and this difference in value domain is negligible (in terms of subsidy bound) when $\eta$ is sufficiently small.

\medskip

Next, we show that the constructed instance is canonical, and fair allocations for the canonical instance can be transformed into fair allocations for the original instances.
The proofs of Lemmas~\ref{lemma:constructed_instance_is_canonical} and~\ref{lemma:reduction_to_canonical_instances} are deferred to Appendix~\ref{sec:missing_proofs}.

\begin{lemma} \label{lemma:constructed_instance_is_canonical}
    Given any instance $\cI$ of mixed manna, the instance $\cI'$ constructed above is canonical with probability one, provided that $M'\neq\varnothing$.
\end{lemma}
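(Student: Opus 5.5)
We have to check three properties of $\cI'$: strict objectiveness, consistency, and non-degeneracy (the last holding almost surely). We take them in that order, using only the construction in Steps 1--4 and the two bounds defining $\lambda$ and $\eta$.

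\emph{Strict objectiveness.} After Step 2, each $e\in M'$ has either $\hat v_i(e)\ge 0$ for all $i\in N'$ or $\hat v_i(e)<0$ for all $i\in N'$. In the first case, $v'_i(e)=\hat v_i(e)+\eta_{ie}>0$ for every $i$.

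In the second case we show $v'_i(e)<0$. Compare the tuples $A$ with $A_i=\varnothing$ for all $i$, and $B$ with $B_i=\{e\}$ and all other bundles empty. Both lie in $\mathcal P$, and $\sum_j\hat v_j(A_j)=0>\hat v_i(e)=\sum_j\hat v_j(B_j)$. The second bound on $\lambda$ then gives $|\hat v_i(e)|>\lambda\ge\eta$, so $v'_i(e)<0$.

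We also need that an item negative for all agents after Step 2 really is a chore for everyone. Step 2 never turns a value negative, so such an item had $v_j(e)<0$ for all $j$ originally. Hence $M'$ partitions into a good set $M^+$ and a chore set $M^-$ as required.

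\emph{Consistency.} By Step 3, either $\hat v_i(M')<0$ for all $i\in N'$, or $\hat v_i(M')\ge 0$ for all $i\in N'$.

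In the negative case, use the same comparison with $A=\varnothing$ and $B_i=M'$. It gives $|\hat v_i(M')|>\lambda$. The total perturbation satisfies $\sum_e\eta_{ie}<m\eta=\lambda$, so $v'_i(M')<0$.

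In the nonnegative case, $v'_i(M')=\hat v_i(M')+\sum_e\eta_{ie}>0$, because $M'\neq\varnothing$ and every $\eta_{ie}>0$.

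\emph{Non-degeneracy.} This is the step needing care. Fix one simple alternating cycle $(i_0,e_1,i_1,\dots,e_k,i_k=i_0)$ with $k\ge 2$. Consider
\[
f(\eta)=\prod_{t=1}^k\bigl(\hat v_{i_{t-1}}(e_t)+\eta_{i_{t-1}e_t}\bigr)-\prod_{t=1}^k\bigl(\hat v_{i_t}(e_t)+\eta_{i_te_t}\bigr),
\]
a polynomial in the independent variables $\eta_{ie}$. We show $f$ is not identically zero. The first product contains the monomial $\prod_t\eta_{i_{t-1}e_t}$ with coefficient $1$. In the second product, every monomial uses only variables $\eta_{i_te_t}$. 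Since the cycle is simple and $k\ge 2$, the pairs $(i_{t-1},e_t)$ and $(i_t,e_t)$ are distinct, so the variable $\eta_{i_0e_1}$ does not occur in the second product. Therefore the monomial $\prod_t\eta_{i_{t-1}e_t}$ cannot cancel.

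A nonzero polynomial vanishes on a set of Lebesgue measure zero, so $f\ne 0$ with probability one. There are finitely many cycles, so a union bound shows non-degeneracy holds with probability one. Combining the three parts, $\cI'$ is canonical almost surely.
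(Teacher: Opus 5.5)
Your proof is correct and follows the paper's argument. Strict objectiveness and consistency both come from the partial-welfare separation bound on $\lambda$; you helpfully make explicit the comparison of the empty tuple with $B_i=\{e\}$ or $B_i=M'$. For non-degeneracy, the paper conditions on all other noises and solves for $\eta_{i_0e_1}$ (dividing by the nonzero values $v'_{i_{t-1}}(e_t)$, $t\ge 2$), whereas you show the difference of products is a nonzero polynomial and hence vanishes on a null set — both rest on the same fact that $\eta_{i_0e_1}$ appears only in the first product.
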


More importantly, we show that the construction is subsidy-preserving in the sense that any allocation for the perturbed instance $\cI'$ can be transformed into an allocation for the original instance $\cI$ with (almost) the same subsidy requirement.

\begin{lemma} \label{lemma:reduction_to_canonical_instances}
    Given any allocation $\bX'$ for instance $\cI'$, we can construct in polynomial time an allocation $\bX$ for instance $\cI$, such that
    \begin{equation*}
        \subsidy(\bX,\cI) \leq \subsidy(\bX',\cI') + \lambda.
    \end{equation*}
    Moreover, if \,$\bX'$ is WPROP1 for instance $\cI'$, then $\bX$ is also WPROP1 for instance $\cI$.
\end{lemma}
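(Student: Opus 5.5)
The construction undoes the four steps in reverse. We then compare each agent's deficit in $\cI$ with its deficit in $\cI'$, and use the two separation properties of $\lambda$ to absorb the perturbation error.

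\textbf{Construction.} Start from $\bX'$, a partition of $M'$ among $N'$. Give every agent of $N\setminus N'$ the empty bundle. Next, for every item $e\in M'$ held by some $j\in N'$ with $v_j(e)<0$ but $\hat v_j(e)=0$, move $e$ to an arbitrary agent $k\in N$ with $v_k(e)>0$; such an agent exists by the definition of Step~2. Finally, give each item set aside in Step~1 to an agent $i$ with $v_i(e)=0$. All of this is clearly polynomial.

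\textbf{Value bounds.} For $j\in N'$, the items $j$ keeps satisfy $v_j(e)=\hat v_j(e)$, and the items $j$ receives have $v_j(e)\ge 0$. Hence
\[
v_j(X_j)\;\ge\;\hat v_j(X'_j)\;=\;v'_j(X'_j)-\sum_{e\in X'_j}\eta_{je}.
\]

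\textbf{Share bounds.} Step~1 items have $v_j(e)\le 0$, and Step~2 only raises values, so $v_j(M)\le \hat v_j(M')\le v'_j(M')$. For the weights there are two cases:
\begin{itemize}
\item If agents were removed in Step~3, then every $j\in N'$ has $\hat v_j(M')\ge 0$ and $w'_j\ge w_j$. So $w_j v_j(M)\le w'_j v'_j(M')$, whether $v_j(M)$ is positive or not.
\item Otherwise $w'=w$, and the same inequality is immediate.
\end{itemize}
Removed agents $k$ have $v_k(M)\le\hat v_k(M')<0$, so their share is negative. The empty bundle, plus any goods they receive, gives them nonnegative value, so they need no subsidy. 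Agents of $N\setminus N'$ never hold Step~1 items' negative value issue either, since those items go to a zero-valuer.

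\textbf{Subsidy bound.} For each $j\in N'$ write $\epsilon_j=\sum_{e\in X'_j}\eta_{je}$. Combining the two bounds,
\[
d_j\defeq w_jv_j(M)-v_j(X_j)\;\le\; d'_j+\epsilon_j .
\]
Summing positive parts gives $\sum_j (d_j)^+\le\sum_j (d'_j)^+ +\sum_j\epsilon_j$. Since the bundles $X'_j$ are disjoint and each $\eta_{je}<\eta$, we get $\sum_j\epsilon_j<m\eta=\lambda$, which yields $\subsidy(\bX,\cI)\le\subsidy(\bX',\cI')+\lambda$.

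The first defining property of $\lambda$ sharpens this further. Each $d_j$ has the form $w_jv_j(M)-v_j(S)$ for a set $S\subseteq M$, so $d_j\le 0$ or $d_j>\lambda$. Therefore $d'_j\le 0$ forces $d_j<\lambda$, hence $d_j\le 0$. In other words, agents that needed no subsidy in $\cI'$ still need none in $\cI$.

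\textbf{WPROP1.} Suppose $j\in N'$ satisfies WPROP1 in $\cI'$ via removing $e\in X'_j$ or adding $e\notin X'_j$. I would apply the same comparison to the set $S=X_j\mp e$, a subset of $M$:
\[
w_jv_j(M)-v_j(S)\;<\;w'_jv'_j(M')-v'_j(X'_j\mp e)+\lambda\;\le\;\lambda .
\]
The gap property then forces $v_j(S)\ge w_jv_j(M)$. Removed agents satisfy WPROP outright.

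\textbf{Main obstacle.} The delicate point is the WPROP1 step when the witnessing item $e$ is one of the items moved away from $j$. In the removal case, $X_j$ already omits $e$, so $X_j$ itself certifies WPROP, since dropping $e$ only removed a negative item. In the addition case, I would argue with $\hat v$ and fall back on an item of $M$ valued by $j$ at least its $\hat v$-value. This bookkeeping, together with the weight-rescaling sign cases above, is where I expect most of the care to go.
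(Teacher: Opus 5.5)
Your construction, the value bound $v_j(X_j)\ge\hat v_j(X'_j)$, the share bound $w_jv_j(M)\le w'_jv'_j(M')$, the subsidy estimate, and your use of the first gap property of $\lambda$ are the same as the paper's argument and are correct. That includes the observation that $d'_j\le 0$ forces $d_j\le 0$, and the removal case of WPROP1 is also correct.

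The gap is in the addition case. You assert $w_jv_j(M)-v_j(X_j+e)<w'_jv'_j(M')-v'_j(X'_j+e)+\lambda$ uniformly. This fails when the witness $e\notin X'_j$ is a good in $\cI'$ while $v_j(e)<0$, i.e., Step~2 raised $j$'s value of $e$ to $\hat v_j(e)=0$. Such an $e$ is never in $X_j$, since items are only moved to agents who value them positively. Your comparison then only gives $v_j(X_j+e)\ge\hat v_j(X'_j+e)+v_j(e)$, which loses $|v_j(e)|$, a quantity not controlled by $\lambda$. So nothing forces $v_j(X_j+e)\ge w_jv_j(M)$.

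Your ``main obstacle'' paragraph puts the difficulty in the wrong place. In the addition case the witness $e\notin X'_j$ cannot have been moved away from $j$. In the removal case, a moved-away witness already satisfies your inequality, because $\hat v_j(e)=0$ and $e\notin X_j$. The suggested fallback to another item valued by $j$ at least at its $\hat v$-value does not help, since no such alternative witness need exist.

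The missing observation is that in this subcase the witness contributes only noise in $\cI'$: $\hat v_j(e)=0$, so $v'_j(e)=\eta_{je}<\eta$. Hence
\[
v_j(X_j)\;\ge\;\hat v_j(X'_j)\;=\;\hat v_j(X'_j+e)\;>\;v'_j(X'_j+e)-\lambda\;\ge\;w_jv_j(M)-\lambda ,
\]
where the strict inequality uses $|X'_j+e|\le m$. The gap property then gives $v_j(X_j)\ge w_jv_j(M)$, so $j$ is WPROP outright. You must take $S=X_j$ rather than $X_j+e$; this is exactly the paper's subcase ``$e\notin X_i$, $v_i(e)\le 0$''.

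The remaining addition subcases go through with your comparison once you account for $e$ itself. If $e\in X_j$, then $e$ was moved to $j$, so $v_j(e)=\hat v_j(e)>0$ and $X_j+e=X_j$. If $e\notin X_j$ and $v_j(e)\ge 0$, then $v_j(e)=\hat v_j(e)$. With the case split above added, your proof is complete.
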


Note that $\lambda$ can be chosen to be arbitrarily small,
and so, the construction can increase the subsidy by only a negligible amount (see Section~\ref{ssec:total_subsidy} for the details).

It is easy to see that the first three steps of constructing $\Ical'$ can be performed in polynomial time.
The fourth step, where we randomly perturb values, can also be performed in polynomial time when
the number of agents is a constant, using a trick from Section 4 of \citet{corr/abs-2507-03946}.
The key idea is that instead of picking $\eta_{ie}$ independently and uniformly randomly from $(0, \eta)$,
we choose each $\eta_{ie}$ from the grid
\begin{equation*}
    \left\{\frac{\eta\cdot t}{2\cdot\left(m\cdot n\right)^n+2}:t\in\left\{1,\ldots,2\cdot\left(m\cdot n\right)^n+1\right\}\right\}\subset\left(0,\eta\right).
\end{equation*}
There are at most $2\cdot\left(m\cdot n\right)^n$ simple-cycle equations in \cref{defn:degen}. Assign the perturbations one at a time. When assigning the last variable of a cycle, its equation is linear in that variable with a nonzero coefficient, since all other perturbed values are nonzero. Hence each completed cycle forbids at most one grid value, and a permitted value always remains. This yields a deterministic construction in polynomial time for fixed $n$, provided that $\lambda$ has polynomial encoding length; suitable rational choices are given in \cref{sec:fpo-enum}.
Given the above reduction, it suffices to show that for every canonical instance, there exists an allocation that requires a small amount of subsidy.
In fact, we show (in \cref{thm:canonical-existence}) the existence of such allocations that are also fPO%
\footnote{This is important as we can enumerate all fPO allocations for a constant number of agents (see Section~\ref{sec:few-types}).}%
, via a parameterized optimization problem. This fPO guarantee is for the canonical instance; the reduction does not establish PO or fPO for the original instance.
In the following, unless otherwise specified, we assume that the instance is canonical.

\section{Allocations with Tight Subsidy Guarantee}
\label{sec:tight-subsidy}

In this section, we show the existence of WPROP and fPO allocations with a small total subsidy for the canonical instances via the KKM fixed point theorem.

\subsection{The Parameterized Allocation Program}

Consider the following LP for a weighted allocation problem (parameterized by coefficient vector\footnote{In previous works, the coefficient vector was referred to as the weight vector and denoted by $w$. To distinguish this vector from the weights of agents we study in this paper, we use $\beta$ to denote this vector, and call it a coefficient vector.} $\beta = (\beta_1,\beta_2,\ldots,\beta_n)$, on instance $\cI$) and its dual, where $\mathbf{x}$ is a fractional allocation where $x_{i}(e)$ denotes the fraction of item $e$ allocated to agent $i$.
Note that the LPs are independent of the weights of agents.
\begin{align*}
    \LP(\beta) &  && \qquad\qquad & \mathsf{Dual(\beta)} &  \\
    \max. \quad &  \sum_{i\in N} \sum_{e\in M} ~ \beta_i\cdot v_{i}(e)\cdot x_{i}(e) && \qquad\qquad & \min. \quad & \sum_{e \in M} p(e) \\
    \text{s.t.} \quad &  \sum_{i\in N} x_{i}(e) = 1 && \forall e\in M & \text{s.t.} \quad & p(e) \geq \beta_i\cdot v_{i}(e) && \forall i\in N,e\in M \\
    &  x_{i}(e) \geq 0 && \forall i\in N,e\in M & &  &&
\end{align*}

We say that vector $\beta > 0$ if $\beta_i > 0$ holds for all $i\in N$.

\begin{observation}\label{observation:unique_dual}
    The optimal dual solution is unique: for all $e\in M$, we have
    \begin{equation*}
        p(e) = \max_{i\in N} \left\{ \beta_i\cdot v_i(e) \right\},
    \end{equation*}
    which we call the \emph{price} of item $e$.
    Moreover, when $\beta > 0$, the prices of all goods are positive; the prices of all chores are negative.
\end{observation}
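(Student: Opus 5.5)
The dual $\mathsf{Dual}(\beta)$ decouples across items. Its objective is $\sum_{e\in M} p(e)$, and each constraint $p(e)\ge \beta_i\cdot v_i(e)$ involves only the single variable $p(e)$. So I would minimize each $p(e)$ separately, subject only to $p(e)\ge \beta_i\cdot v_i(e)$ for all $i\in N$.

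For a fixed item $e$, the feasible values of $p(e)$ form exactly the ray $[\max_{i\in N}\beta_i v_i(e),\infty)$. Its minimum is attained uniquely at $p^*(e)=\max_{i\in N}\beta_i v_i(e)$, so this vector $p^*$ is feasible. To see that it is the unique optimum, take any other feasible $p$. Then $p(e)\ge p^*(e)$ for every $e$. If $p\neq p^*$, at least one of these inequalities is strict, so $\sum_e p(e)>\sum_e p^*(e)$ and $p$ is not optimal. The dual is feasible and bounded, which also agrees with strong duality, since $\LP(\beta)$ is feasible and bounded; we do not need this separately.

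For the sign claim, I would use that the instance is canonical, and in particular strictly objective. Let $e\in M^+$. Then $v_i(e)>0$ for all $i$. With $\beta>0$, every term $\beta_i v_i(e)$ is positive, so the maximum $p(e)$ is positive. Let $e\in M^-$. Then $v_i(e)<0$ for all $i$, so every term $\beta_i v_i(e)$ is negative. The maximum of finitely many negative numbers is negative, so $p(e)<0$.

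No step here is a real obstacle. The only point that needs care is that strict objectiveness, not mere objectiveness, is what gives strict positivity of the prices of goods. Without it, a good valued at zero by all agents would get price $0$.
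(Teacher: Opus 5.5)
Your proposal is correct and matches the paper's argument: the dual decouples by item, so the unique optimum is $p(e)=\max_{i\in N}\beta_i\cdot v_i(e)$, and the sign claims follow from $\beta>0$ together with the strict objectiveness of the canonical instance. The only difference is that you spell out the uniqueness step, which the paper simply asserts as clear.
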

\begin{proof}
    Clearly, setting $p(e) = \max_{i\in N}\{ \beta_i\cdot v_i(e) \}$ for every item $e\in M$ gives the unique optimal solution to the dual LP.
    Suppose $\beta > 0$.
    If item $e$ is a good, then we have $p(e) = \max_{i\in N}\{ \beta_i\cdot v_i(e) \} > 0$ because $\beta_i\cdot v_i(e)$ is positive for all $i\in N$.
    Similarly, if $e$ is a chore, then we have $p(e) = \max_{i\in N}\{ \beta_i\cdot v_i(e) \} < 0$ because all $\beta_i\cdot v_i(e)$ are negative.
\end{proof}

For all $S\subseteq M$, we use $p(S) = \sum_{e\in S} p(e)$ to denote the total price of items in $S$.
By dual feasibility, for every item $e\in M$ and agent $i\in N$, we have $p(e) \geq \beta_i\cdot v_i(e)$.
We define the items for which this inequality is tight as the MBB items of agent $i$.

\begin{definition}[MBB Items]
    We define the maximum-bang-per-buck (MBB) items of agent $i$ as
    \begin{equation*}
        M_i = \left\{ e\in M: p(e) = \beta_i\cdot v_i(e) \right\}.
    \end{equation*}
\end{definition}

Note that once we fix $\beta$, the prices of items are fixed, so are the MBB sets of agents.
We call a feasible solution $\bx$ to the primal LP \emph{MBB-feasible} if $x_{i}(e)>0$ holds only when $e\in M_i$.
That is, every agent only receives (a subset of) their MBB items.
Note that every feasible solution corresponds to a fractional allocation.
When the solution is integral, i.e., $x_{i}(e)\in \{0,1\}$ for all $i\in N$ and $e\in M$, we call it an allocation, and use $\bX = (X_1,\ldots,X_n)$ to denote it, where $X_i = \{ e\in M : x_{i}(e)=1 \}$.

\begin{observation} \label{observation:complementary_slackness}
    By complementary slackness, any feasible solution $\bx$ to $\LP(\beta)$ is optimal if and only if it is MBB-feasible.
\end{observation}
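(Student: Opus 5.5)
We prove both directions of \cref{observation:complementary_slackness} using weak duality and the explicit optimal dual solution from \cref{observation:unique_dual}, namely $p(e)=\max_{i\in N}\beta_i v_i(e)$. The key identity is the following. For any feasible primal solution $\bx$ and the price vector $p$, we have
\begin{equation*}
\sum_{e\in M} p(e) - \sum_{i\in N}\sum_{e\in M}\beta_i v_i(e)x_i(e)
= \sum_{e\in M}\sum_{i\in N} x_i(e)\bigl(p(e)-\beta_i v_i(e)\bigr).
\end{equation*}
This holds because $\sum_{i\in N} x_i(e)=1$ for every item $e$.

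Each summand on the right is nonnegative, since $x_i(e)\ge 0$ and $p(e)\ge \beta_i v_i(e)$ by dual feasibility. Hence the primal objective of any feasible $\bx$ is at most $\sum_{e} p(e)$, which is weak duality.

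For the ``if'' direction, assume $\bx$ is MBB-feasible. Then every term with $x_i(e)>0$ has $e\in M_i$, so $p(e)=\beta_i v_i(e)$ and that term vanishes. The right-hand side is therefore zero, so the primal value equals the dual value $\sum_e p(e)$. By weak duality, $\bx$ is optimal. In particular, an MBB-feasible solution always exists: give each item to an agent attaining the maximum in $p(e)$. Consequently, the LP optimum equals $\sum_e p(e)$. This also follows from strong duality, since the LP is feasible and bounded.

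For the ``only if'' direction, assume $\bx$ is optimal. Then its value equals the optimum $\sum_e p(e)$, so the right-hand side is zero. Since it is a sum of nonnegative terms, every term is zero. Thus, whenever $x_i(e)>0$, we must have $p(e)=\beta_i v_i(e)$, i.e.\ $e\in M_i$. So $\bx$ is MBB-feasible.

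There is no real obstacle in this argument. The only step needing care is establishing that the optimum equals $\sum_e p(e)$, and this is handled by the explicit MBB-feasible integral solution above.
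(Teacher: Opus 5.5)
Your proof is correct and follows the paper's approach. The paper simply invokes complementary slackness with the dual solution $p(e)=\max_{i\in N}\beta_i\cdot v_i(e)$ from Observation~\ref{observation:unique_dual}. Your gap identity $\sum_{e} p(e)-\sum_{i}\sum_{e}\beta_i v_i(e)x_i(e)=\sum_{e}\sum_{i}x_i(e)\,(p(e)-\beta_i v_i(e))$ is exactly that argument written out, and your observation that an argmax assignment attains $\sum_e p(e)$ correctly supplies the strong-duality step.
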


The above observation implies that the primal LP has at least one integral optimal solution: given an arbitrary feasible solution $\bx$ that is MBB-feasible, an arbitrary rounding that preserves LP-feasibility and MBB-feasibility gives an integral optimal solution (e.g., by allocating each item $e$ to the agent $i$ with maximum $x_i(e)$).
Furthermore, we show that when $\beta > 0$, every MBB-feasible allocation is fPO.

\begin{lemma} \label{lemma:MBB-feasible_implies_PO_when_beta>0}
    If the coefficient vector $\beta > 0$, then every MBB-feasible solution $\bx$ is fPO.
\end{lemma}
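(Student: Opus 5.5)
The plan is the standard weighted-welfare argument. Suppose, for contradiction, that an MBB-feasible solution $\bx$ (with $\beta>0$) is Pareto dominated by some fractional allocation $\by$. Then $v_i(\by_i) \ge v_i(\bx_i)$ for every agent $i\in N$, and the inequality is strict for at least one agent $i^*$. Here $v_i(\bx_i) \defeq \sum_{e\in M} v_i(e)\cdot x_i(e)$ denotes agent $i$'s utility in a fractional allocation.

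Multiplying each of these inequalities by $\beta_i>0$ and summing over $i\in N$ preserves the strict inequality, because $\beta_{i^*}>0$. This gives
\[
\sum_{i\in N}\sum_{e\in M}\beta_i\cdot v_i(e)\cdot y_i(e) > \sum_{i\in N}\sum_{e\in M}\beta_i\cdot v_i(e)\cdot x_i(e).
\]
In words, $\by$ attains a strictly larger objective value of $\LP(\beta)$ than $\bx$. This is the only place where positivity of every coordinate of $\beta$ is used. If some $\beta_i=0$, an improvement confined to agent $i$ would be invisible to the objective, and the argument would fail.

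On the other hand, $\bx$ is MBB-feasible, so by \cref{observation:complementary_slackness} it is an optimal solution of $\LP(\beta)$. To keep the argument self-contained, this can also be checked directly. For every $e\in M$ we have
\[
\sum_{i\in N}\beta_i\cdot v_i(e)\cdot x_i(e) = p(e)\sum_{i\in N}x_i(e) = p(e),
\]
since $x_i(e)>0$ only when $e\in M_i$, i.e., only when $\beta_i\cdot v_i(e)=p(e)$. For any feasible $\by$, dual feasibility $p(e)\ge \beta_i\cdot v_i(e)$ gives
\[
\sum_{i\in N}\beta_i\cdot v_i(e)\cdot y_i(e) \le p(e)\sum_{i\in N} y_i(e) = p(e).
\]
Summing over $e\in M$ shows that the objective value of $\by$ is at most $p(M)$, which equals the objective value of $\bx$. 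This contradicts the strict inequality above, so $\bx$ is fPO.

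I do not expect any real obstacle. The only point needing care is the strictness step, which relies on $\beta>0$ as noted.
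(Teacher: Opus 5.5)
Your proof is correct and follows the paper's argument. In both, $\bx$ is optimal for $\LP(\beta)$, and because $\beta>0$ a Pareto-dominating fractional allocation would strictly increase the $\beta$-weighted objective, a contradiction; your direct check of optimality via $p(e)\ge\beta_i\cdot v_i(e)$ just spells out Observation~\ref{observation:complementary_slackness}.
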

\begin{proof}
    By Observation~\ref{observation:complementary_slackness}, $\bx$ is optimal for $\LP(\beta)$.
    If $\bx$ is dominated by another fractional allocation $\mathbf{y}$ (which defines a feasible solution to $\LP(\beta)$), then since $\beta_i > 0$ for all $i\in N$, the objective of $\mathbf{y}$ is strictly larger than that of $\bx$, which contradicts $\bx$ being optimal.
\end{proof}

In summary, if we have $\beta > 0$ (which defines the non-zero prices of items), then we only need to focus on computing an integral allocation $\bX$ that is MBB-feasible (which implies fPO), and satisfies certain fairness guarantees.
Let $\LP^*(\beta)$ be the collection of optimal integral solutions to $\LP(\beta)$, which is also the collection of all MBB-feasible allocations.

Assuming that $\sum_{i\in N}\beta_i=1$ and $\beta\geq 0$, all possible coefficient vectors form the simplex $\Delta_{n-1}$.
However, as discussed above, to ensure fPO, we are only interested in $\beta > 0$.
Therefore, as in existing works~\cite{corr/abs-2509-18673,corr/abs-2507-09544,corr/abs-2507-03946}, we focus on a subset of $\Delta_{n-1}$ in which all vectors are positive.
We introduce another (positive and small) parameter
\begin{equation*}
    \epsilon < \frac{\min\left\{ \min_{i\in N, e\in M} \{ w_i\cdot |v_i(e)| \}, \min_{i\in N}\{ w_i\cdot |v_i(M)| \} \right\} }{nm\cdot \max_{i\in N, e\in M} \{ |v_i(e)| \}}.
\end{equation*}

Intuitively speaking, we need $\epsilon$ to be positive but also sufficiently small so that $\hat{\beta}$ (to be defined next) and $\beta$ are close to each other.
The reason for setting this (complex) form for $\epsilon$ will be clear in the proof of Lemma~\ref{lemma:beta_i>0_and_beta_in_C_i} (showing the covering condition for the KKM theorem).

Then we relate each vector $\beta\in \Delta_{n-1}$ with
\begin{equation*}
    \hat{\beta} = \left( (1-n\epsilon)\beta_1+\epsilon,\; (1-n\epsilon)\beta_2+\epsilon,\; \ldots,\; (1-n\epsilon)\beta_n+\epsilon \right).
\end{equation*}
Note that $\hat{\beta}\in \Delta_{n-1}$ and $\hat{\beta}>0$.
Intuitively speaking, $\hat{\beta}$ is obtained by shrinking $\beta$ toward the center of the simplex by a factor of $(1-n\epsilon)$, and these vectors form a slightly shrunken version of $\Delta_{n-1}$.
The vector $\hat{\beta}$ is only a technical device for the KKM argument in Section~\ref{ssec:fix_beta}: it lets us use the closed simplex while evaluating strictly positive coefficient vectors. After Lemma~\ref{lemma:beta_in_intersection_of_C}, we work directly with the resulting positive vector $\beta$, and $\hat{\beta}$ plays no further role.
\medskip

\noindent\textbf{Roadmap.}
Note that the LPs depend only on the coefficient vector $\beta$, and not on the weights of agents.
The coefficient vector $\beta$ uniquely determines the price of items, and also the MBB item sets and the MBB-feasibility graph (to be defined later).
By carefully choosing $\beta$ (taking the weights of agents into consideration), we have some desired property regarding the MBB feasibility.
We also take the weights of agents into consideration when computing the MBB-feasible allocation.
The remaining analysis of this section can be organized into three main steps:
\begin{itemize}
  \setlength\itemsep{0em}
    \item We first show the existence of a coefficient vector $\beta$ with desired properties (carefully designed for achieving weighted proportionality) using the KKM theorem;
    \item Then we establish the structural properties for the MBB-feasibility graph with the fixed coefficient vector $\beta$;
    \item Finally, we identify an MBB-feasible integral allocation, and show that it can be turned into a WPROP outcome with a small amount of total subsidy.
\end{itemize}

\subsection{Fixing the Coefficient Vector}
\label{ssec:fix_beta}

We define a key property called ``$i$-biased'' for the coefficient vectors, and use the KKM theorem to show the existence of a vector that is $i$-biased for every agent $i\in N$.

Recall that the weighted proportional share of agent $i$ is $w_i\cdot v_i(M)$ (which can be positive, negative, or zero).
Our goal is to compute an allocation $\bX$ with a small total subsidy:
\begin{equation*}
    \sum_{i\in N} \left( w_i\cdot v_i(M) - v_i(X_i) \right)^+.
\end{equation*}

For all $\beta\in \Delta_{n-1}$ and allocation $\bX$, we define
\begin{equation*}
    \delta_i(\bX,\beta) = w_i\cdot \beta_i\cdot v_i(M) - p(X_i).
\end{equation*}

Note that for MBB-feasible allocation $\bX$, we have $p(X_i) = \beta_i\cdot v_i(X_i)$, which implies
\begin{equation*}
    \delta_i(\bX,\beta) = \beta_i\cdot (w_i\cdot v_i(M) - v_i(X_i)).
\end{equation*}

Therefore, $\delta_i(\bX,\beta)$ is a scaled measurement for the deficit of agent $i$ for achieving her weighted proportional share.
Therefore, a smaller $\delta_i(\bX,\beta)$ implies a higher utility for agent $i$.
For example, if for some agent $i\in N$ we have $\delta_i(\bX,\beta) > 0$, then to turn the allocation $\bX$ into a WPROP outcome, we need to pay agent $i$ a subsidy of ${\delta_i(\bX,\beta)}/{\beta_i}$.
Hence, in general, we want $\delta_i(\bX,\beta)$'s to be small (e.g., non-positive).
We show that for any MBB-feasible allocation, the sum of $\delta_i(\bX,\beta)$ is always non-positive.

\begin{lemma} \label{lemma:sum_delta_non_positive}
    For all $\beta\in \Delta_{n-1}$ and MBB-feasible allocation $\bX\in \LP^*(\beta)$, we have
    \begin{equation*}
        \sum_{i\in N} \delta_i(\bX,\beta) \leq 0.
    \end{equation*}
\end{lemma}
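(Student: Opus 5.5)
We unfold the definition and compare the weighted price shares with the total price. Since $\bX$ is a complete allocation, $\sum_{i\in N} p(X_i) = p(M)$. Hence
\begin{equation*}
    \sum_{i\in N} \delta_i(\bX,\beta) = \sum_{i\in N} w_i\cdot \beta_i\cdot v_i(M) - p(M).
\end{equation*}
This quantity depends only on $\beta$ and the instance, not on the particular MBB-feasible allocation. The claim therefore reduces to showing $\sum_{i\in N} w_i\cdot\beta_i\cdot v_i(M) \le p(M)$.

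For this, I use dual feasibility from Observation~\ref{observation:unique_dual}: for every agent $i$ and item $e$, we have $p(e)\ge \beta_i\cdot v_i(e)$. Summing over $e\in M$ gives $\beta_i\cdot v_i(M)\le p(M)$ for each agent $i$. This step uses $\beta_i\ge 0$, but no sign information on $v_i(M)$ is needed, since each item-wise inequality is summed directly.

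Next, I multiply the inequality for agent $i$ by $w_i> 0$ and sum over $i$. Using $\sum_{i\in N} w_i=1$, this yields
\begin{equation*}
    \sum_{i\in N} w_i\cdot\beta_i\cdot v_i(M)\le \sum_{i\in N} w_i\cdot p(M)= p(M).
\end{equation*}
This is exactly the required bound. In other words, the weighted average of the quantities $\beta_i v_i(M)$ is at most their maximum, which is in turn at most $p(M)$.

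There is essentially no obstacle here. The only points needing care are that the argument uses completeness of $\bX$ and that the weights form a convex combination. MBB-feasibility is not even needed beyond the fact that the $p$ in question is the optimal dual price vector. It is worth noting explicitly that the bound holds for every $\beta\in\Delta_{n-1}$, including those with zero entries, so that it can be applied to $\hat\beta$ as well as $\beta$.
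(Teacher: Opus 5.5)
Your proof is correct and is essentially the paper's argument. You sum dual feasibility over items to get $\beta_i\cdot v_i(M)\le p(M)$, then average with the weights $w_i$ against $\sum_{i\in N} p(X_i)=p(M)$. One small remark: the summation step does not actually use $\beta_i\ge 0$, because $p(e)=\max_{j}\{\beta_j\cdot v_j(e)\}$ dominates each $\beta_i\cdot v_i(e)$ regardless of sign.
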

\begin{proof}
    By dual feasibility of $\bp$, for every agent $i\in N$ we have
    \begin{equation*}
        \beta_i\cdot v_i(M) = \sum_{e\in M} \beta_i\cdot v_i(e)
        \leq \sum_{e\in M} p(e) = p(M).
    \end{equation*}

    Therefore, we have
    \begin{equation*}
        \sum_{i\in N} \delta_i(\bX, \beta)
        = \sum_{i\in N} \left( w_i\cdot \beta_i\cdot v_i(M) - p(X_i) \right)
        \leq \sum_{i\in N} w_i\cdot p(M) - \sum_{i\in N} p(X_i) = 0.
        \qedhere
    \end{equation*}
\end{proof}

An equivalent interpretation of Lemma~\ref{lemma:sum_delta_non_positive} is by weak duality: $\sum_{i\in N} w_i\cdot \beta_i\cdot v_i(M)$ is the objective of the feasible primal solution where $x_i(e) = w_i$ for all $i\in N$ and $e\in M$, while $\sum_{i\in N} p(X_i) = p(M)$ is the objective of the optimal dual solution.

\begin{definition}[$i$-biased]
    A coefficient vector $\beta\in \Delta_{n-1}$ is called $i$-biased for some agent $i\in N$, if there exists an MBB-feasible integral allocation $\bX\in \LP^*(\beta)$ such that
    \begin{equation*}
        \delta_i(\bX,\beta) = \min_{j\in N} \left\{ \delta_j(\bX,\beta) \right\},
    \end{equation*}
    where $\bp$ is the optimal solution to $\mathsf{Dual}(\beta)$.
\end{definition}

Intuitively speaking, a coefficient vector $\beta\in \Delta_{n-1}$ is $i$-biased if there exists an MBB-feasible integral allocation $\bX$ that is in favor of agent $i$, in the sense that its deficit to the weighted proportional share (scaled by a factor of $\beta_i$) is the smallest among all agents.
Note that if the vector $\beta$ is $i$-biased, then for the corresponding allocation $\bX$ we have $\delta_i(\bX,\beta) \leq 0$ (which implies that $\bX$ is WPROP for agent $i$ when $\beta_i>0$), because the sum of $\delta$'s is non-positive, by Lemma~\ref{lemma:sum_delta_non_positive}.
Furthermore, a vector $\beta$ can be $i$-biased for multiple agents $i\in N$.

\smallskip

Let $C_i\subseteq \Delta_{n-1}$ be the collection of coefficient vectors $\beta$ such that $\hat{\beta}$ is $i$-biased:
\begin{equation*}
    C_i = \left\{ \beta\in \Delta_{n-1} : \hat{\beta} \text{ is $i$-biased } \right\}.
\end{equation*}

Notice that $C_i$ is a subset of vectors $\beta$ but the condition for $\beta\in C_i$ is based on $\hat{\beta}$, a shrunken version of $\beta$ satisfying $\hat{\beta}>0$.
In the following, we show that $C_1,C_2,\ldots,C_n$ are closed, and they cover the whole simplex $\Delta_{n-1}$.

\begin{restatable}[Closedness]{lemma}{thmClosednessOfC}
\label{lemma:closedness_of_C}
For all $i \in N$, the subset $C_i \subseteq \Delta_{n-1}$ is closed.
\end{restatable}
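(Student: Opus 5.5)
We prove closedness by a sequential argument that uses the finiteness of the set of integral allocations. Fix $i\in N$ and take a sequence $\beta^{(k)}\in C_i$ converging to some $\beta\in\Delta_{n-1}$. We must show $\beta\in C_i$, that is, $\hat{\beta}$ is $i$-biased. The map $\beta\mapsto\hat{\beta}$ is affine, hence continuous, so $\hat{\beta}^{(k)}\to\hat{\beta}$, and every $\hat{\beta}^{(k)}$ and $\hat{\beta}$ has all coordinates at least $\epsilon>0$.

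For each $k$, let $\bX^{(k)}\in\LP^*(\hat{\beta}^{(k)})$ be an MBB-feasible integral allocation witnessing that $\hat{\beta}^{(k)}$ is $i$-biased. There are only finitely many integral allocations (at most $n^m$), so we may pass to a subsequence along which $\bX^{(k)}=\bX$ is a single fixed allocation. Two facts then remain to be checked.

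\emph{(a) $\bX$ is MBB-feasible for $\hat{\beta}$.} MBB-feasibility of $\bX$ for $\hat{\beta}^{(k)}$ means that for every $e\in X_j$ and every $l\in N$,
\[
\hat{\beta}^{(k)}_j v_j(e)\ \ge\ \hat{\beta}^{(k)}_l v_l(e).
\]
These are finitely many non-strict inequalities, each continuous in the coefficient vector. Letting $k\to\infty$ gives $\hat{\beta}_j v_j(e)\ge\hat{\beta}_l v_l(e)$. Hence $p(e)=\max_l \hat{\beta}_l v_l(e)=\hat{\beta}_j v_j(e)$, so $e\in M_j$ under $\hat{\beta}$. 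By Observation~\ref{observation:complementary_slackness}, $\bX\in\LP^*(\hat{\beta})$.

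\emph{(b) The minimality condition survives the limit.} For a fixed allocation $\bX$, the quantity
\[
\delta_j(\bX,\gamma)=w_j\gamma_j v_j(M)-\sum_{e\in X_j}\max_l \gamma_l v_l(e)
\]
is a continuous function of $\gamma$, being built from finitely many maxima of linear functions. For every $k$ and every $j$ we have $\delta_i(\bX,\hat{\beta}^{(k)})\le\delta_j(\bX,\hat{\beta}^{(k)})$. Passing to the limit yields $\delta_i(\bX,\hat{\beta})\le\delta_j(\bX,\hat{\beta})$ for all $j$. Together with (a), this shows $\hat{\beta}$ is $i$-biased, so $\beta\in C_i$.

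There is no real obstacle here. The one point that needs care is that $\delta$ is defined through the prices $p$, which themselves depend on the coefficient vector; continuity of $p$ (via the max formula in Observation~\ref{observation:unique_dual}) takes care of this. One could alternatively use the MBB-feasible form $\delta_j=\hat{\beta}_j(w_jv_j(M)-v_j(X_j))$, which is linear in the coefficient vector, and this form is valid in the limit by (a).
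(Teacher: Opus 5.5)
Your proof is correct and follows essentially the same route as the paper: use finiteness of integral allocations to pass to a subsequence with a single witnessing allocation, show that MBB-feasibility survives the limit because its defining conditions are non-strict inequalities continuous in the coefficient vector, and pass the minimality inequalities to the limit by continuity. The paper does the last step with the linear MBB-feasible form $\hat{\beta}_j\,(w_j v_j(M) - v_j(X_j))$, which is exactly the alternative you mention at the end.
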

\begin{proof}[Proof sketch]
(See \cref{sec:missing_proofs} for the full proof.)
Suppose there exists a sequence $B = (\beta^{(1)}, \beta^{(2)}, \ldots)$ such that each element of $B$ lies in $C_i$.
We must show that if $B$ converges to $\beta$, then $\beta \in C_i$.
Each $\betahat^{(t)}$ has a corresponding MBB-feasible allocation $\bX^{(t)}$,
and since the set of integral allocations is finite,
we can assume \wLoG{} that all these MBB-feasible allocations are identical.
Denote this common allocation by $\bX$.
Using \cref{observation:unique_dual}, we can define a price $p$ for $\betahat$,
and one can show that $\bX$ is MBB-feasible for $p$.
We prove that $i \in \argmin_{j \in N} \{\delta_j(\bX, \betahat)\}$,
which implies $\beta \in C_i$.
\end{proof}

\begin{lemma}[Covering Condition] \label{lemma:beta_i>0_and_beta_in_C_i}
    For sufficiently small $\epsilon > 0$, for all vector $\beta\in \Delta_{n-1}$, there exists an agent $i\in N$ such that $\beta_i > 0$ and $\beta\in C_i$.
\end{lemma}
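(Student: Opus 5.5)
Fix $\beta\in\Delta_{n-1}$ and write $Z=\{j\in N:\beta_j=0\}$ and $P=N\setminus Z$; note $P\neq\varnothing$. For agents $j\in Z$ we have $\hat\beta_j=\epsilon$, while every $j\in P$ has $\hat\beta_j\ge\epsilon$ as well. The plan is to pick an arbitrary MBB-feasible integral allocation $\bX\in\LP^*(\hat\beta)$ and, by moving items away from zero-coefficient agents if necessary, reach one in which some agent of $P$ attains $\min_j\delta_j(\bX,\hat\beta)$. If $Z=\varnothing$ there is nothing to do: any $\bX\in\LP^*(\hat\beta)$ has some minimizer $i$, which lies in $P$. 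So assume $Z\neq\varnothing$ and argue by contradiction that every minimizer lies in $Z$ for every $\bX\in\LP^*(\hat\beta)$.

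The key quantitative step is to show that agents in $Z$ are essentially price-irrelevant, so $\delta_j(\bX,\hat\beta)$ for $j\in Z$ has a definite sign that cannot be a minimum.

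\begin{itemize}
\item First, $|\delta_j|\le \epsilon\cdot(w_j|v_j(M)|+m\max|v|)$ for $j\in Z$, which is tiny. This holds because $p(X_j)=\epsilon v_j(X_j)$ by MBB-feasibility.
\item Second, use consistency and Lemma~\ref{lemma:sum_delta_non_positive}. Split on the consistency sign.
\item Goods case, $v_i(M)>0$ for all $i$. An agent $j\in Z$ with $X_j=\varnothing$ has $\delta_j=\epsilon w_jv_j(M)>0$, so it cannot be the minimizer: the minimum is $\le 0$ by Lemma~\ref{lemma:sum_delta_non_positive}. Hence a minimizing $j\in Z$ must hold some goods.
\item Chores case, $v_i(M)<0$. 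A minimizing $j\in Z$ has $\delta_j\le 0$. So it must hold chores, since otherwise $\delta_j=\epsilon w_j v_j(M)-\epsilon v_j(X_j)$ with $X_j$ consisting of goods.
\end{itemize}

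Next I exploit the MBB structure. An item $e$ is MBB for $j\in Z$ only if $\epsilon v_j(e)\ge \hat\beta_i v_i(e)$ for all $i$.
\begin{itemize}
\item For a good, $\hat\beta_i v_i(e)\ge ((1-n\epsilon)\beta_i+\epsilon)v_i(e)$ with some $\beta_i\ge 1/n$. The choice of $\epsilon$ (compare $\epsilon m n\max|v|$ against $\min w_i|v_i(e)|$) makes $\beta_i v_i(e)/n$ dominate $\epsilon v_j(e)$. So goods are never MBB for $Z$-agents, and a $Z$-agent holds only chores.
\item For a chore, prices are negative, and $Z$-agents are the cheapest holders, so chores do tend to go to $Z$.
\end{itemize}
This leaves the chores case as the real one. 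There I bound $|\delta_j|\le\epsilon\, m\max|v|$ for $j\in Z$. Meanwhile $\sum_{i\in P}\delta_i=\sum_i\delta_i-\sum_{j\in Z}\delta_j\le n\epsilon m\max|v|$, so $\min_{i\in P}\delta_i\le \epsilon m\max|v|$.

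This averaging bound alone does not suffice, since it only shows the $P$-minimum is small, not that it beats the $Z$-minimum. The fix is to reassign: if a $Z$-agent $j$ is the unique minimizer with $\delta_j<0$ strictly below all of $P$, move its chores to agents of $P$. This is impossible in $\LP^*(\hat\beta)$ unless those chores are MBB for them. So instead I will use that $\delta_j\le 0$ for a chore-holding $j$ would force $v_j(X_j)\ge w_jv_j(M)$.

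I expect this chores/zero-coefficient interaction to be the main obstacle, and I would first try to settle it by a direct inequality. For $j\in Z$,
\begin{equation*}
\delta_j=\epsilon\,(w_jv_j(M)-v_j(X_j)),
\end{equation*}
and the $\epsilon$ bound $\epsilon<\min_i w_i|v_i(M)|/(nm\max|v|)$ is exactly tailored to compare this against $\min_{i\in P}\delta_i$. The threshold is of order $-\epsilon'$ with the term $w_i|v_i(M)|$ appearing. Failing a direct comparison, I would use ties via non-degeneracy to select, among MBB-feasible allocations, one where some $P$-agent attains the minimum.

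Finally, given such $\bX$ and $i\in P$ attaining the minimum, $\hat\beta$ is $i$-biased, so $\beta\in C_i$ with $\beta_i>0$, which proves the lemma.
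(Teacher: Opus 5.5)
\textbf{Genuine gap: the chores case is left unresolved.} Your goods case works and matches the paper. Goods are never MBB for a zero-coefficient agent $j$, so $p(X_j)\le 0$, which gives $\delta_j>0\ge\min$. But the chores case ($v_i(M)<0$ for all $i$) is not settled, and that is where the lemma's content lies. Your averaging bound $\min_{i\in P}\delta_i\le\epsilon m\max|v|$ points the wrong way. To rule out $j\in Z$ as a minimizer you need some agent whose $\delta$ lies strictly below $\epsilon w_j v_j(M)$, a negative quantity of size $O(\epsilon)$. Neither fallback supplies this. Reassignment breaks MBB-feasibility, as you note yourself. Tie-breaking via non-degeneracy does not help either, since the difficulty is not about ties. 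Separately, your chores bullet contains a non sequitur: a $Z$-agent holding only goods would have $\delta_j<0$, so $\delta_j\le 0$ does not force it to hold chores.

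\textbf{The missing idea.} Read the chore inequality from the other side. Rather than observing that chores are cheap for $Z$-agents, use it to show that the agent $k$ with the largest coefficient can hold no chores at all. When $Z\neq\varnothing$ we have $\beta_k\ge 1/(n-1)>1/n$, hence $\hat{\beta}_k>1/n$ (since $n\epsilon<1$). For every chore $e$ and $j\in Z$, the choice of $\epsilon$ gives
$$p(e)\ge\epsilon v_j(e)>\tfrac{1}{n}v_k(e)>\hat{\beta}_k v_k(e).$$
So $M_k\cap M^-=\varnothing$, and $p(X_k)\ge 0$ in every $\bX\in\LP^*(\hat{\beta})$. This yields
$$\delta_k\le w_k\hat{\beta}_k v_k(M)<\tfrac{1}{n}w_k v_k(M),$$
a negative bound independent of $\epsilon$. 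On the other side, $p(X_j)\le 0$ gives $\delta_j\ge\epsilon w_j v_j(M)$. Since $\epsilon w_j|v_j(M)|\le\epsilon m\max|v|<w_k|v_k(M)|/n$, we get $\delta_j>\delta_k$. Hence no zero-coefficient agent is ever a minimizer, for an arbitrary $\bX\in\LP^*(\hat{\beta})$. No reassignment, tie-breaking, or contradiction over all $\bX$ is needed. The right comparison is against the single agent $k$, not an average over $P$.
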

\begin{proof}
    Fix an arbitrary allocation $\bX \in \LP^*(\hat{\beta})$ and let $i$ be the agent with minimum $\delta_i(\bX,\hat{\beta})$.
    By definition, $\hat{\beta}$ is $i$-biased, and we have $\beta\in C_i$.
    If $\beta > 0$ then we have $\beta_i > 0$ and $\beta\in C_i$, and the lemma follows.
    Otherwise ($\beta_j = 0$ for some agent $j$), we show that we still have $\beta_i > 0$.

    Recall that $\hat{\beta}$ is obtained by shrinking $\beta$ to the center of the simplex by a factor of $(1-n\epsilon)$.
    If $\beta_j = 0$, then we have $\hat{\beta}_j = (1-n\epsilon)\beta_j + \epsilon = \epsilon$, which is very small.
    We show that to maximize the objective of $\LP(\hat{\beta})$, we should not allocate any good to agent $j$.
    Let $k$ be the agent with the maximum $\beta_k$, which must satisfy $\beta_k > 1/n$.
    By the dual feasibility of $\mathsf{Dual}(\hat{\beta})$, for all items $e\in M^+$ (recall that every item in $M^+$ has positive values to all agents), we have
    \begin{equation*}
        p(e) \geq \hat{\beta}_k\cdot v_k(e)
        = \left( (1-n\epsilon)\cdot \beta_k +\epsilon \right)\cdot v_k(e)
        > \frac{1}{n} \cdot v_k(e)
        > \epsilon\cdot v_j(e),
    \end{equation*}
    where the last inequality holds by definition of $\epsilon$.
    Since the inequality is strict, we conclude that $e$ is not an MBB item for agent $j$.
    Therefore, we have $M_j\cap M^+ = \varnothing$, which implies that $p(X_j) \leq 0$, as all chores have negative prices.

    Similarly, for all $e\in M^-$, we have (note that $e$ has negative values to all agents)
    \begin{equation*}
        p(e) \geq \epsilon\cdot v_j(e) > \frac{1}{n} \cdot v_k(e) > \hat{\beta}_k \cdot v_k(e),
    \end{equation*}
    which implies that agent $k$ has no MBB chores, i.e., $M_k\cap M^- = \varnothing$.
    Therefore, we have $p(X_k) \geq 0$, as all goods have positive prices.

    Next, we show that $\delta_j(\bX,\hat{\beta})$ is not the minimum among all agents, which implies $\beta_i > 0$.

    If $v_j(M) > 0$ then we are done, because
    \begin{equation*}
        \delta_j(\bX,\hat{\beta}) = w_j\cdot \epsilon \cdot v_j(M) - p(X_j) > 0,
    \end{equation*}
    but Lemma~\ref{lemma:sum_delta_non_positive} implies that the minimum $\delta_i(\bX,\hat{\beta})$ is non-positive.

    If $v_j(M) < 0$, then we also have $v_k(M) < 0$ by consistency of the instance.
    Since $p(X_j)\leq 0$ and $p(X_k)\geq 0$, we have
    \begin{equation*}
        \delta_j(\bX,\hat{\beta}) \geq w_j\cdot \epsilon\cdot v_j(M) > w_k\cdot\frac{1}{n}\cdot v_k(M) \geq \delta_k(\bX,\hat{\beta}),
    \end{equation*}
    where the second inequality holds by definition of $\epsilon$:
    \begin{equation*}
        \epsilon < \frac{w_k\cdot |v_k(M)|}{nm\cdot \max_{e\in M} \{ |v_j(e)| \}} \leq \frac{w_k\cdot v_k(M)}{w_j\cdot v_j(M)\cdot n}.
    \end{equation*}

    Therefore $\delta_j(\bX,\hat{\beta})$ is not the minimum and we have $\beta_i > 0$, as claimed.
\end{proof}

Then, we apply the KKM fixed point theorem, which is formally stated below.

\begin{theorem}[\cite{knaster1929}] \label{theorem:KKM}
    Let $C_1,C_2,\ldots,C_n$ be closed subsets of $\Delta_{n-1}$, such that for every $\beta\in \Delta_{n-1}$, there exists an index $i\in [n]$ such that $\beta_i > 0$ and $\beta\in C_i$. Then the intersection of all the subsets is non-empty, i.e., $\bigcap_{i\in[n]} C_i \neq \varnothing$.
\end{theorem}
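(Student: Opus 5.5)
This is the classical KKM lemma, which the paper cites from~\cite{knaster1929}. I would prove it in the standard way: reduce it to Sperner's lemma on ever finer triangulations of $\Delta_{n-1}$, then pass to a limit using compactness and the closedness of the $C_i$.

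\textbf{Labeling.} Fix a triangulation $T$ of $\Delta_{n-1}$ with mesh at most $1/k$. For each vertex $\beta$ of $T$, the covering hypothesis gives an index $i$ with $\beta_i>0$ and $\beta\in C_i$; let $\ell(\beta)$ be any such index. This is a \emph{Sperner labeling}. The reason is that if $\beta$ lies on the face of $\Delta_{n-1}$ spanned by the unit vectors $\{\mathbf{e}_j : j\in S\}$, then $\beta_j=0$ for all $j\notin S$, so $\ell(\beta)\in S$. In particular, each corner $\mathbf{e}_j$ receives label $j$. Sperner's lemma then yields a cell $\sigma_k$ of $T$ whose $n$ vertices $\beta^{(k,1)},\ldots,\beta^{(k,n)}$ carry all labels $1,\ldots,n$. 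Relabel them so that $\beta^{(k,i)}\in C_i$ for every $i$.

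\textbf{Limit.} Since $\Delta_{n-1}$ is compact, some subsequence of $(\beta^{(k,1)})_k$ converges to a point $\beta^*\in\Delta_{n-1}$. Every vertex of $\sigma_k$ is within distance $1/k$ of $\beta^{(k,1)}$, so along the same subsequence $\beta^{(k,i)}\to\beta^*$ for every $i$. Each $C_i$ is closed and contains the sequence $(\beta^{(k,i)})_k$, so $\beta^*\in C_i$ for all $i$. Hence $\beta^*\in\bigcap_i C_i$.

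\textbf{Main obstacle.} The only nontrivial ingredient is Sperner's lemma itself. I would prove it by the standard parity argument, by induction on dimension: count the $(n-2)$-dimensional faces of cells that carry exactly the labels $\{1,\ldots,n-1\}$. Each fully labeled cell contributes exactly one such face, and every other cell contributes an even number. Interior faces are counted twice, and boundary faces of this type lie only on the facet opposite $\mathbf{e}_n$. By induction there are an odd number of them there, so the number of fully labeled cells is odd and hence nonzero.

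Alternatively, I would derive the result directly from Brouwer's fixed point theorem. Suppose for contradiction that $\bigcap_i C_i=\varnothing$, and set $f_i(\beta)=\mathrm{dist}(\beta,C_i)$. Then $\sum_i f_i>0$ everywhere, and the continuous map $g(\beta)_i=f_i(\beta)/\sum_j f_j(\beta)$ sends $\Delta_{n-1}$ to itself, so it has a fixed point $\beta$. By the covering hypothesis, some $i$ has $\beta_i>0$ and $\beta\in C_i$, so $f_i(\beta)=0$ and therefore $g(\beta)_i=0\neq\beta_i$. This contradicts $\beta$ being a fixed point.
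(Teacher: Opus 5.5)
The paper gives no proof of this statement. It quotes the KKM theorem as a black box from Knaster--Kuratowski--Mazurkiewicz and uses it once, to obtain $\beta$ in Lemma~\ref{lemma:beta_in_intersection_of_C} from the closedness and covering lemmas. There is therefore nothing in the paper to compare your argument against step by step.

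Both of your arguments are correct. The Sperner route is essentially the original 1929 proof.

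\textbf{Sperner route.} Your labeling is a valid Sperner labeling because the paper's pointwise covering condition is equivalent to the classical face condition $\mathrm{conv}\{\mathbf{e}_j : j\in S\}\subseteq\bigcup_{j\in S}C_j$. The pointwise condition asks for an index $i$ in the support of $\beta$ with $\beta\in C_i$, and a point of the face spanned by $S$ has its support inside $S$. Your parity count is sound: each fully labeled cell contributes one face, every other cell contributes $0$ or $2$, and boundary faces with labels $\{1,\ldots,n-1\}$ can only lie on the facet opposite $\mathbf{e}_n$. The mesh-to-zero, compactness and closedness limit is also sound.

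\textbf{Brouwer route.} One detail is worth stating. Each $C_i$ is nonempty, since the covering condition at the corner $\mathbf{e}_i$ forces $\mathbf{e}_i\in C_i$. Hence $\mathrm{dist}(\cdot,C_i)$ is a well-defined continuous function that vanishes exactly on $C_i$, because $C_i$ is closed.

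\textbf{Trade-off.} The Sperner route is self-contained and purely combinatorial; it is how KKM derived Brouwer's theorem in the first place. The Brouwer route is shorter, but it imports a fixed-point theorem whose standard proofs go through Sperner's lemma or homology anyway. It would be circular if you wanted KKM as the primitive.
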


Applying the KKM Fixed Point Theorem to the sets $\{C_1,C_2,\ldots,C_n\}$ we have constructed above, we conclude that there exists $\beta\in \Delta_{n-1}$ such that $\hat{\beta}$ is $i$-biased for every agent $i\in N$.
Since $\hat{\beta} > 0$, we have the following.

\begin{lemma} \label{lemma:beta_in_intersection_of_C}
    There exists $\beta\in \Delta_{n-1}$ with $\beta>0$ that is $i$-biased for all $i\in N$.
\end{lemma}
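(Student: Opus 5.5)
The plan is to combine the three ingredients already established: closedness of the sets $C_i$ (Lemma~\ref{lemma:closedness_of_C}), the covering condition (Lemma~\ref{lemma:beta_i>0_and_beta_in_C_i}), and the KKM theorem (Theorem~\ref{theorem:KKM}).

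First, fix $\epsilon>0$ small enough for Lemma~\ref{lemma:beta_i>0_and_beta_in_C_i} to apply. It is also strictly below $1/n$; the bound defining $\epsilon$ ensures this, since $w_i|v_i(e)|\le \max|v_i(e)|$ and $nm\ge n$. With this choice:
\begin{itemize}
\item each $C_i$ is a closed subset of $\Delta_{n-1}$ by Lemma~\ref{lemma:closedness_of_C};
\item for every $\gamma\in\Delta_{n-1}$ there is some $i$ with $\gamma_i>0$ and $\gamma\in C_i$.
\end{itemize}
These are exactly the hypotheses of Theorem~\ref{theorem:KKM}. We therefore obtain a point $\gamma\in\bigcap_{i\in N}C_i$. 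By the definition of $C_i$, the vector $\hat{\gamma}$ is $i$-biased for every $i\in N$.

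Next, set $\beta\defeq\hat{\gamma}$ and verify the two claimed properties.
\begin{itemize}
\item Membership in the simplex: $\sum_i\beta_i=(1-n\epsilon)\sum_i\gamma_i+n\epsilon=1$.
\item Positivity: each $\beta_i=(1-n\epsilon)\gamma_i+\epsilon\ge\epsilon>0$, because $1-n\epsilon>0$ and $\gamma_i\ge0$.
\end{itemize}
Hence $\beta\in\Delta_{n-1}$, $\beta>0$, and $\beta$ is $i$-biased for all $i$, which is the statement.

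The argument is a direct assembly of earlier results. The only point needing care is compatibility of the parameter choices. The $\epsilon$ used to define $\hat{\cdot}$, and hence the sets $C_i$, must be the same one for which the covering lemma holds. Closedness must hold for that fixed $\epsilon$, which it does since the closedness proof works for any fixed $\epsilon$ with $1-n\epsilon>0$. I expect no real obstacle beyond checking that $n\epsilon<1$, which follows from the definition of $\epsilon$ as noted above.
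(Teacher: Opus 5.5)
Your proposal is correct and follows the paper's argument: apply the KKM theorem to the closed sets $C_i$ using the covering lemma, take a point $\gamma\in\bigcap_{i\in N}C_i$, and set $\beta=\hat{\gamma}$, which lies in $\Delta_{n-1}$, is positive, and is $i$-biased for every $i$. You also explicitly check that $n\epsilon<1$, which the paper leaves implicit when it asserts $\hat{\beta}>0$.
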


In what follows, we fix $\beta > 0$ to be this specific coefficient vector and let $\bp$ be the corresponding price vector, i.e., the optimal dual solution.
Since $\beta$ is fixed throughout the remainder of the analysis, in what follows, we use $\delta_i(\bX)$ to denote $\delta_i(\bX,\beta)$.

\subsection{Property of the MBB Graph}
\label{ssec:property_of_MBB}

For the fixed prices $\bp$, we construct a bipartite graph $G(N\cup M, E)$ representing the MBB feasibility.
Let the two sides of the bipartite graph be the set of agents $N$, and the set of items $M$, respectively.
Let there be an edge $(i,e) \in E$ if and only if $p(e) = \beta_i\cdot v_i(e)$.
We show that the graph is a forest, using the non-degeneracy of the instance (see Appendix~\ref{sec:missing_proofs} for a proof).

\begin{lemma} \label{lemma:no_cycle_in_MBB_graph}
    The MBB-feasibility graph for a non-degenerate instance is a forest.
\end{lemma}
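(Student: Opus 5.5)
We argue by contradiction. Suppose the MBB-feasibility graph $G$ contains a cycle. Since $G$ is bipartite with sides $N$ and $M$, any cycle alternates between agents and items. We may take a shortest cycle, which is simple: $(i_0,e_1,i_1,e_2,\ldots,e_k,i_k)$ with $i_k=i_0$, the agents $i_0,\ldots,i_{k-1}$ distinct, and the items $e_1,\ldots,e_k$ distinct. Because $G$ has no parallel edges, $k\ge 2$. Along this cycle, each item $e_t$ is adjacent to both $i_{t-1}$ and $i_t$. By the definition of edges this gives
\begin{equation*}
    p(e_t)=\beta_{i_{t-1}}\cdot v_{i_{t-1}}(e_t)=\beta_{i_t}\cdot v_{i_t}(e_t)\qquad\text{for all } t\in[k].
\end{equation*}

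Next we take the product of these equalities over $t=1,\ldots,k$. The left-hand sides give
\begin{equation*}
    \prod_{t=1}^k \beta_{i_{t-1}}\cdot\prod_{t=1}^k v_{i_{t-1}}(e_t).
\end{equation*}
The right-hand sides give
\begin{equation*}
    \prod_{t=1}^k \beta_{i_t}\cdot\prod_{t=1}^k v_{i_t}(e_t).
\end{equation*}
Since $i_k=i_0$, the multisets $\{i_0,\ldots,i_{k-1}\}$ and $\{i_1,\ldots,i_k\}$ coincide, so the two $\beta$-products are equal. Because $\beta>0$ by \cref{lemma:beta_in_intersection_of_C}, this common product is strictly positive and can be cancelled. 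We are left with
\begin{equation*}
    \prod_{t=1}^k v_{i_{t-1}}(e_t)=\prod_{t=1}^k v_{i_t}(e_t),
\end{equation*}
which contradicts the non-degeneracy of the instance (\cref{defn:degen}). Hence $G$ is acyclic, that is, a forest.

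The step needing the most care is the cancellation, since it depends on $\beta_i\neq 0$ for every agent on the cycle. If some $\beta_i$ were zero, the products could vanish on both sides, and the resulting equality would not contradict non-degeneracy. This is exactly why we work with the positive vector $\beta$ from \cref{lemma:beta_in_intersection_of_C}. The claim also holds for any $\beta>0$, so the argument applies equally to the shrunken vectors $\hat{\beta}$. The only other detail is extracting a \emph{simple} cycle with $k\ge 2$ from an arbitrary cycle, which follows by taking a shortest one.
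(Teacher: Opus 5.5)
Your proof is correct and follows the paper's argument: the paper also takes a simple alternating cycle, writes the MBB equalities $p(e_t)=\beta_{i_{t-1}}\cdot v_{i_{t-1}}(e_t)=\beta_{i_t}\cdot v_{i_t}(e_t)$, and telescopes the $\beta$'s to contradict non-degeneracy. The only cosmetic difference is that the paper writes the product of ratios $v_{i_{t-1}}(e_t)/v_{i_t}(e_t)$, which implicitly uses nonzero item values (guaranteed by strict objectiveness); your cross-multiplied form needs only $\beta>0$.
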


Recall that our goal is to compute an MBB-feasible allocation $\bX$ that can be turned into a WPROP outcome with a small amount of total subsidy.
The above lemma implies that the MBB feasibility graph is sparse, which is (intuitively speaking) helpful for identifying the allocation.
In the following, we show that we can make the graph even sparser by removing edges incident to leaf item nodes.

\begin{definition}[Reserved Items]
    We call an item $e\in M$ \emph{reserved} for agent $i\in N$, if $e$ is an MBB item only to agent $i$, i.e., $e$ has only one neighbor in $G(N\cup M,E)$, which is $i$.
    Let $R_i \subseteq M_i$ be the set of reserved items of agent $i$.
\end{definition}

Note that the sets of reserved items $R_1,\ldots,R_n$ are disjoint.
Since there is no freedom to decide the allocation of reserved items (subject to the MBB-feasibility constraint), we can hide all these items in the graph when identifying the MBB-feasible allocation.
Hence, it remains to consider a forest in which all leaf nodes are agents, which implies that there are at most $n-1$ items in the graph.
We call these items \emph{shared items} as each of them has a degree of at least $2$.
Note that for all MBB-feasible allocation $\bX$, for all $i\in N$ we have $R_i\subseteq X_i\subseteq M_i$.
Finally, we establish an important property for MBB feasibility,
making use of the fixed coefficient vector $\beta$.
Indeed, this is the only property we need from $\beta$.

For each agent $i\in N$, we define
\begin{equation*}
    q_i = \max_{X_i:R_i\subseteq X_i\subseteq M_i}\{p(X_i)\}
    = p\left( R_i\cup(M_i\cap M^+) \right)
\end{equation*}
as the maximum possible price of $i$ under MBB-feasible allocations.
Note that $q_i$ can be obtained by allocating all goods in $M_i$ to $X_i$, besides all reserved items.

By Lemma~\ref{lemma:beta_in_intersection_of_C}, for every agent $i\in N$, there exists an MBB-feasible allocation $\bX$ such that $\delta_i(\bX)$ is minimum among all agents, which yields the following lemma.

\begin{lemma}\label{lemma:allocation_with_price_at_least_p(R)-wp(M)}
    For all agent $i\in N$, there exists an allocation $\bX \in \LP^*(\beta)$ in which
    \begin{equation*}
        \min_{j\in N}\left\{ \delta_j(\bX) \right\} = \delta_i(\bX) = w_i\cdot \beta_i\cdot v_i(M) - p(X_i) \geq w_i\cdot \beta_i\cdot v_i(M) - q_i.
    \end{equation*}
    In other words, in $\bX$, we have $\delta_j(\bX) \geq w_i\cdot \beta_i\cdot v_i(M) - q_i$ for all $j\in N$.
\end{lemma}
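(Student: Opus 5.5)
The plan is a direct unpacking of Lemma~\ref{lemma:beta_in_intersection_of_C} together with the definition of $q_i$. Fix an agent $i\in N$. Since the fixed coefficient vector $\beta$ is $i$-biased, the definition of $i$-biased gives an MBB-feasible integral allocation $\bX\in\LP^*(\beta)$ with $\delta_i(\bX)=\min_{j\in N}\{\delta_j(\bX)\}$. This is the allocation I will use, and it supplies the first equality in the statement. The second equality is simply the definition $\delta_i(\bX)=w_i\cdot\beta_i\cdot v_i(M)-p(X_i)$ with $\beta$ fixed.

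For the inequality, it suffices to show $p(X_i)\le q_i$. By Observation~\ref{observation:complementary_slackness}, $\bX$ is MBB-feasible, so $X_i\subseteq M_i$. Every reserved item of agent $i$ has $i$ as its only MBB neighbor, and every item must be allocated to an agent for whom it is MBB. Hence $R_i\subseteq X_i$. So $X_i$ lies in the range of the maximization defining $q_i$, and therefore $p(X_i)\le q_i$.

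For completeness, I would also check the closed form $q_i=p(R_i\cup(M_i\cap M^+))$. By Observation~\ref{observation:unique_dual}, since $\beta>0$, goods have positive prices and chores have negative prices. The maximizing set therefore contains $R_i$, all MBB goods, and no non-reserved MBB chores. This closed form is not actually needed for the inequality.

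The final sentence follows by combining the two facts: $\delta_j(\bX)\ge\delta_i(\bX)$ for all $j$, and $\delta_i(\bX)\ge w_i\beta_i v_i(M)-q_i$. There is no real obstacle here. The only point needing care is the justification that $R_i\subseteq X_i$ for every MBB-feasible allocation, and this follows from completeness of the allocation together with MBB-feasibility.
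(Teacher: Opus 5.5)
Your proposal is correct and follows the paper's own argument: the allocation $\bX$ comes directly from the $i$-biasedness of the fixed $\beta$ (Lemma~\ref{lemma:beta_in_intersection_of_C}). The inequality then follows because $R_i\subseteq X_i\subseteq M_i$ for every MBB-feasible allocation, which puts $X_i$ inside the maximization defining $q_i$; the paper records this inclusion just before defining $q_i$.
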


\subsection{Identifying the Allocation}
\label{ssec:identify_allocation}

Next, we identify an integral MBB-feasible allocation $\bX$ that requires a small amount of subsidy.
For ease of notation, for any MBB-feasible bundle $X_i$ allocated to agent $i$, let
\begin{equation*}
    \hat{p}(X_i) = p(X_i) + \max\left\{\left\{0\right\}\cup\left\{p(e):e\in M_i\setminus X_i\right\}\cup\left\{-p(e):e\in X_i\setminus R_i\right\}\right\}
\end{equation*}
be the maximum price of $X_i$ after adding or removing at most one MBB item (except for those in $R_i$).
Note that if $M_i\setminus X_i$ contains only chores (whose prices are negative) and $X_i\setminus R_i$ contains only goods (whose prices are positive), then we can choose not to alter the bundle, as ensured by including $0$ in the maximum. This definition also applies when either set is empty.
Therefore we always have $\hat{p}(X_i) \geq p(X_i)$.

\smallskip

Let $\bX$ be the MBB-feasible allocation that maximizes
\begin{equation*}
    \Gamma(\bX):= \min_{i\in N}\{ \delta_i(\bX) \} = \min_{i\in N}\{ w_i\cdot \beta_i\cdot v_i(M) - p(X_i) \},
\end{equation*}
and subject to which maximizes the number of agents satisfying the following condition:
\begin{equation}
    w_i\cdot \beta_i\cdot v_i(M) - \hat{p}(X_i) \leq \Gamma(\bX). \label{equation:hatp-w_at_most_Gamma}
\end{equation}

In the following, we fix this allocation, and let $\Gamma = \Gamma(\bX)$, which is the maximum over all MBB-feasible allocations.
Note that we have $\Gamma \leq 0$ by Lemma~\ref{lemma:sum_delta_non_positive}.
Therefore, if Condition~\eqref{equation:hatp-w_at_most_Gamma} is satisfied for agent $i$, then the allocation $\bX$ is WPROP1 for $i$ (see Lemma~\ref{lemma:WPROP1}).

As a corollary of Lemma~\ref{lemma:allocation_with_price_at_least_p(R)-wp(M)}, we obtain the following.

\begin{corollary} \label{corollary:Gamma_at_least_max_p(R)-w}
    We have $\Gamma \geq \max_{i\in N} \{ w_i\cdot \beta_i\cdot v_i(M) - q_i \}$.
\end{corollary}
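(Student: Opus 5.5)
The corollary follows directly from Lemma~\ref{lemma:allocation_with_price_at_least_p(R)-wp(M)} together with the choice of $\bX$ as the maximizer of $\Gamma(\cdot)$ over all MBB-feasible allocations.

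Fix an arbitrary agent $i\in N$. By Lemma~\ref{lemma:allocation_with_price_at_least_p(R)-wp(M)}, which relies on $\beta$ being $i$-biased (Lemma~\ref{lemma:beta_in_intersection_of_C}), there is an allocation $\bX^{(i)}\in\LP^*(\beta)$ in which agent $i$ attains the minimum deficit. Since $R_i\subseteq X^{(i)}_i\subseteq M_i$ and $q_i$ is the maximum of $p(\cdot)$ over such bundles, we have $p(X^{(i)}_i)\le q_i$. Hence
\[
\Gamma(\bX^{(i)})=\min_{j\in N}\delta_j(\bX^{(i)})=\delta_i(\bX^{(i)})= w_i\beta_i v_i(M)-p(X^{(i)}_i)\ge w_i\beta_i v_i(M)-q_i .
\]

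The set $\LP^*(\beta)$ is finite and nonempty, so $\Gamma=\max_{\bY\in\LP^*(\beta)}\Gamma(\bY)$ is well defined. Because $\bX$ maximizes $\Gamma(\cdot)$ (the secondary tie-breaking criterion is irrelevant here), we get $\Gamma\ge\Gamma(\bX^{(i)})\ge w_i\beta_i v_i(M)-q_i$.

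Since $i$ was arbitrary, taking the maximum over $i\in N$ gives the corollary.

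There is no real obstacle. The only point needing care is the identity $q_i=p(R_i\cup(M_i\cap M^+))$ being an upper bound on $p(X_i)$, and this holds because every item outside $R_i$ in $X_i$ lies in $M_i$, and chores have negative prices when $\beta>0$ (Observation~\ref{observation:unique_dual}).
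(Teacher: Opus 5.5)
Your proof is correct and follows essentially the same route as the paper. The paper likewise applies Lemma~\ref{lemma:allocation_with_price_at_least_p(R)-wp(M)} to each agent $i$ to obtain an MBB-feasible allocation $\bX'$ with $\Gamma(\bX')\ge w_i\cdot\beta_i\cdot v_i(M)-q_i$. It then concludes using the fact that $\Gamma$ is the maximum of $\Gamma(\cdot)$ over all MBB-feasible allocations.
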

\begin{proof}
    By Lemma~\ref{lemma:allocation_with_price_at_least_p(R)-wp(M)}, for every agent $i\in N$ there exists an MBB allocation $\bX'$ such that
    \begin{equation*}
        \Gamma(\bX') = \min_{j\in N} \{ \delta_j(\bX') \} \geq w_i\cdot \beta_i\cdot v_i(M) - q_i.
    \end{equation*}
    The corollary follows since $\Gamma = \max_{\bX'}\{ \Gamma(\bX') \}$.
\end{proof}

Corollary~\ref{corollary:Gamma_at_least_max_p(R)-w} implies that under allocation $\bX$, for every agent $i$, if $X_i$ is ``one-item-far'' from the allocation that defines $q_i$, then Condition~\eqref{equation:hatp-w_at_most_Gamma} is satisfied.
Recall that $q_i$ is achieved when agent $i$ receives all her MBB goods and no MBB chores in $M_i\setminus R_i$.
Hence, ``one-item-far'' means that $X_i$ contains all MBB goods and $X_i\setminus R_i$ contains at most one chore; or $X_i$ contains all but one MBB good, and $X_i\setminus R_i$ contains no chore.
Next, we present the most important property on allocation $\bX$.

\begin{lemma} \label{lemma:condition_holds_for_all_agents}
    In allocation $\bX$, all agents satisfy Condition~\eqref{equation:hatp-w_at_most_Gamma}.
\end{lemma}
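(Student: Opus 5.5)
We argue by contradiction, using the augmenting-tree technique on the MBB forest from Lemma~\ref{lemma:no_cycle_in_MBB_graph}. Suppose some agent $i_0$ violates Condition~\eqref{equation:hatp-w_at_most_Gamma}, i.e.\ $w_{i_0}\beta_{i_0}v_{i_0}(M)-\hat p(X_{i_0})>\Gamma$. Then $\delta_{i_0}(\bX)$ stays above $\Gamma$ even after any single permitted modification of $X_{i_0}$: adding any MBB good in $M_{i_0}\setminus X_{i_0}$, or dropping any non-reserved chore in $X_{i_0}$. By Corollary~\ref{corollary:Gamma_at_least_max_p(R)-w}, $\Gamma\ge w_{i_0}\beta_{i_0}v_{i_0}(M)-q_{i_0}$. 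So $X_{i_0}$ is not ``one-item-far'' from the bundle attaining $q_{i_0}$. In particular, $i_0$ is missing some MBB good, or holds some non-reserved MBB chore; each such item $e$ is shared, and the improving move for $i_0$ would be to receive $e$ (if a good) or hand off $e$ (if a chore).

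We grow an alternating tree rooted at $i_0$ in the forest of shared items. Call an agent $j$ \emph{slack} if $\delta_j(\bX)>\Gamma$ strictly, and \emph{tight} if $\delta_j(\bX)=\Gamma$. From $i_0$, follow each item $e$ that $i_0$ would like to gain or shed to the agent $j$ on the other side of the move. For a good, $j$ is its current owner, who would lose it. For a chore, $j$ is an MBB neighbour of $e$ who would take it. Perform the move and examine $\delta_j$.

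If $j$ stays at least $\Gamma$ after the move, we obtain an MBB-feasible allocation $\bX'$ with $\Gamma(\bX')\ge\Gamma$. It must then have equal $\Gamma$, since $\Gamma$ is maximum. We then check that $\bX'$ satisfies Condition~\eqref{equation:hatp-w_at_most_Gamma} for strictly more agents: $i_0$ gains it because it is now one item away from its old $\hat p$-move, and the agents outside the touched edge are unaffected. If $j$ would drop below $\Gamma$, the move is infeasible, and $j$ is close to tight. We continue the search from $j$, compensating $j$ with one of its own movable items. Because the graph is a forest, the moves along a root-to-node path touch disjoint item sets. Hence a path transfer, shifting one item per edge, changes only the endpoints' $\delta$'s by a single item's price, and an intermediate agent receives one item and loses one.

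The key counting step handles the case where no path ends in a feasible improvement. Then every agent $T$ reached in the tree has $\delta_j$ within one item's price of $\Gamma$, in the sense of Condition~\eqref{equation:hatp-w_at_most_Gamma}. We also use that $i_0$ is strictly slack and that no further improvement exists. Together these would let us shift one item along each tree edge toward the root and strictly raise every $\delta$ in $T$ above $\Gamma$. To make this rigorous, I would pass to a component $T$ containing $i_0$ that is closed under the search. Agents outside $T$ are slack or disconnected, and a simultaneous transfer on $T$ raises the minimum or reduces the number of tight agents. That contradicts the choice of $\bX$, which lexicographically maximizes $\Gamma$ and then the count of agents satisfying \eqref{equation:hatp-w_at_most_Gamma}, applied together with Corollary~\ref{corollary:Gamma_at_least_max_p(R)-w} for the root.

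The main obstacle is defining the tree exploration so that every terminating case yields a strict lexicographic improvement. The delicate points are agents who become exactly tight after a move, and showing that previously satisfied agents keep Condition~\eqref{equation:hatp-w_at_most_Gamma}, since their $\hat p$ may change when a neighbouring shared item moves. I would try first a minimal-counterexample choice: pick the violating agent and the shortest augmenting path, so that only its endpoints change status.
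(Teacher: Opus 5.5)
Your proposal has a genuine gap, and it comes from a sign error. When $i_0$ takes a good from $j$ or hands $j$ a chore, $p(X_j)$ decreases. So $\delta_j(\bX)=w_j\beta_jv_j(M)-p(X_j)$ \emph{increases}, and the harmed agent can never drop below $\Gamma$. You get the sign right for $i_0$, whose deficit decreases but stays above $\Gamma$, but not for $j$.

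As a result, your dichotomy is vacuous. The ``infeasible'' branch never occurs, and the key counting step (tight agents, raising every $\delta$ in $T$) addresses an obstacle that does not exist. In the branch that always occurs, your claim that the number of agents satisfying Condition~\eqref{equation:hatp-w_at_most_Gamma} strictly increases is unjustified, for two reasons:
\begin{itemize}
\item A single move need not make $i_0$ satisfy the condition. Even the best move only brings $\delta_{i_0}$ down to $w_{i_0}\beta_{i_0}v_{i_0}(M)-\hat p(X_{i_0})>\Gamma$, and the new one-item improvement may still be too small. Being ``one item away from its old $\hat p$-move'' gives nothing.
\item The harmed agent $j$ now has a larger deficit and may violate the condition itself, so the count can stay the same.
\end{itemize}
Your path-transfer claim is also false. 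An intermediate agent that gains one item and loses another has its $\delta$ changed by the difference of two prices, not left unchanged. Your shortest-path minimal-counterexample idea therefore does not control intermediate agents either.

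The missing idea is a recursive repair down the tree rooted at $i_0$. Let $i_0$ repeatedly shed child chores or take child goods until the condition first holds. Each transfer keeps $\delta_{i_0}>\Gamma$: just before it, the violated condition gives $\delta_{i_0}-(\hat p(X_{i_0})-p(X_{i_0}))>\Gamma$, and the transfer raises $p(X_{i_0})$ by at most $\hat p(X_{i_0})-p(X_{i_0})$.

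Every child harmed this way only has its deficit increased, so $\Gamma$ is never threatened. If such a child $j$ now violates the condition, run the same procedure at $j$, using only $j$'s own child items. This terminates with the condition satisfied. Once $j$ has shed all child chores and taken all child goods, its bundle differs from the bundle attaining $q_j$ only in the fixed parent item. Hence $\hat p(X_j)\ge q_j$, and Corollary~\ref{corollary:Gamma_at_least_max_p(R)-w} yields the condition. The same argument at the root, where there is no parent item, shows $i_0$ eventually satisfies it.

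Because the graph is a forest, descendant transfers never touch an ancestor's bundle. Also, $\hat p(X_j)$ depends only on $X_j$, so untouched agents keep their status; this removes your worry about neighbouring shared items changing $\hat p$. At the end, all deficits are still at least $\Gamma$, so by maximality the minimum is exactly $\Gamma$. The number of satisfying agents strictly increases, which contradicts the tie-break. No tight/slack analysis or contradiction with maximality of $\Gamma$ is needed.
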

\begin{proof}
    Assume otherwise, i.e., there exists an agent $i$ for which $w_i\cdot \beta_i\cdot v_i(M) - \hat{p}(X_i) > \Gamma$.
    We can alter the allocation, preserving $\delta_j(\bX) \geq \Gamma$ for all agents $j\in N$, but increasing the number of agents satisfying Condition~\eqref{equation:hatp-w_at_most_Gamma}, which leads to a contradiction.

    Recall that $G(N\cup M, E)$ is a forest.
    Thus, there exists a tree containing $i$.
    Root the tree at $i$.
    The child items of $i$ are $M_i\setminus R_i$.
    Also, recall that all leaf nodes are agents.
    We repeat the following operations to increase $p(X_i)$, until Condition~\eqref{equation:hatp-w_at_most_Gamma}, i.e., $w_i\cdot \beta_i\cdot v_i(M) - \hat{p}(X_i) \leq \Gamma$ is satisfied for the first time:
    \begin{itemize}
        \item If there exists some chore in $X_i\setminus R_i$, remove it;
        \item otherwise, if there exists some good in $M_i\setminus X_i$, we include it into $X_i$.
    \end{itemize}

    Note that
    \begin{itemize}
        \item The condition is guaranteed to be satisfied at some moment, because when all the child chores are removed and all child goods are added, we have $p(X_i) = q_i$.
        \item Immediately before each transfer, failure of the condition gives $\delta_i\left(\bX\right)-\left(\hat{p}\left(X_i\right)-p\left(X_i\right)\right)>\Gamma$. The increase in $p\left(X_i\right)$ caused by the transferred item is at most $\hat{p}\left(X_i\right)-p\left(X_i\right)$, evaluated before the transfer. Thus, after the transfer, we still have $\delta_i\left(\bX\right)>\Gamma$.
    \end{itemize}

    Note that each good $e$ added to $X_i$ during the process is taken from some agent $j$ that is a child of item $e$ in the tree.
    For each released chore $e\in X_i\setminus R_i$, we assign it to an arbitrary agent $j$ sharing $e$ with $i$. Again, $j$ is a child of item $e$ in the tree.
    For each such agent $j$, the reallocation decreases $p(X_j)$ (since either a good is taken, or an extra chore is assigned), which maintains $\delta_j(\bX) \geq \Gamma$ but could possibly lead to Condition~\eqref{equation:hatp-w_at_most_Gamma} being violated for agent $j$.

    If the condition is not violated, then we are done with agent $j$.

    Otherwise, we recurse on the subtree rooted at agent $j$, i.e., by repeatedly removing chores from $X_j\setminus R_j$, or adding goods to $X_j$ until Condition~\eqref{equation:hatp-w_at_most_Gamma} is satisfied.
    Note that the reallocation only applies to child items of $j$: the parent of $j$ (which is an item) will not be reallocated. The same before-transfer inequality shows that every improving transfer at $j$ leaves $\delta_j\left(\bX\right)>\Gamma$, while the agent harmed by the transfer has an increased deficit. Since the graph is a tree, descendant transfers change neither an ancestor's bundle nor its one-item price improvement. The fixed parent item is the only item that can prevent $j$ from attaining $q_j$.
    \begin{itemize}
        \item If $j$ is a leaf node (for which we have $|M_j\setminus R_j|\leq 1$), then Condition~\eqref{equation:hatp-w_at_most_Gamma} must be satisfied.
        \item Similarly, if $j$ is an internal node, then by removing all its child chores and adding all its child goods, Condition~\eqref{equation:hatp-w_at_most_Gamma} must be satisfied, because when this happens $X_j$ is one-item-far from the allocation that defines $q_j$.
    \end{itemize}

    The recursion terminates because it moves only down the finite tree, and each agent transfers each child item at most once. Every previously satisfying agent remains satisfying or is restored to satisfying, and the initially chosen agent $i$ becomes satisfying; untouched agents retain their previous status. All deficits remain at least $\Gamma$, so maximality of $\Gamma$ implies that the minimum deficit is still $\Gamma$. This strictly improves the tie-break, a contradiction.
\end{proof}

As a corollary, we show that the allocation is WPROP1.

\begin{lemma} \label{lemma:WPROP1}
    The allocation $\bX$ is WPROP1.
\end{lemma}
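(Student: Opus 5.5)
The plan is to turn Condition~\eqref{equation:hatp-w_at_most_Gamma}, which by Lemma~\ref{lemma:condition_holds_for_all_agents} holds for every agent, into the WPROP1 inequality by dividing by $\beta_i>0$.

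Fix an agent $i$. Since $\Gamma\le 0$ by Lemma~\ref{lemma:sum_delta_non_positive}, Condition~\eqref{equation:hatp-w_at_most_Gamma} gives
\begin{equation*}
w_i\cdot \beta_i\cdot v_i(M) \le \hat{p}(X_i) + \Gamma \le \hat{p}(X_i).
\end{equation*}
I then split into cases according to which term attains the maximum in the definition of $\hat{p}(X_i)$.

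\textbf{Case 1: the maximum is $0$.} Then $\hat{p}(X_i)=p(X_i)=\beta_i\cdot v_i(X_i)$ by MBB-feasibility. This gives $w_i\cdot v_i(M)\le v_i(X_i)$, so $i$ is already WPROP.

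\textbf{Case 2: the maximum is $p(e)$ for some $e\in M_i\setminus X_i$.} Here $e$ is an MBB item of $i$, so $p(e)=\beta_i\cdot v_i(e)$. Hence $\hat{p}(X_i)=\beta_i\cdot v_i(X_i+e)$, and dividing by $\beta_i$ yields $v_i(X_i+e)\ge w_i\cdot v_i(M)$ with $e\notin X_i$.

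\textbf{Case 3: the maximum is $-p(e)$ for some $e\in X_i\setminus R_i$.} Since $X_i\subseteq M_i$, again $p(e)=\beta_i\cdot v_i(e)$. So $\hat{p}(X_i)=\beta_i\cdot v_i(X_i-e)$, and we get $v_i(X_i-e)\ge w_i\cdot v_i(M)$ with $e\in X_i$.

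In every case one of the three WPROP1 alternatives holds for $i$. This is on the canonical instance; transferring WPROP1 to the original instance is Lemma~\ref{lemma:reduction_to_canonical_instances}.

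There is no real obstacle here. The points to state explicitly are that $\beta>0$ (Lemma~\ref{lemma:beta_in_intersection_of_C}), which justifies dividing by $\beta_i$, and that every item appearing in the definition of $\hat{p}$ lies in $M_i$, so its price equals $\beta_i\cdot v_i(e)$.
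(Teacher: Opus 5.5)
Your proposal is correct and follows essentially the same argument as the paper: combine Condition~\eqref{equation:hatp-w_at_most_Gamma} (which holds for every agent) with $\Gamma\le 0$, note that $\hat{p}(X_i)$ is the price of $X_i$ or of $X_i$ with one MBB item added or removed, and divide by $\beta_i>0$. Your three-way case split spells out what the paper compresses into a single step.
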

\begin{proof}
    Fix any agent $i\in N$.
    Since agent $i$ satisfies Condition~\eqref{equation:hatp-w_at_most_Gamma}, we have
    \begin{equation*}
        w_i\cdot \beta_i\cdot v_i(M) - \hat{p}(X_i) \leq \Gamma \leq 0.
    \end{equation*}

    By definition, we have either $\hat{p}(X_i) = p(X_i)$, or $\hat{p}(X_i) = p(X'_i)$, where $X'_i$ can be obtained by adding an MBB item to or removing an MBB item from $X_i$.
    Furthermore, for MBB items, we have $p(X_i) = \beta_i\cdot v_i(X_i)$ (resp. $p(X'_i) = \beta_i\cdot v_i(X'_i)$).
    Then by dividing by $\beta_i$ (which is positive) on both sides of the above inequality, we obtain that either $v_i(X_i) \geq w_i\cdot v_i(M)$, or $v_i(X'_i) \geq w_i\cdot v_i(M)$, which implies that $\bX$ is WPROP1 for agent $i$.
\end{proof}

\subsection{The Total Subsidy}
\label{ssec:total_subsidy}
Finally, we bound the subsidy required to turn $\bX$ into WPROP.
Since $\bX$ is also fixed in the remainder of the analysis, we use $\delta_i$ to denote $\delta_i(\bX)$.
Recall that in $\bX$, for every agent $i\in N$ we have
\begin{equation}
    \delta_i = w_i\cdot \beta_i\cdot v_i(M) - p(X_i) \geq \Gamma,
    \label{equation:all_delta_>=Gamma}
\end{equation}
and
\begin{equation}
    w_i\cdot \beta_i\cdot v_i(M) - \hat{p}(X_i) \leq \Gamma.
    \label{equation:all_delta_hat_<=Gamma}
\end{equation}

Recall that $\delta_i = \beta_i\cdot (w_i\cdot v_i(M) - v_i(X_i))$ represents the deficit (scaled by $\beta_i$) of agent $i$ towards achieving her weighted proportional share, and we need to pay $(\delta_i/\beta_i)^+$ amount of subsidy to agent $i$ to obtain a WPROP outcome.
Equations~\eqref{equation:all_delta_>=Gamma} and~\eqref{equation:all_delta_hat_<=Gamma} combined state that the difference in $\delta$ between any two agents is upper bounded by the absolute price of one item.
On the other hand, Lemma~\ref{lemma:sum_delta_non_positive} states that the sum of $\delta$ is non-positive.
Therefore, intuitively speaking, $\sum_{i\in N} (\delta_i/\beta_i)^+$, which is the sum of positive $\delta_i/\beta_i$'s, cannot be too large.

\paragraph{Comparison with Existing Work.}
In some sense, our analysis for upper bounding the total subsidy is similar to existing works, e.g., see Lemma 3.3 of~\citet{ijcai/WuXZ25}, where they show that if $\sum_{i\in N} d_i \leq 0$, where $d_i = w_i\cdot v_i(M) - v_i(X_i)$, and the difference between any $d_i$ and $d_j$ is upper bounded by $1$, then the total subsidy is at most $\tau(n)$.
However, our analysis is also fundamentally different in that we only have $\sum_{i\in N} \delta_i \leq 0$, which does not necessarily imply $\sum_{i\in N} d_i \leq 0$ (see Example~\ref{example:positive_total_surplus} for an instance).
This is because the allocation we compute is optimal to $\LP(\beta)$, where in the objective each agent $i$ is scaled by a factor of $\beta_i$.
To resolve this problem, we formulate the subsidy bound as a concave function and apply Jensen's inequality to bound the total subsidy, which could be of independent interest for further research on fair allocation with subsidy.

\begin{example} \label{example:positive_total_surplus}
    Consider the following example with a single item $e$ and two equal-weight agents.\footnote{No choice of subsidies makes this particular allocation envy-free. Thus, MBB-feasibility for an arbitrary positive coefficient vector alone does not imply envy-freeability.}

    \newcommand{\circval}[1]{\tikz[baseline=-0.6ex]{\node[draw,circle,inner sep=1pt] {$#1$};}}
    \renewcommand{\arraystretch}{1.1}
    \setlength{\tabcolsep}{8pt}
    \begin{center}
    \begin{tabular}{c|c|c|c}
        &  $v_i(e)$ & $w_i$ & $\beta_i$  \\ \hline
        Agent 1   & $1$ & $0.5$ & $0.1$   \\
        Agent 2   & \circval{0.5} & $0.5$ & $0.9$
    \end{tabular}
    \end{center}

    The optimal solution to $\LP(\beta)$ allocates item $e$ to agent $2$ and we have
    \begin{align*}
        & \delta_1 = \beta_1\cdot (w_1\cdot v_1(e) - v_1(X_1)) = 0.05,\\
        & \delta_2 = \beta_2\cdot (w_2\cdot v_2(e) - v_2(X_2)) = -0.225,\\
        & \delta_1 + \delta_2 < 0,
    \end{align*}
    which is consistent with Lemma~\ref{lemma:sum_delta_non_positive}.
    However, if we measure the agents' distance to proportionality under this allocation, we get $d_1 = 0.5$ and $d_2 = -0.25$, whose sum is positive.
\end{example}

\begin{lemma}
\label{lemma:subsidy_bound}
For the fixed allocation $\bX$ and for canonical instance $\cI$, we have
\begin{equation*}
    \subsidy(\bX, \cI) \le \tau(|N|)\cdot\range(\cI),
\end{equation*}
where $\range(\cI) \defeq \max_{i \in N, e \in M} |v_i(e)|$ and
$\tau(n) \defeq \begin{cases}
        n/4, & \text{when $n$ is even} \\
        (n^2-1)/(4n), & \text{when $n$ is odd}.
    \end{cases}$.
\end{lemma}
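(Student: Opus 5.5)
The plan is to reduce the statement to an optimization over the scaled deficits $\delta_i=\delta_i(\bX)$, using only the three facts already established for the fixed allocation $\bX$: Lemma~\ref{lemma:sum_delta_non_positive}, Equation~\eqref{equation:all_delta_>=Gamma} and Equation~\eqref{equation:all_delta_hat_<=Gamma}. By scaling, assume $\range(\cI)=1$. Write $g\defeq-\Gamma\ge 0$. Since $\bX$ is MBB-feasible, the subsidy is $\sum_{i\in N}(\delta_i/\beta_i)^+$, and we must show it is at most $\tau(n)$.

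\textbf{Step 1: per-agent bounds.} By definition, $\hat p(X_i)-p(X_i)$ equals $0$ or $|p(e)|$ for some $e\in M_i$. For such $e$ we have $|p(e)|=\beta_i|v_i(e)|\le\beta_i$. Combining Equation~\eqref{equation:all_delta_hat_<=Gamma} with this gives
\[
\delta_i=w_i\beta_i v_i(M)-p(X_i)\le \Gamma+\bigl(\hat p(X_i)-p(X_i)\bigr)\le \beta_i-g .
\]
Equation~\eqref{equation:all_delta_>=Gamma} gives the lower bound $\delta_i\ge -g$.

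\textbf{Step 2: the set of subsidized agents.} Let $S=\{i:\delta_i>0\}$ and $k=|S|$. If $S=\varnothing$, the subsidy is zero and we are done. Otherwise, Lemma~\ref{lemma:sum_delta_non_positive} together with $\delta_j\ge -g$ for $j\notin S$ yields
\[
\sum_{i\in S}\delta_i\le -\sum_{j\notin S}\delta_j\le (n-k)g .
\]
In particular $g>0$: if $g=0$, this inequality would force $S=\varnothing$. It also follows that $k<n$.

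\textbf{Step 3: concavity bound (the key step).} The ratio $\delta_i/\beta_i$ involves the unknown $\beta_i$, and this is where the weighted analysis departs from the unweighted one. I would eliminate $\beta_i$ using Step 1, which gives $\beta_i\ge\delta_i+g$. Hence for $i\in S$,
\[
\frac{\delta_i}{\beta_i}\le f(\delta_i),\qquad\text{where } f(x)\defeq\frac{x}{x+g}.
\]
The function $f$ is concave and increasing on $x\ge 0$. By Jensen's inequality and monotonicity of $f$,
\[
\sum_{i\in S}\frac{\delta_i}{\beta_i}\le k\, f\Bigl(\tfrac1k\sum_{i\in S}\delta_i\Bigr)\le k\, f\Bigl(\tfrac{(n-k)g}{k}\Bigr)=\frac{k(n-k)}{n}.
\]

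\textbf{Step 4: integer maximization.} It remains to maximize $k(n-k)/n$ over integers $k$. For even $n$ the maximum is attained at $k=n/2$, giving $n/4$. For odd $n$ it is attained at $k=(n\pm1)/2$, giving $(n^2-1)/(4n)$. In both cases the maximum equals $\tau(n)$.

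For a general range, replace $|v_i(e)|\le 1$ by $|v_i(e)|\le\range(\cI)$ throughout; the same argument then gives $\subsidy(\bX,\cI)\le\tau(n)\cdot\range(\cI)$. I expect the main subtlety to be Step 3, in particular recognizing that the bound $\beta_i\ge\delta_i+g$ turns the problem into a concave one. Once that substitution is made, the remaining steps are routine.
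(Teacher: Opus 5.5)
Your proof is correct and is essentially the paper's argument: you derive $\beta_i \ge (\delta_i+g)/\range(\cI)$ from the two-sided bounds on $\delta_i$, bound $\sum_{i\in S}\delta_i\le (n-k)g$ via Lemma~\ref{lemma:sum_delta_non_positive}, and apply Jensen's inequality to the concave function $x\mapsto x/(x+g)$ to get $k(n-k)/n\le\tau(n)$. The only cosmetic difference is that the paper works with $h=-\delta_n$ (the most negative deficit) instead of $g=-\Gamma$, and these are equal because $\Gamma=\min_{i\in N}\delta_i$.
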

\begin{proof}
    For all agent $i\in N$, let
    \begin{equation*}
        s_i = \left( w_i\cdot v_i(M) - v_i(X_i) \right)^+ = \left( \frac{\delta_i}{\beta_i} \right)^+.
    \end{equation*}

    By definition, $(\bX, \bs)$ is a WPROP outcome.
    Next, we upper bound the total subsidy.

    Recall from Lemma~\ref{lemma:sum_delta_non_positive} that $\sum_{i\in N} \delta_i \leq 0$.
    If every $\delta_i\leq0$, no subsidy is needed and the lemma follows. Otherwise, there is a positive deficit and, since their sum is nonpositive, a negative deficit.
    For ease of notation, we reindex the agents and assume
    \begin{equation*}
        \delta_1 \geq \delta_2 \geq \cdots \geq \delta_k > 0
        \geq \delta_{k+1} \geq \cdots \geq \delta_n.
    \end{equation*}

    Then we have (recall that $\beta > 0$)
    \begin{equation*}
        \sum_{i\in N} s_i = \sum_{i\in N} \left( \frac{\delta_i}{\beta_i} \right)^+
        = \sum_{i\leq k} \frac{\delta_i}{\beta_i}.
    \end{equation*}

    Thus $1\leq k<n$. Let $h = -\delta_n > 0$.
    Since
    \begin{equation*}
        \sum_{i\leq k} \delta_i + \sum_{i>k} \delta_i \leq 0,
    \end{equation*}
    we have
    \begin{equation}
        \sum_{i\leq k} \delta_i \leq -\sum_{i>k} \delta_i \leq -(n-k)\cdot \delta_n = (n-k)\cdot h. \label{equation:delta_<=k_upper_bound}
    \end{equation}

    By Inequalities~\eqref{equation:all_delta_>=Gamma} and~\eqref{equation:all_delta_hat_<=Gamma}, for all $i\leq k$, we have
    \begin{align*}
        -h = \delta_n \geq \Gamma &\geq w_i\cdot \beta_i\cdot v_i(M) - \hat{p}(X_i) \\
        &= \delta_i-\left(\hat{p}(X_i)-p(X_i)\right)\geq\delta_i-\beta_i\cdot\range(\cI),
    \end{align*}
    where the last inequality uses the bound $\beta_i\cdot\range(\cI)$ on each one-item price improvement.

    Therefore we have $\beta_i \geq \frac{\delta_i + h}{\range(\cI)}$, which implies
    \begin{equation*}
        \sum_{i\in N} s_i
        = \sum_{i\leq k} \frac{\delta_i}{\beta_i}
        \leq \range(\cI)\cdot \sum_{i\leq k} \frac{\delta_i}{\delta_i + h}.
    \end{equation*}

    Let $f(x) = \frac{x}{x+h}$ be a function defined on $(0,+\infty)$.
    It is easy to verify that
    \begin{equation*}
        f'(x) = \frac{h}{(x+h)^2} > 0,
        \qquad \text{and} \qquad
        f''(x) = \frac{-2h}{(x+h)^3} < 0.
    \end{equation*}

    Therefore, $f(x)$ is a concave increasing function.
    Then by Jensen's inequality, we obtain
    \begin{align*}
        \frac{\sum_{i\in N} s_i}{\range(\cI)}
        & \leq \sum_{i\leq k} f(\delta_i) \leq k\cdot f\left( \frac{\sum_{i=1}^k \delta_i}{k} \right)
        \leq k\cdot f\left( \frac{(n-k)h}{k} \right) \\
        & = \frac{k\cdot (n-k)\cdot h}{(n-k)\cdot h + h\cdot k} = \frac{k(n-k)}{n} \leq
        \begin{cases}
            \frac{n}{4}, & \text{when $n$ is even} \\
            \frac{n^2-1}{4n}, & \text{when $n$ is odd}
        \end{cases}
    \end{align*}
    where the third inequality follows from Equation~\eqref{equation:delta_<=k_upper_bound}.
    Rearranging yields the lemma.
\end{proof}

Summarizing the above, we obtain the following main results of the paper.

\begin{theorem} \label{thm:canonical-existence}
For any canonical mixed manna instance $\Ical$ with $n$ agents with additive valuation functions and general weights,
there always exists an fPO and WPROP1 allocation that can be made WPROP with total subsidy at most $\tau(n)\cdot\range(\Ical)$.
\end{theorem}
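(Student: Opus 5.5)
The theorem is essentially an assembly of the results of this section. I will take the specific coefficient vector $\beta>0$ from Lemma~\ref{lemma:beta_in_intersection_of_C} and let $\bX$ be the allocation fixed in Section~\ref{ssec:identify_allocation}. Then I will check the three claimed properties in turn.

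Before starting, I need $\beta$ to exist at all. I will confirm that the KKM hypotheses hold for the sets $C_1,\ldots,C_n$: closedness is Lemma~\ref{lemma:closedness_of_C}, and the covering condition is Lemma~\ref{lemma:beta_i>0_and_beta_in_C_i}. The covering condition uses the canonicity of $\Ical$ (strict objectiveness and consistency) together with the choice of $\epsilon$. Theorem~\ref{theorem:KKM} then yields $\beta$. The degenerate cases $n=1$ and $M=\varnothing$ are handled trivially: give everything to the single agent, or use the empty allocation, and no subsidy is needed.

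Next I fix $\bX$. The set $\LP^*(\beta)$ is nonempty and finite, so an allocation maximizing $\Gamma(\bX)$, with ties broken by the number of agents satisfying Condition~\eqref{equation:hatp-w_at_most_Gamma}, exists. Since $\bX$ is MBB-feasible and $\beta>0$, Lemma~\ref{lemma:MBB-feasible_implies_PO_when_beta>0} gives that $\bX$ is fPO. Lemma~\ref{lemma:condition_holds_for_all_agents} shows that every agent satisfies Condition~\eqref{equation:hatp-w_at_most_Gamma}. This relies on the forest structure from Lemma~\ref{lemma:no_cycle_in_MBB_graph}, which uses non-degeneracy, and on Corollary~\ref{corollary:Gamma_at_least_max_p(R)-w}. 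From this, Lemma~\ref{lemma:WPROP1} gives that $\bX$ is WPROP1.

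Finally, for the subsidy, I set $s_i=(w_i v_i(M)-v_i(X_i))^+$. This makes $(\bX,\bs)$ a WPROP outcome, and Lemma~\ref{lemma:subsidy_bound} bounds $\sum_i s_i$ by $\tau(n)\cdot\range(\Ical)$.

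The only delicate point is making sure every lemma is applied to the same $\beta$ and the same $\bX$, and that each of their hypotheses holds for a canonical instance. In particular, Lemma~\ref{lemma:subsidy_bound} needs each one-item price change to be at most $\beta_i\cdot\range(\Ical)$, which holds because $|p(e)|=\beta_i|v_i(e)|$ for $e\in M_i$. The substantive work is already done in the earlier lemmas, so the proof should be a direct chaining of them.
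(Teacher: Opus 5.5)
Your proposal is correct and matches the paper's proof, which likewise takes the positive $\beta$ from the KKM step and the tie-broken, $\Gamma$-maximizing MBB-feasible allocation $\bX$, and then invokes the fPO lemma for $\beta>0$, the WPROP1 lemma, and the subsidy lemma. Your separate treatment of $M=\varnothing$ is unnecessary but harmless: consistency forces $v_i(M)\neq 0$, so a canonical instance always has $M\neq\varnothing$.
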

\begin{proof}
In \cref{ssec:fix_beta}, we fixed a vector $\beta \in \Delta_{n-1}$ such that $\beta > 0$.
In \cref{ssec:identify_allocation}, we fixed an allocation $\bX$ that is MBB-feasible for $\beta$.
By \cref{lemma:MBB-feasible_implies_PO_when_beta>0}, $\bX$ is fPO, and by \cref{lemma:WPROP1}, it is WPROP1.
By \cref{lemma:subsidy_bound}, $\subsidy(\bX, \Ical) \le \tau(|N|)\cdot\range(\Ical)$.
\end{proof}

\begin{theorem} \label{thm:existence}
For any mixed manna instance $\Ical$ with $n$ agents with additive valuation functions and general weights,
there always exists a WPROP1 allocation that can be made WPROP with total subsidy at most $\tau(n)\cdot\range(\Ical)$.
\end{theorem}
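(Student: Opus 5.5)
The plan is to derive Theorem~\ref{thm:existence} from Theorem~\ref{thm:canonical-existence} through the reduction of Section~\ref{sec:redn-canon}, handling the degenerate cases separately and then removing the small additive loss by a limiting argument.

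\textbf{Trivial cases.} If $n=1$, give every item to the single agent; this needs no subsidy. If $\range(\Ical)=0$, every complete allocation is WPROP and WPROP1 with zero subsidy. If Step~1 leaves $M'=\varnothing$, the allocation described there is WPROP with no subsidy.

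\textbf{Normalization.} Otherwise, rescale all valuations so that $\range(\Ical)=1$. This rescaling multiplies every subsidy by the same factor, so the final bound scales back to $\tau(n)\cdot\range(\Ical)$.

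\textbf{Applying the canonical result.} Construct $\Ical'$ with a parameter $\lambda$ below both thresholds of Step~4. By Lemma~\ref{lemma:constructed_instance_is_canonical}, $\Ical'$ is canonical with probability one, so some realization of the noise gives a canonical instance. Theorem~\ref{thm:canonical-existence} then yields a WPROP1 allocation $\bX'$ of $\Ical'$ with
\[
\subsidy(\bX',\Ical')\le \tau(|N'|)\cdot\range(\Ical').
\]
Here $|N'|\le n$, and $\tau$ is nondecreasing in $n$. Also $\range(\Ical')<1+\eta$, since values lie in $[-1,1+\eta)$. Lemma~\ref{lemma:reduction_to_canonical_instances} transfers $\bX'$ to a WPROP1 allocation $\bX$ of $\Ical$ with
\[
\subsidy(\bX,\Ical)\le \tau(n)(1+\eta)+\lambda\le \tau(n)+\lambda(\tau(n)/m+1).
\]

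\textbf{Removing the slack.} This bound holds for every sufficiently small $\lambda>0$. There are only finitely many allocations of $\Ical$, so some single allocation $\bX$ is produced for a sequence $\lambda\to 0$. That allocation is WPROP1, and letting $\lambda\to 0$ gives $\subsidy(\bX,\Ical)\le\tau(n)$.

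\textbf{Main obstacle.} The delicate step is this final limit. It needs the WPROP1 property to be a fixed, discrete property of $\bX$, so that it carries through the subsequence, and it needs the bound $\range(\Ical')\le 1+\eta$ to be used correctly. The checks that $\tau$ is monotone and that removing agents in Step~3 does not hurt are routine.
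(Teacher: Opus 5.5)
Your proposal is correct and follows essentially the same route as the paper. You handle the trivial cases, normalize, apply Theorem~\ref{thm:canonical-existence} to perturbed instances with $\lambda\to 0$, map back via Lemma~\ref{lemma:reduction_to_canonical_instances}, and use finiteness of allocations to remove the slack. The only difference is cosmetic: you apply the pigeonhole step to the mapped-back allocations of $\Ical$ rather than to the canonical allocations $\bX^{(t)}$, which has the mild advantage of not needing the back-map to be the same for every perturbation.
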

\begin{proof}
Let $\Ical = (N, M, v, w)$. Handle the one-agent and zero-range cases as in Section~\ref{sec:prelims}.
Let $\Icalhat \defeq (N', M', \vhat, w')$ be the instance
obtained after applying the first three steps of canonicalization to $\Ical$ (see \cref{sec:redn-canon}). If $M'=\varnothing$, Step 1 already gives the required allocation.
\WLoG{}, let $\range(\Ical) = 1$. Then $\range(\Icalhat)\leq1$.

Let $\mathcal{P}$ be the set of all disjoint $|N'|$-bundles of $M'$, and let
\begin{align*}
\lambda_0 &\defeq \frac{1}{2}\cdot\min\left\{\begin{array}{ll}
        \displaystyle \min\left\{ \left|v_i(S) - w_i\cdot v_i(M)\right|:
            i\in N, \; S \subseteq M,\; v_i(S) \neq w_i \cdot v_i(M) \right\},
        \\ \displaystyle \min\left\{ \sum_{i\in N'} \vhat_i(A_i) - \sum_{i\in N'} \vhat_i(B_i):
            A, B \in \mathcal{P}, \sum_{i\in N'} \vhat_i(A_i) > \sum_{i\in N'} \vhat_i(B_i) \right\}
    \end{array}\right\},
\\ \eta_0 &\defeq \frac{\lambda_0}{m}.
\end{align*}

For all $t \in \N$, let $\Ical^{(t)}$ be the instance obtained by randomly perturbing $\Icalhat$,
with the parameters $\lambda$ and $\eta$ set to $\lambda_t \defeq \lambda_0/(t+1)$
and $\eta_t \defeq \eta_0/(t+1)$, respectively.

For each $t$, fix a canonical realization, which exists by Lemma~\ref{lemma:constructed_instance_is_canonical}. Then instance $\Ical^{(t)}$ is canonical and $\range(\Ical^{(t)}) \le 1 + \eta_t$.
Thus, by \cref{thm:canonical-existence}, there is a WPROP1 allocation $\bX^{(t)}$ such that
\begin{equation*}
    \subsidy(\bX^{(t)}, \Ical^{(t)}) \le \tau(|N'|)\cdot \range(\Ical^{(t)}) \le \tau(n)\cdot (1+\eta_t).
\end{equation*}

Since the number of integral allocations is finite, there exists an allocation $\bX'$
that appears infinitely often in the sequence $(\bX^{(t)})_{t \in \N}$.
Let $T \defeq \{t \in \N: \bX^{(t)} = \bX'\}$ be the indices of $\bX'$.
By \cref{lemma:reduction_to_canonical_instances}, there exists an allocation $\bX$ for $\Ical$
that is WPROP1 and such that for all $t \in T$, we have
\begin{equation*}
    \subsidy(\bX, \Ical) \le \subsidy(\bX', \Ical^{(t)}) + \lambda_t \le \tau(n)\cdot (1 + \eta_t) + \lambda_t.
\end{equation*}

Thus, we have
\begin{equation*}
    \subsidy(\bX, \Ical) \le \inf_{t \in T} \left\{ \tau(n)\cdot (1 + \eta_t) + \lambda_t \right\} = \tau(n).\qedhere
\end{equation*}
\end{proof}

\section{Polynomial-time Algorithm for Few Types of Agents}
\label{sec:few-types}

We now consider the setting where the $n$ agents belong to $k$ different groups,
and the agents in each group have the same valuation function.
When $k$ is a constant, we design a polynomial-time algorithm
to achieve a subsidy of $\tau(k) + \tau(n)+\varepsilon$ for any prescribed $\varepsilon>0$. The running time is polynomial in the binary input length and $\log\left(1+1/\varepsilon\right)$.
If the compressed instance is already canonical, the $\varepsilon$ term is unnecessary, and we also get Pareto optimality.
For all computational statements, valuations and weights are rational numbers encoded in binary. We normalize to $\range\left(\Ical\right)\leq1$; subsidy bounds scale by the range in the original units.
We begin by looking at two special cases of the problem: $k = 1$ and $k = n$.

When $k = 1$, i.e., all agents have the same valuation function,
we give a simple greedy algorithm that achieves a subsidy of $\tau(n)$.
When $k = n$, we first canonicalize the instance (as in \cref{sec:redn-canon})
and then we show how to iterate over all integral fPO allocations in polynomial time.
Since there always exists an fPO allocation with subsidy at most $\tau(n)$
(by \cref{thm:canonical-existence}),
we can just compute the total subsidy required by each of these fPO allocations
and output the allocation with the least subsidy after mapping back to the original instance. A rational-separation argument in Section~\ref{sec:fpo-enum} makes the bound exact.

Then we show how to use these special cases to solve the general case.
We give a two-stage allocation algorithm, which first allocates the items among the $k$ groups,
and then further subdivides the items among the group members.
The first stage uses the fPO-enumeration algorithm, and the second stage uses the greedy algorithm for identical valuations.
We then prove that the total subsidy can be upper-bounded by the sum of the subsidies required in each stage.

\subsection{Greedy Algorithm for Identical Valuations}
\label{sec:greedy}

Let $\Ical \defeq (N, M, v, w)$ be a fair division instance,
where $v_i = u$ for all $i \in [n]$, i.e.,
$u(e)$ is the value every agent has for item $e \in M$.
Assume $u(e) \in [-1, 1]$ for all $e \in M$ \wLoG.

We first convert the instance (that contains a mixture of goods and chores) to one having only goods or only chores.
If $M=\varnothing$, return the empty allocation. Otherwise, we repeatedly take distinct items $g$ and $c$ such that $u(g)\geq0$ and $u(c)\leq0$,
and then combine them into a single item having value $u(g) + u(c)$.
When no more operations of this form are possible, we are left with
a set $M$ of multiple composite items such that exactly one of the following is true:
\begin{enumerate}
\item (only goods) $u(e) > 0$ for all $e \in M$.
\item (only chores) $u(e) < 0$ for all $e \in M$.
\item (single neutral item) $|M| = 1$ and $u(M) = 0$.
\end{enumerate}

The first two cases correspond to goods-only and chores-only, respectively.
Moreover, the proportional share of each agent remains the same,
and we continue to have $u(e) \in [-1, 1]$ after this preprocessing step.
In the third case, we can just allocate $M$ to an arbitrary agent, and we require a subsidy of 0.
Thus, if we can find an allocation $\bX$ with total subsidy $s$ in the preprocessed instance,
then $\bX$ would require a total subsidy of $s$ for the original instance too.

Now that we have reduced the problem to that of only goods or only chores,
we can use techniques from \citet{wine/WuZZ23} to get a subsidy of at most $\tau(n)$.
The key insight for goods is that we can allocate more than $|M| - n$ goods among the agents
such that no one gets more than their proportional share,
and this effectively reduces the problem to the simple case of fewer than $n$ goods.
A similar idea works for chores, too.
The algorithm runs in $O((m+n)\log(m+n))$ time.
See \cref{sec:greedy-details} for the full details.

\subsection{Enumerating fPO Allocations}
\label{sec:fpo-enum}

In this section, we consider the case when agents have general additive valuations
and prove that any non-degenerate strictly-objective fair division instance
having $n$ agents and $m$ items has at most $(m+1)^{\binom{n}{2}}$ (integral) fPO allocations,
and we give an $O\left(\left(m+1\right)^{\binom{n}{2}}\cdot\poly\left(m,n\right)\right)$-time algorithm to enumerate them, suppressing polynomial dependence on the input bit length. A one-agent instance is handled by assigning all items to that agent.
Our approach is based on a similar algorithm for chores by \citet{branzei2024algorithms} and for mixed manna by \cite{corr/abs-2507-03946}.

To build intuition, we first consider the setting of two agents.
If we want to distribute goods $M$ among two agents $i$ and $j$ \emph{efficiently},
an intuitive idea is to sort the goods in decreasing order of $v_{i}(\cdot)/v_{j}(\cdot)$,
and allocate a prefix of this sequence to $i$ and the remaining goods to $j$.
This is the basis of the well-known \emph{adjusted-winner} procedure \cite{brams1996fair}.
We extend this idea to the mixture of goods and chores.
We first sort the items in decreasing order of $|v_{i}(\cdot)|/|v_{j}(\cdot)|$.
Then, we show that any fPO allocation has the following structure:
there is a prefix $L$ of the sequence of items such that
agent $i$ receives all goods in $L$ and all chores outside $L$,
and agent $j$ receives all chores in $L$ and all goods outside $L$.
We defer the formal proof to \cref{sec:fpo-enum-details:n2}.
Thus, any canonical instance with two agents and $m$ items has at most $m+1$ fPO integral allocations,
all of which can be obtained by sorting the items and iterating over all prefixes.
In fact, every such prefix, including the empty prefix, gives an fPO allocation: choose a positive coefficient ratio $\beta_j/\beta_i$ strictly between the adjacent value ratios, or outside all ratios for an endpoint prefix. The resulting allocation maximizes a strictly positive weighted sum of utilities. For more than two agents, we retain the LP test for fPO (see \cref{thm:check-fpo-using-lp}).
We can extend the above idea to more than two agents.

For any fPO allocation $A$ and any pair $(i, j)$ of agents,
the allocation of $A_{i} \cup A_{j}$ among the two agents is also fPO,
and thus, can be obtained using the above adjusted-winner-like procedure.
Thus, we iterate over all prefixes among every pair of agents,
and define an agent's bundle to be the items she receives in \emph{every head-to-head fight}.
Note that the resulting allocation might be partial: if some items are unallocated, then we discard this allocation.
One can show that every fPO allocation can be obtained this way.
We defer the formal proof to \cref{sec:fpo-enum-details:n2+}.
Thus, there are at most $(m+1)^{\binom{n}{2}}$ fPO allocations,
and we can enumerate a superset of them in $O\left(\left(m+1\right)^{\binom{n}{2}}\cdot n^2\cdot m\cdot\log\left(m+1\right)\right)$ time.
Hence, we can enumerate all fPO allocations in $O\left(\left(m+1\right)^{\binom{n}{2}}\cdot\poly\left(m,n\right)\right)$ time.
By \cref{thm:canonical-existence}, for any canonical instance,
there is an fPO allocation whose total subsidy requirement is at most $\tau(n)$.

\paragraph{Exact computation for fixed $n$.}
The limiting argument in \cref{thm:existence} can be made finite for rational inputs. After normalization, let $Q$ be the product of the denominators of all valuations and weights in the original instance. Its binary encoding length is polynomial in the input length. Every nonzero absolute proportionality gap is at least $1/Q^2$, and every nonzero difference of partial-allocation welfare sums after Steps 1--3 is at least $1/Q$. Thus the smallness conditions in Section~\ref{sec:redn-canon} can be met without enumerating subsets or partial allocations.
Every original total subsidy is an integer multiple of $1/Q^2$. Since the denominator of $\tau\left(n\right)$ divides $4\cdot n$, any original subsidy strictly above $\tau\left(n\right)$ exceeds it by at least $1/\left(4\cdot n\cdot Q^2\right)$. Choose
\begin{equation*}
    \lambda=\frac{1}{8\cdot n\cdot Q^2\cdot\left(1+\tau\left(n\right)/m\right)},\qquad\eta=\frac{\lambda}{m}.
\end{equation*}
These parameters satisfy the corrected perturbation conditions and have polynomial encoding length. Enumerate the fPO allocations of the deterministically perturbed canonical instance and map them back using Lemma~\ref{lemma:reduction_to_canonical_instances}. At least one resulting allocation has original subsidy at most
\begin{equation*}
    \tau\left(n\right)\cdot\left(1+\eta\right)+\lambda
    <\tau\left(n\right)+\frac{1}{4\cdot n\cdot Q^2}.
\end{equation*}
The separation bound forces its subsidy to be at most $\tau\left(n\right)$. To retain WPROP1, discard mapped allocations that do not satisfy it; a WPROP1 candidate with the same subsidy bound exists by \cref{thm:canonical-existence,lemma:reduction_to_canonical_instances}, and WPROP1 can be checked in polynomial time. Selecting the remaining allocation with minimum original subsidy therefore proves the exact polynomial-time guarantee for fixed $n$.

\subsection{Two-Stage Allocation}
\label{sec:two-stage}

We now describe our two-stage allocation process.
Let $\Ical = (N, M, v, w)$ be a fair allocation instance.
Suppose there are $k$ groups of agents, where agents in each group have the same valuation function (and possibly different weights).
Let $N_j \subseteq N$ be agents of group $j \in [k]$,
and $u_j$ be their common valuation function.
Let $W_j \defeq \sum_{i \in N_j} w_i$ be the total weight of agents in group $j$.

Let $\Icalhat \defeq ([k], M, (u_j)_{j=1}^k, (W_j)_{j=1}^k)$ be the
\emph{compressed} instance corresponding to $\Ical$, where we regard agents of the same type as one compressed agent.
The first stage of our algorithm computes a low-subsidy allocation
$\Ahat = (\Ahat_1, \ldots, \Ahat_k)$ for the instance $\Icalhat$.
In the next stage, we create an instance for each group. Specifically, for $j \in [k]$,
let $\Ical^{(j)} \defeq (N_j, \Ahat_j, (v_i)_{i \in N_j}, (w_i/W_j)_{i \in N_j})$.
We then compute a low-subsidy allocation $A^{(j)}$ for each instance $\Ical^{(j)}$.
Finally, we construct an allocation $A$ for the original instance $\Ical$:
$A_i \defeq A^{(j)}_i$ for each $i \in N_j$ and $j \in [k]$.
We can upper bound the subsidy of $A$ in terms of
subsidies of $\Ahat$, $A^{(1)}$, $\ldots$, $A^{(k)}$.
We defer the proof to \cref{sec:two-stage-extra}.

\begin{restatable}{lemma}{thmHierSubsidy}
\label{thm:hier-subsidy}
We have:
$\subsidy(A, \Ical) \le \subsidy(\Ahat, \Icalhat) + \sum_{j=1}^k \subsidy(A^{(j)}, \Ical^{(j)})$.
\end{restatable}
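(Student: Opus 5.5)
The proof will be agent-by-agent: bound each agent's deficit in $\Ical$ by the sum of her deficit in the group instance and her share of the group's deficit in $\Icalhat$, then sum. Fix a group $j\in[k]$ and an agent $i\in N_j$, so $v_i=u_j$ and $A_i\subseteq \Ahat_j$. Her deficit in $\Ical$ splits as
\begin{equation*}
w_i\cdot u_j(M)-u_j(A_i) = \frac{w_i}{W_j}\Bigl(W_j\cdot u_j(M)-u_j(\Ahat_j)\Bigr) + \Bigl(\frac{w_i}{W_j}\cdot u_j(\Ahat_j)-u_j(A^{(j)}_i)\Bigr).
\end{equation*}
The second bracket is exactly agent $i$'s deficit in $\Ical^{(j)}$, whose item set is $\Ahat_j$ and whose weights are $w_i/W_j$. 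The first bracket is the group-$j$ deficit in $\Icalhat$, scaled by $w_i/W_j$.

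Next I use subadditivity of the positive part, $(a+b)^+\le a^++b^+$, together with $(c\,a)^+=c\,a^+$ for $c=w_i/W_j>0$. This gives
\begin{equation*}
\bigl(w_i u_j(M)-u_j(A_i)\bigr)^+ \le \frac{w_i}{W_j}\bigl(W_j u_j(M)-u_j(\Ahat_j)\bigr)^+ + \Bigl(\frac{w_i}{W_j}u_j(\Ahat_j)-u_j(A^{(j)}_i)\Bigr)^+ .
\end{equation*}

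Summing over $i\in N_j$ uses $\sum_{i\in N_j} w_i/W_j=1$. The first terms then collapse to the group-$j$ summand of $\subsidy(\Ahat,\Icalhat)$, and the second terms sum to $\subsidy(A^{(j)},\Ical^{(j)})$. Summing over $j\in[k]$ yields the lemma. The only bookkeeping to check is that $A$ is a complete allocation of $M$: the $\Ahat_j$ partition $M$ and each $A^{(j)}$ partitions $\Ahat_j$ among $N_j$. Also every $W_j>0$, since all $w_i>0$.

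The one step needing care is the decomposition itself: the two brackets must sum exactly to the original deficit, i.e.\ the $\frac{w_i}{W_j}u_j(\Ahat_j)$ terms cancel and $\frac{w_i}{W_j}\cdot W_j u_j(M)=w_i u_j(M)$. The rest is routine.
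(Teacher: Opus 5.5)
Your proposal is correct and follows essentially the same route as the paper: it splits each agent's deficit into $\frac{w_i}{W_j}S_j$ (with $S_j = W_j u_j(M) - u_j(\Ahat_j)$) plus her deficit in $\Ical^{(j)}$. It then applies $(a+b)^+\le a^+ + b^+$ and collapses the first terms using $\sum_{i\in N_j} w_i/W_j = 1$. Your remarks that $A$ is a complete allocation and that $W_j>0$ are harmless extra bookkeeping that the paper leaves implicit.
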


For stage 1, if we combine the canonicalization from \cref{sec:redn-canon}
with the fPO enumeration algorithm of \cref{sec:fpo-enum},
we get a subsidy of $\tau(k)(1+\eta) + \lambda$ in stage 1,
where $\lambda$ and $\eta$ can be made arbitrarily small
(at the expense of increased running time due to
increased bit-complexity of perturbed valuations).
If we use the greedy algorithm of \cref{sec:greedy} in stage 2,
we get a total subsidy of at most $\tau(n)$.
More precisely, stage 2 requires at most $\sum_{j=1}^k\tau\left(\left|N_j\right|\right)\leq\tau\left(n\right)$: for even $n$ this follows from $\tau\left(t\right)\leq t/4$, while for odd $n$ at least one group has odd size and its correction $1/\left(4\cdot\left|N_j\right|\right)$ is at least $1/\left(4\cdot n\right)$.
For any prescribed small constant $\varepsilon>0$, choose the stage-1 parameters so that $\tau\left(k\right)\cdot\eta+\lambda\leq\varepsilon$, in addition to the canonicalization conditions. Such rational parameters have encoding length polynomial in the input length and $\log\left(1/\varepsilon\right)$, by the same denominator bounds used in Section~\ref{sec:fpo-enum}, applied to the compressed instance. Thus,
\begin{equation*}
    \subsidy\left(A,\Ical\right)\leq\tau\left(k\right)+\sum_{j=1}^k\tau\left(\left|N_j\right|\right)+\varepsilon\leq\tau\left(k\right)+\tau\left(n\right)+\varepsilon.
\end{equation*}
The running time is polynomial in the input length and $\log\left(1/\varepsilon\right)$ for fixed $k$. If the compressed instance is already canonical, no perturbation is needed and the same bound holds without $\varepsilon$. In that case the final allocation is also PO: the positive coefficients supporting the stage-1 allocation can be assigned to every member of the corresponding group, and subdividing among agents with identical valuations preserves optimality for that weighted welfare objective.

\section{Conclusion}
In this work, we show that a total subsidy of $\tau(n)$ is sufficient to guarantee the existence of WPROP allocations for indivisible mixed manna, which matches the lower bound established by~\citet{wine/WuZZ23}.
Our results leave several natural questions open.
An immediate direction is whether such allocations can be computed efficiently for general $n$.
Another worthwhile avenue is determining tight subsidy bounds for other fairness notions, such as weighted envy-freeness and the maximin share fairness.
More broadly, our work highlights the potential of the KKM-theorem-based framework in studying share-based fairness. We hope to see this approach applied to additional fairness concepts in future research.

\section*{Declaration for the Use of AI}
The first draft of this paper (with the same set of main results) was finished in January 2026.
The main algorithmic framework and analysis, including the canonical reductions, the use of KKM, and the definition of ``$i$-biased'', were discovered fully by humans.
The manuscript was also written entirely by the authors (other than minor proofreading by AI), who take full responsibility for its content.


\phantomsection
\addcontentsline{toc}{section}{References}
\bibliography{wprop}

@misc{teh2026weightedfairdivisionindivisible,
      title={Weighted Fair Division of Indivisible Mixed Manna}, 
      author={Nicholas Teh},
      year={2026},
      eprint={2609.01580},
      archivePrefix={arXiv},
      primaryClass={cs.GT},
      url={https://arxiv.org/abs/2609.01580}, 
}

@misc{lin2026envyfreenessadditivemixedmanna,
      title={Almost Envy-Freeness for Additive Mixed Manna with Entitlements: Deterministic and Randomized Guarantees}, 
      author={Zehan Lin and Shengxin Liu and Biaoshuai Tao and Shengwei Zhou},
      year={2026},
      eprint={2609.02724},
      archivePrefix={arXiv},
      primaryClass={cs.GT},
      url={https://arxiv.org/abs/2609.02724}, 
}

@article{orl/ChooLSTZ24,
  author       = {Davin Choo and
                  Yan Hao Ling and
                  Warut Suksompong and
                  Nicholas Teh and
                  Jian Zhang},
  title        = {Envy-free house allocation with minimum subsidy},
  journal      = {Oper. Res. Lett.},
  volume       = {54},
  pages        = {107103},
  year         = {2024}
}

@article{IgarshiM26,
  author       = {Ayumi Igarashi and             Fr{\'{e}}d{\'{e}}ric Meunier},
  title        = {Fair and efficient allocation of indivisible items under category constraints},
  journal      = {Games and Economic Behavior},
  year         = {2026}
}

@inproceedings{aaai/SpringerHY24,
  author       = {Max Springer and
                  MohammadTaghi Hajiaghayi and
                  Hadi Yami},
  title        = {Almost Envy-Free Allocations of Indivisible Goods or Chores with Entitlements},
  booktitle    = {{AAAI}},
  pages        = {9901--9908},
  publisher    = {{AAAI} Press},
  year         = {2024}
}

@inproceedings{ijcai/WuX025,
  author       = {Xiaowei Wu and
                  Quan Xue and
                  Shengwei Zhou},
  title        = {A Little Subsidy Ensures {MMS} Allocation for Three Agents},
  booktitle    = {{IJCAI}},
  pages        = {4073--4081},
  publisher    = {ijcai.org},
  year         = {2025}
}

@article{corr/abs-2510-13633,
  author       = {Pooja Kulkarni and
                  Ruta Mehta and
                  Vishnu V. Narayan and
                  Tomasz Ponitka},
  title        = {Online Fair Division With Subsidy: When Do Envy-Free Allocations Exist,
                  and at What Cost?},
  journal      = {CoRR},
  volume       = {abs/2510.13633},
  year         = {2025}
}

@inproceedings{wine/CaragiannisI21,
  author       = {Ioannis Caragiannis and
                  Stavros Ioannidis},
  title        = {Computing Envy-Freeable Allocations with Limited Subsidies},
  booktitle    = {{WINE}},
  series       = {Lecture Notes in Computer Science},
  volume       = {13112},
  pages        = {522--539},
  publisher    = {Springer},
  year         = {2021}
}

@article{ai/KawaseMSTY25,
  author       = {Yasushi Kawase and
                  Kazuhisa Makino and
                  Hanna Sumita and
                  Akihisa Tamura and
                  Makoto Yokoo},
  title        = {Towards optimal subsidy bounds for envy-freeable allocations},
  journal      = {Artif. Intell.},
  volume       = {348},
  pages        = {104406},
  year         = {2025}
}

@inproceedings{ijcai/BarmanKNS22,
  author       = {Siddharth Barman and
                  Anand Krishna and
                  Yadati Narahari and
                  Soumyarup Sadhukhan},
  title        = {Achieving Envy-Freeness with Limited Subsidies under Dichotomous Valuations},
  booktitle    = {{IJCAI}},
  pages        = {60--66},
  publisher    = {ijcai.org},
  year         = {2022}
}

@inproceedings{aaai/000121,
  author       = {Haris Aziz},
  title        = {Achieving Envy-freeness and Equitability with Monetary Transfers},
  booktitle    = {{AAAI}},
  pages        = {5102--5109},
  publisher    = {{AAAI} Press},
  year         = {2021}
}

@article{corr/abs-2505-23251,
  author       = {Yuanyuan Wang and
                  Tianze Wei},
  title        = {Achieving Equitability with Subsidy},
  journal      = {CoRR},
  volume       = {abs/2505.23251},
  year         = {2025}
}

@article{teco/CaragiannisKMPS19,
  author    = {Ioannis Caragiannis and
               David Kurokawa and
               Herv{\'{e}} Moulin and
               Ariel D. Procaccia and
               Nisarg Shah and
               Junxing Wang},
  title     = {The Unreasonable Fairness of Maximum Nash Welfare},
  journal   = {{ACM} Trans. Economics and Comput.},
  volume    = {7},
  number    = {3},
  pages     = {12:1--12:32},
  year      = {2019}
}

@String{EC       = {Conf.\ Economics and Computation (EC)}}

@article{geb/AzizB22,
  author       = {Haris Aziz and
                  Florian Brandl},
  title        = {The vigilant eating rule: {A} general approach for probabilistic economic
                  design with constraints},
  journal      = {Games Econ. Behav.},
  volume       = {135},
  pages        = {168--187},
  year         = {2022}
}

@article{foley1967resource,
	author    = {Duncan Foley},
	title     = {Resource Allocation and the Public Sector},
	journal   = {Yale Economic Essays},
	pages     = {45--98},
	year      = {1967}
}

@article{steihaus1948problem,
	title={The problem of fair division},
	author={Steinhaus, Hugo},
	journal={Econometrica},
	volume={16},
	pages={101--104},
	year={1948}
}

@article{aziz2021fair,
title = {Fair allocation of indivisible goods and chores},
author = {Aziz, Haris and Caragiannis, Ioannis and Igarashi, Ayumi and Walsh, Toby},
journal = {Autonomous Agents and Multi-Agent Systems},
year = {2021},
volume = {36},
number = {1},
doi = {10.1007/s10458-021-09532-8}
}

@inproceedings{bhaskar2021approximate,
title = {On Approximate Envy-Freeness for Indivisible Chores and Mixed Resources},
author = {Bhaskar, Umang and Sricharan, A. R. and Vaish, Rohit},
booktitle = {APPROX},
year = {2021},
doi = {10.4230/LIPIcs.APPROX/RANDOM.2021.1}
}

@inproceedings{Barman18FFEA,
 author = {Barman, Siddharth and Krishnamurthy, Sanath Kumar and Vaish, Rohit},
 title = {Finding Fair and Efficient Allocations},
 booktitle = {Proceedings of the 19th ACM Conference on Economics and Computation (EC)},
  year = {2018},
 pages = {557--574}
 }

@article{orl/AzizMS20,
  author    = {Haris Aziz and
               Herv{\'{e}} Moulin and
               Fedor Sandomirskiy},
  title     = {A polynomial-time algorithm for computing a Pareto optimal and almost
               proportional allocation},
  journal   = {Oper. Res. Lett.},
  volume    = {48},
  number    = {5},
  pages     = {573--578},
  year      = {2020}
}

@article{ai/AzizLMWZ24,
  author       = {Haris Aziz and
                  Bo Li and
                  Herv{\'{e}} Moulin and
                  Xiaowei Wu and
                  Xinran Zhu},
  title        = {Almost proportional allocations of indivisible chores: Computation,
                  approximation and efficiency},
  journal      = {Artif. Intell.},
  volume       = {331},
  pages        = {104118},
  year         = {2024}
}

@article{teco/ChakrabortyISZ21,
  author    = {Mithun Chakraborty and
               Ayumi Igarashi and
               Warut Suksompong and
               Yair Zick},
  title     = {Weighted Envy-freeness in Indivisible Item Allocation},
  journal   = {{ACM} Trans. Economics and Comput.},
  volume    = {9},
  number    = {3},
  pages     = {18:1--18:39},
  year      = {2021}
}

@article{ipl/Suksompong25,
  author       = {Warut Suksompong},
  title        = {Weighted fair division of indivisible items: {A} review},
  journal      = {Inf. Process. Lett.},
  volume       = {187},
  pages        = {106519},
  year         = {2025}
}

@article{jair/LiuLSW24,
  author       = {Shengxin Liu and
                  Xinhang Lu and
                  Mashbat Suzuki and
                  Toby Walsh},
  title        = {Mixed Fair Division: {A} Survey},
  journal      = {J. Artif. Intell. Res.},
  volume       = {80},
  pages        = {1373--1406},
  year         = {2024}
}

@article{sigecom/AzizLMW22,
  author       = {Haris Aziz and
                  Bo Li and
                  Herv{\'{e}} Moulin and
                  Xiaowei Wu},
  title        = {Algorithmic fair allocation of indivisible items: a survey and new
                  questions},
  journal      = {SIGecom Exch.},
  volume       = {20},
  number       = {1},
  pages        = {24--40},
  year         = {2022}
}

@article{ai/AmanatidisABFLMVW23,
  author       = {Georgios Amanatidis and
                  Haris Aziz and
                  Georgios Birmpas and
                  Aris Filos{-}Ratsikas and
                  Bo Li and
                  Herv{\'{e}} Moulin and
                  Alexandros A. Voudouris and
                  Xiaowei Wu},
  title        = {Fair division of indivisible goods: Recent progress and open questions},
  journal      = {Artif. Intell.},
  volume       = {322},
  pages        = {103965},
  year         = {2023}
}

@inproceedings{sagt/HalpernS19,
  author    = {Daniel Halpern and
               Nisarg Shah},
  title     = {Fair Division with Subsidy},
  booktitle = {{SAGT}},
  series    = {Lecture Notes in Computer Science},
  volume    = {11801},
  pages     = {374--389},
  publisher = {Springer},
  year      = {2019}
}

@inproceedings{ifaamas/ElmalemGS25,
  author       = {Noga Klein Elmalem and
                  Rica Gonen and
                  Erel Segal{-}Halevi},
  title        = {Weighted Envy Freeness With Bounded Subsidies},
  booktitle    = {{AAMAS}},
  pages        = {2504--2506},
  publisher    = {International Foundation for Autonomous Agents and Multiagent Systems
                  / {ACM}},
  year         = {2025}
}

@article{corr/abs-2502-09006,
  author       = {Noga Klein Elmalem and
                  Haris Aziz and
                  Rica Gonen and
                  Xin Huang and
                  Kei Kimura and
                  Indrajit Saha and
                  Erel Segal{-}Halevi and
                  Zhaohong Sun and
                  Mashbat Suzuki and
                  Makoto Yokoo},
  title        = {Whoever Said Money Won't Solve All Your Problems? Weighted Envy-free
                  Allocation with Subsidy},
  journal      = {CoRR},
  volume       = {abs/2502.09006},
  year         = {2025}
}

@inproceedings{ifaamas/0001HKS0SY25,
  author       = {Haris Aziz and
                  Xin Huang and
                  Kei Kimura and
                  Indrajit Saha and
                  Zhaohong Sun and
                  Mashbat Suzuki and
                  Makoto Yokoo},
  title        = {Weighted Envy-free Allocation with Subsidy},
  booktitle    = {{AAMAS}},
  pages        = {2417--2419},
  publisher    = {International Foundation for Autonomous Agents and Multiagent Systems
                  / {ACM}},
  year         = {2025}
}

@article{corr/LiSSX25,
  author       = {Bo Li and
                  Ankang Sun and
                  Mashbat Suzuki and
                  Shiji Xing},
  title        = {On the Subsidy of Envy-Free Orientations in Graphs},
  journal      = {CoRR},
  volume       = {abs/2502.13671},
  year         = {2025}
}

@article{corr/DaiCWXZ24,
  author       = {Sijia Dai and
                  Yankai Chen and
                  Xiaowei Wu and
                  Yicheng Xu and
                  Yong Zhang},
  title        = {Weighted Envy-Freeness in House Allocation},
  journal      = {CoRR},
  volume       = {abs/2408.12523},
  year         = {2024}
}

@article{knaster1929,
  title={Ein Beweis des Fixpunktsatzes f{\"u}r n-dimensionale Simplexe},
  author={Knaster, Bronis{\l}aw and Kuratowski, Kazimierz and Mazurkiewicz, Stefan},
  journal={Fundamenta Mathematicae},
  volume={14},
  number={1},
  pages={132--137},
  year={1929},
  publisher={Polska Akademia Nauk. Instytut Matematyczny PAN}
}

@inproceedings{ijcai/WuXZ25,
  author       = {Xiaowei Wu and
                  Quan Xue and
                  Shengwei Zhou},
  title        = {Revisiting Proportional Allocation with Subsidy: Simplification and
                  Improvements},
  booktitle    = {{IJCAI}},
  pages        = {4082--4090},
  publisher    = {ijcai.org},
  year         = {2025}
}

@inproceedings{garg2025wprop+PO4chores,
author = {Garg, Jugal and Sharma, Eklavya and Wu, Xiaowei},
title = {Proportional and Pareto-Optimal Allocation of Chores with Subsidy},
booktitle = {Autonomous Agents and Multiagent Systems (AAMAS)},
year = {2026},
pages = {2196–2204},
doi = {10.65109/UYKY7130}
}

@inproceedings{wine/WuZ24,
  author       = {Xiaowei Wu and
                  Shengwei Zhou},
  title        = {Tree Splitting Based Rounding Scheme for Weighted Proportional Allocations
                  with Subsidy},
  booktitle    = {{WINE}},
  series       = {Lecture Notes in Computer Science},
  volume       = {15534},
  pages        = {295--313},
  publisher    = {Springer},
  year         = {2024}
}

@inproceedings{wine/WuZZ23,
  author       = {Xiaowei Wu and
                  Cong Zhang and
                  Shengwei Zhou},
  title        = {One Quarter Each (on Average) Ensures Proportionality},
  booktitle    = {{WINE}},
  series       = {Lecture Notes in Computer Science},
  volume       = {14413},
  pages        = {582--599},
  publisher    = {Springer},
  year         = {2023}
}

@inproceedings{sigecom/BrustleDNSV20,
  author       = {Johannes Brustle and
                  Jack Dippel and
                  Vishnu V. Narayan and
                  Mashbat Suzuki and
                  Adrian Vetta},
  title        = {One Dollar Each Eliminates Envy},
  booktitle    = {{EC}},
  pages        = {23--39},
  publisher    = {{ACM}},
  year         = {2020}
}

@inproceedings{corr/abs-2507-03946,
  author       = {Siddharth Barman and
                  Vishwa Prakash HV and
                  Aditi Sethia and
                  Mashbat Suzuki},
  title        = {Fair and Efficient Allocation of Indivisible Mixed Manna},
  booktitle = {WINE},
  year         = {2025}
}

@article{ai/WuZZ25,
  author       = {Xiaowei Wu and
                  Cong Zhang and
                  Shengwei Zhou},
  title        = {Weighted {EF1} allocations for indivisible chores},
  journal      = {Artif. Intell.},
  volume       = {347},
  pages        = {104386},
  year         = {2025}
}

@inproceedings{corr/abs-2507-09544,
  author       = {Ryoga Mahara},
  title        = {Existence of Fair and Efficient Allocation of Indivisible Chores},
  booktitle      = {SODA},
  year         = {2026}
}

@article{corr/abs-2509-18673,
  author       = {Siddharth Barman and
                  Paritosh Verma},
  title        = {Proximately Envy-Free and Efficient Allocation of Mixed Manna},
  journal      = {CoRR},
  volume       = {abs/2509.18673},
  year         = {2025}
}

@article{branzei2024algorithms,
title = {Algorithms for Competitive Division of Chores},
author = {Br{\^a}nzei, Simina and Sandomirskiy, Fedor},
journal = {Mathematics of Operations Research},
year = {2024},
volume = {49},
pages = {398--429},
publisher = {INFORMS},
doi = {10.1287/moor.2023.1361}
}

@book{brams1996fair,
title = {Fair Division: From Cake-Cutting to Dispute Resolution},
author = {Brams, S.J. and Taylor, A.D.},
year = {1996},
publisher = {Cambridge University Press},
url = {https://books.google.com/books?id=cLUA-sRhJ5QC}
}
\bibliographystyle{alpha}

\newpage

\appendix
\crefalias{section}{appendix}
\crefalias{subsection}{appendix}
\crefalias{subsubsection}{appendix}

\section{Missing Proofs from Section~\ref{sec:redn-canon} and Section~\ref{sec:tight-subsidy}}
\label{sec:missing_proofs}

\begin{proofof}{Lemma~\ref{lemma:constructed_instance_is_canonical}}
    We first show that the perturbation does not change the sign of any value $\hat{v}_i(e)$ if it is non-zero.
    For any item $e$ and agent $i$, if $\hat{v}_i(e) < 0$, then we also have $v'_i(e) < 0$ because $\eta\leq\lambda<|\hat{v}_i(e)|$, by the partial-welfare separation condition.
    If $\hat{v}_i(e) \geq 0$, then we have $v'_i(e) > 0$ since the noises are positive.
    Therefore, the resulting instance $\cI' = (N', M', v', w')$ is strictly objective.

    The instance is also consistent because for any agent $i$,
    \begin{itemize}
        \item if $\hat{v}_i(M') \geq 0$, then we have $v'_i(M') > 0$;
        \item if $\hat{v}_i(M') < 0$, then we have $v'_i(M') < 0$ because the total noise is strictly less than $\lambda<|\hat{v}_i(M')|$, again by the partial-welfare separation condition.
    \end{itemize}

    Finally, fix a simple alternating cycle as in Definition~\ref{defn:degen}. If its two products are equal under $v'$, then
    \begin{equation*}
        \eta_{i_0 e_1} = \frac{\prod_{t=1}^k  v'_{i_t}(e_t)}{\prod_{t=2}^k  v'_{i_{t-1}}(e_t)} - \hat{v}_{i_0}(e_1),
    \end{equation*}
    which happens with probability $0$ conditioned on any realization of the random noises of other values.
    Since the number of cycles is finite, a union bound implies that the instance $\cI'$ is non-degenerate with probability $1$.
\end{proofof}

\begin{proofof}{Lemma~\ref{lemma:reduction_to_canonical_instances}}
    Given any allocation $\bX'$ for $\cI'$, we construct an allocation $\bX$ for $\cI$ as follows.
    We initialize $X_i = X'_i$ for all $i\in N'$ and $X_i = \varnothing$ for all $i\in N\setminus N'$.
    For any $e\in X_i$, if $v_i(e) < 0$ but $v'_i(e) > 0$, then we know that during the construction of instance $\cI'$, $v_i(e)$ was first artificially set to $0$, and then a positive random noise was added.
    In this case, there must exist $j\in N$ such that $v_j(e) > 0$.
    We reallocate $e$ from $X_i$ to $X_j$, which increases both $v_i(X_i)$ and $v_j(X_j)$.
    Note that after these reallocations, for all $e\in X_i\cap M'$ we have $v_i(e) = \hat{v}_i(e)$.
    Finally, for all $e\in M\setminus M'$ (which are the items satisfying that $v_i(e)\leq 0$ for all $i$), we allocate $e$ to an arbitrary agent $i\in N$ with $v_i(e) = 0$.
    Clearly, $\bX$ is a complete allocation for instance $\cI$.

    Every removed agent $i\in N\setminus N'$ receives only nonnegative-value items and satisfies $v_i(M)\leq\hat{v}_i(M')<0$, so she is WPROP without subsidy.
    Fix any remaining agent $i\in N'$.
    We have
    \begin{equation*}
        v_i(X_i) = \hat{v}_i(X_i\cap M') \geq \hat{v}_i(X'_i) \geq v'_i(X'_i) - \eta\cdot |X'_i|,
    \end{equation*}
    where the first inequality holds because all items added to $X'_i$ have nonnegative value, and every item removed from $X'_i$ has zero value under $\hat{v}_i$. Also, since the total perturbation on any bundle is strictly less than $\lambda$, we have $v_i(X_i)>v'_i(X'_i)-\lambda$.
    Recall that during the construction of instance $\cI'$, we did not decrease the value of any agent on any item.
    Therefore, we have $w'_i\cdot v'_i(M') \geq w_i\cdot v_i(M)$ because:
    \begin{itemize}
        \item if $N'\neq N$, then $w'_i\geq w_i$ and $v'_i(M')>\hat{v}_i(M')\geq0$, with $\hat{v}_i(M')\geq v_i(M)$;
        \item if $N'=N$, then $w'_i=w_i$ and $v'_i(M')>v_i(M)$.
    \end{itemize}

    In summary, we have
    \begin{align*}
        \subsidy(\bX, \cI) &= \sum_{i\in N} \left( w_i\cdot v_i(M) - v_i(X_i) \right)^+
        \leq \sum_{i\in N'} \left( w'_i\cdot v'_i(M') - v'_i(X'_i) + \eta\cdot |X'_i| \right)^+
        \\ & \leq \subsidy(\bX', \cI') + \eta\cdot m
        \leq \subsidy(\bX', \cI') + \lambda.
    \end{align*}

    By choosing an arbitrarily small $\lambda$, the difference in subsidy is negligible.

    It remains to show that allocation $\bX$ is WPROP1 for instance $\cI$ if $\bX'$ is WPROP1 for instance $\cI'$.
    Removed agents are already WPROP, so fix $i\in N'$. By the definition of WPROP1, we have three cases.
    \begin{itemize}
        \item Suppose $v'_i(X'_i) \geq w'_i\cdot v'_i(M')$. Then we have $v_i(X_i) \geq w_i\cdot v_i(M)$ because otherwise by definition of $\lambda$ and $\eta$, we get
        \begin{equation*}
            v_i(X_i) \leq w_i\cdot v_i(M) - \lambda
            \leq w'_i\cdot v'_i(M') - \lambda
            \leq v'_i(X'_i) - \lambda,
        \end{equation*}
        contradicting $v_i(X_i)>v'_i(X'_i)-\lambda$.

        \item Suppose there exists a good $e\in M'\setminus X'_i$ (with $v'_i(e) > 0$) such that $v'_i(X'_i + e) \geq w'_i\cdot v'_i(M')$.
        If $e\notin M\setminus X_i$, then we have $e\in X_i$, i.e., $e$ is added to $X_i$ during the construction of $\bX$.
        Note that in this case we must have $v_i(e) > 0$, and thus $v_i(e) > v'_i(e) - \eta$.
        Then we have
        \begin{equation*}
            v_i(X_i) > v'_i(X'_i + e) - \eta\cdot (|X'_i|+1)
            \geq w_i\cdot v_i(M) - \lambda,
        \end{equation*}
        which (by definition of $\lambda$) implies $v_i(X_i) \geq w_i\cdot v_i(M)$.
        If $e\in M\setminus X_i$, then
        \begin{itemize}
            \item If $v_i(e) > 0$, then the above lower bounds apply to $v_i(X_i + e)$, which implies WPROP1.
            \item If $v_i(e) \leq 0$, then we have $v'_i(e) < \eta$, and the above lower bounds still apply to $v_i(X_i)$, which implies WPROP.
        \end{itemize}

        \item Suppose there exists a chore $e\in X'_i$ (with $v'_i(e) < 0$) such that $v'_i(X'_i - e) \geq w'_i\cdot v'_i(M')$.
        Note that in this case, we must have $e\in X_i$, as all items removed from $X'_i$ during the construction of $\bX$ have positive value under $v'_i$.
        Since $v_i(e)=\hat{v}_i(e)$ and the remaining bundle contains at most $m-1$ items, we have
        \begin{equation*}
            v_i(X_i - e) \geq v'_i(X'_i - e) - \eta\cdot (|X'_i|-1)
            > w_i\cdot v_i(M) - \lambda,
        \end{equation*}
        which implies $v_i(X_i - e) \geq w_i\cdot v_i(M)$.
    \end{itemize}

    Therefore, the allocation $\bX$ is WPROP1 for instance $\cI$.
\end{proofof}

\begin{proofof}{Lemma~\ref{lemma:closedness_of_C}}
Let $B = (\beta^{(1)}, \beta^{(2)}, \ldots)$ be any sequence
where $\beta^{(t)} \in C_i$ for every $t \in \N$.
We show that if $B$ converges to $\beta$, then $\beta \in C_i$.

For any $t \in \N$, $\beta^{(t)} \in C_i$ implies that
there exists an allocation $\bX^{(t)}$ such that $\bX^{(t)}$ is MBB-feasible for $\betahat^{(t)}$
and $i \in \argmin_{j \in N} \{\delta_j(\bX^{(t)}, \betahat^{(t)}) \}$.
Since the set of integral allocations is finite, there must exist an allocation $\bX$ that repeats infinitely often in the sequence $(\bX^{(t)})_{t \in \N}$.
Let $(t_1, t_2, \ldots)$ be the indices of $\bX$ in this sequence.
The subsequence $(\beta^{(t_1)}, \beta^{(t_2)}, \ldots)$ of $B$ also converges to $\beta$.
Thus, let us assume \wLoG{} that $\bX^{(t)} = \bX$ for all $t \in \N$.

Let $p^{(t)}(e) \defeq \max_{i \in N} \{ \betahat^{(t)}_i \cdot v_i(e) \}$.
Since $B$ converges to $\beta$, we get that the sequence $(p^{(t)})_{t \in \N}$
converges pointwise to $p(e) \defeq \max_{i \in N} \{ \betahat_i \cdot v_i(e) \}$.
Since $\betahat \ge \eps$ and $\betahat^{(t)} \ge \eps$ for all $t \in \N$,
for every $i\in N$, the sequence $(p^{(t)}(e)/\betahat^{(t)}_i)_{t \in \N}$ converges to $p(e)/\betahat_i$.
For any $t \in \N$, the allocation $\bX$ is MBB-feasible for $\betahat^{(t)}$,
i.e., for all $i \in N$ and $e \in M$, we have $x_{i,e} > 0 \implies p^{(t)}(e) = \betahat^{(t)}_i \cdot v_i(e)$.
Thus, we get that $x_{i,e} > 0 \implies p(e) = \betahat_i \cdot v_i(e)$,
so $\bX$ is MBB-feasible for $\betahat$ too.

We now show that $i \in \argmin_{j \in N} \{\delta_j(\bX, \betahat)\}$, which will establish that $\beta \in C_i$.
For all $t \in \N$, let
\begin{align*}
\phi^{(t)} &\defeq \betahat^{(t)}_i\cdot \left(w_i\cdot v_i(M) - v_i(X_i) \right),
\\ \theta^{(t)} &\defeq \min_{j \in N} \left\{ \betahat^{(t)}_j\cdot (w_j\cdot v_j(M) - v_j(X_j)) \right\}.
\end{align*}
Then the sequence $(\phi^{(t)})_{t \in \N}$ converges to $\phi$,
and the sequence $(\theta^{(t)})_{t \in \N}$ converges to $\theta$, where
\begin{align*}
\phi &\defeq \betahat_i\cdot \left(w_i\cdot v_i(M) - v_i(X_i) \right),
\\ \theta &\defeq \min_{j \in N}\left\{ \betahat_j\cdot \left(w_j\cdot v_j(M) - v_j(X_j)\right) \right\}.
\end{align*}
Moreover, $\phi^{(t)} \le \theta^{(t)}$ for all $t \in \N$,
since $i \in \argmin_{j \in N} \{\delta_j(\bX, \betahat^{(t)})\}$.
Thus, $\phi \le \theta$, and so $i \in \argmin_{j \in N} \{\delta_j(\bX, \betahat)\}$.
Hence, $\beta \in C_i$, and so, $C_i$ is closed.
\end{proofof}

\begin{proofof}{Lemma~\ref{lemma:no_cycle_in_MBB_graph}}
    Suppose there exists a simple alternating cycle $\left(i_0,e_1,i_1,e_2,\ldots,e_k,i_k\right)$, where $k\geq2$, $i_k=i_0$, the agents $i_0,\ldots,i_{k-1}$ are distinct, and the items $e_1,\ldots,e_k$ are distinct.
    Then by definition of MBB graph, for each item $e_t$ in the cycle, $e_t$ is an MBB item for both agent $i_{t-1}$ and agent $i_t$, which implies that
    \begin{equation*}
        p(e_t) = \beta_{i_{t-1}}\cdot v_{i_{t-1}}(e_t) = \beta_{i_{t}}\cdot v_{i_{t}}(e_t).
    \end{equation*}

    Therefore we have
    \begin{equation*}
        \prod_{t=1}^k \frac{v_{i_{t-1}}(e_t)}{v_{i_t}(e_t)} = \prod_{t=1}^k \frac{\beta_{i_{t}}}{\beta_{i_{t-1}}} = 1,
    \end{equation*}
    which contradicts non-degeneracy.
\end{proofof}

\section{Details on the Algorithm for a Few Types of Agents}
\label{sec:few-types-details}

\subsection{Greedy Algorithm for Identical Valuations}
\label{sec:greedy-details}

We describe here an algorithm for obtaining a low subsidy allocation when the agents have the same valuation functions,
and all the items are goods or all the items are chores.
We have $n$ agents and a set $M$ of items.
Each agent $i$ has weight $w_i > 0$.
The agents have a common valuation function $u$.

\paragraph{Step 1: Allocate most items.}
Create $n$ \emph{bins}, where the $i\Th$ bin has a capacity of $w_i\cdot |u(M)|$.
Each item $e \in M$ has a \emph{size} $|u(e)|$.
A set $S \subseteq M$ of items is said to \emph{fit} in a bin
if the sum of their sizes is at most the bin's capacity.
Now we repeatedly pick an item and place it in a bin if it fits,
and we stop when this operation is no longer feasible.
Let $B_i$ be the set of items in the $i\Th$ bin
and let $\Mhat \defeq M \setminus \bigcup_{i=1}^n B_i$ be the items that remain outside the bins.
Let $z_i \defeq w_i\cdot |u(M)| - |u(B_i)|$ be the remaining space in the $i\Th$ bin.
If $\Mhat=\varnothing$, every bin is exactly filled, so return $A_i=B_i$ for all $i$; no subsidy is needed.
First, note that
\[ \sum_{i=1}^n z_i = |u(M)| - \sum_{i=1}^n |u(B_i)| = |u(\Mhat)|. \]
Next, since no item can be placed in any bin, we have
$z_i < \min_{e \in \Mhat}\{ |u(e)| \}$ for all $i \in N$. Thus,
\begin{align*}
|u(\Mhat)| = \sum_{i=1}^n z_i < n\cdot \min_{e \in \Mhat}\{ |u(e)| \} \le n\cdot  \frac{|u(\Mhat)|}{|\Mhat|},
\end{align*}
which implies that $|\Mhat| < n$.

\paragraph{Step 2: Allocate the remaining items.}
Assume \wLoG{} that $z_1 \ge \cdots \ge z_n$. Let $\Mhat = \{e_1, \ldots, e_t\}$.
We assign an item from $\Mhat$ to each of the first $t$ agents.
Specifically, we output the allocation $A = (A_1, \ldots, A_n)$, where
\[ A_i = \begin{cases}B_i \cup \{e_i\} & \text{ if } i \le t \\ B_i & \text{ otherwise}\end{cases}. \]

\begin{lemma}
The total subsidy required to make allocation $A$ proportional is at most $\tau(n)$.
\end{lemma}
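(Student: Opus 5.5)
We handle goods and chores separately, with range at most $1$ so that $|u(e)|\le 1$ for every item. The key is to express each agent's deficit in terms of the leftover bin space $z_i$ and then optimize over the sizes of the $t<n$ remaining items.

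\emph{Goods case} ($u>0$). The share of agent $i$ is $w_i u(M)=u(B_i)+z_i$, so her deficit under $A$ is $z_i-u(e_i)$ for $i\le t$ and $z_i$ for $i>t$. Since $z_i<\min_{e\in\Mhat}u(e)\le u(e_i)$, every agent $i\le t$ needs no subsidy. The total subsidy is therefore $\sum_{i>t} z_i$, the sum of the $n-t$ smallest $z$'s. Because $z$ is sorted in decreasing order, this sum is at most $\frac{n-t}{n}\sum_i z_i=\frac{n-t}{n}u(\Mhat)$. Each $z_i$ is also less than $\min_e u(e)\le 1$, so the sum is less than $n-t$. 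I will derive the tighter bound used in the chores case below and apply the same reasoning here. Writing $\mu=\min_{e\in\Mhat}u(e)\le 1$, we have $\sum_{i>t} z_i\le \frac{n-t}{n}\sum_i z_i$ and $\sum_i z_i=u(\Mhat)\le t$. This gives the bound $\frac{t(n-t)}{n}\le\tau(n)$, where the final inequality comes from maximizing $t(n-t)/n$ over integers $t$.

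\emph{Chores case} ($u<0$; sizes $|u(e)|$). The share of agent $i$ is $w_i u(M)=-(|u(B_i)|+z_i)$, so her deficit is $|u(A_i)|-|u(B_i)|-z_i$. For $i>t$ this equals $-z_i\le 0$, so these agents need nothing. For $i\le t$ the deficit is $(|u(e_i)|-z_i)^+\le 1-z_i$. Since $z$ is sorted decreasingly, the first $t$ agents are the ones with the largest leftover space, so $\sum_{i\le t}z_i\ge \frac tn\sum_i z_i=\frac tn\sum_{j\le t}|u(e_j)|$. The total subsidy is then
\[
\sum_{j\le t}|u(e_j)|-\sum_{i\le t}z_i\le \Bigl(1-\tfrac tn\Bigr)\sum_{j\le t}|u(e_j)|\le \tfrac{t(n-t)}{n}.
\]
Here the first step uses $|u(e_i)|>z_i$, which holds because $e_i$ did not fit, so each term in the positive part is positive.

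The one delicate point is the goods case: I must check that the clean form $\frac{n-t}{n}u(\Mhat)$ with $u(\Mhat)\le t$ is correct. It is, since it is just an averaging over the sorted $z$'s. The rest is the integer maximization $\max_{0\le t<n} t(n-t)/n=\tau(n)$, attained at $t=\lfloor n/2\rfloor$.
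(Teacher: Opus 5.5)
Your proof is correct and follows the paper's argument. You use the same goods/chores case split, write each deficit in terms of the leftover space $z_i$, average over the sorted $z_i$'s, and use $|u(\hat{M})|\le t$ to get $t(n-t)/n\le\tau(n)$. The one difference is cosmetic: in the chores case you lower-bound $\sum_{i\le t} z_i$ directly, whereas the paper first rewrites the subsidy as $|u(\hat{M})|-\sum_{i\le t} z_i=\sum_{i>t} z_i$ so that both cases reduce to the same inequality.
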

\begin{proof}
\textbf{Case 1}: All items are goods.
Then the first $t$ agents are PROP-satisfied, and so, require no subsidy.
Every other agent $i$ requires a subsidy of $z_i$.
Thus, the total subsidy required is $\sum_{i=t+1}^n z_i$.

\textbf{Case 2}: All items are chores.
Then the last $n-t$ agents are PROP-satisfied, and so, require no subsidy.
Every other agent $i$ requires a subsidy of $|u(e_i)| - z_i$.
Thus, the total subsidy is
\[ \sum_{i=1}^t (|u(e_i)| - z_i) = |u(\Mhat)| - \sum_{i=1}^t z_i = \sum_{i=t+1}^n z_i. \]

Recall that $z_1 \ge \cdots \ge z_n$.
Thus, the subsidy required in both cases is
\[ \sum_{i=t+1}^n z_i \le \frac{n-t}{n}\cdot \sum_{i=1}^n z_i = \frac{n-t}{n}\cdot |u(\Mhat)|
    \le \frac{t\cdot (n-t)}{n} \le \tau(n).
\qedhere \]
\end{proof}

\paragraph{Running time.}
Converting the instance to one containing only goods or only chores takes $O(m)$ time.
In step 1, we can pack an item into a bin if and only if the smallest item
can fit into the bin with the largest empty space.
Such queries can be answered by maintaining balanced binary search trees for items and bins.
Thus, step 1 can be performed in $O(m\log m + n\log n + m\log n)$ time.
Step 2 can be performed in $O(m\log n)$ time.
Thus, the total running time of this algorithm is $O((m+n)\log(m+n))$.

\subsection{Enumerating fPO Allocations}
\label{sec:fpo-enum-details}

\subsubsection{Two Agents}
\label{sec:fpo-enum-details:n2}

For an adjusted-winner-like approach to work, we require each item $e$ to have
a distinct $|v_{i}(e)|/|v_{j}(e)|$.
Fortunately, this is a simple consequence of non-degeneracy.
Note that while the items contain both goods and chores, for strictly objective instances, we always have $|v_{i}(e)|/|v_{j}(e)| = v_{i}(e)/v_{j}(e) > 0$.

\begin{observation}
\label{thm:semi-non-degen}
Let $\Ical \defeq (N, M, v, w)$ be a strictly objective non-degenerate fair division instance.
Then for any two distinct agents $i, j \in N$, and any two distinct items $e_1, e_2 \in M$, we have
\begin{equation*}
    \frac{v_{i}(e_1)}{v_{j}(e_1)} \neq \frac{v_{i}(e_2)}{v_{j}(e_2)}.
\end{equation*}
\end{observation}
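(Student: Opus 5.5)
This is a direct consequence of non-degeneracy (Definition~\ref{defn:degen}) applied to the shortest possible alternating cycle, namely $k=2$. I will argue by contradiction. Suppose that for distinct agents $i,j$ and distinct items $e_1,e_2$ we have $v_i(e_1)/v_j(e_1)=v_i(e_2)/v_j(e_2)$.

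By strict objectiveness, every $v_i(e)$ is nonzero. Indeed, goods have strictly positive value for all agents and chores have strictly negative value for all agents. So all four values are nonzero and the ratios are well defined and positive. Clearing denominators turns the assumed equality into
\begin{equation*}
    v_i(e_1)\cdot v_j(e_2) = v_j(e_1)\cdot v_i(e_2).
\end{equation*}

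Next I form the simple alternating cycle $(i_0,e_1,i_1,e_2,i_2)$ with $i_0=i_2=i$ and $i_1=j$. It has $k=2$, distinct agents $i,j$, and distinct items $e_1,e_2$, so it is admissible in Definition~\ref{defn:degen}. The two products in that definition are
\begin{equation*}
    \prod_{t=1}^{2} v_{i_{t-1}}(e_t) = v_i(e_1)\, v_j(e_2)
    \qquad\text{and}\qquad
    \prod_{t=1}^{2} v_{i_t}(e_t) = v_j(e_1)\, v_i(e_2).
\end{equation*}
By the displayed equality above, these two products are equal, which contradicts non-degeneracy. Hence the ratios must be distinct.

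The only point needing care is checking that the ratios are defined, which is where strict objectiveness is used rather than mere objectiveness, since zero values would be problematic. Beyond that, the argument is a direct unpacking of the definition.
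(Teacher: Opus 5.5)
Your proof is correct and matches the paper's approach: the paper only remarks that this observation is a simple consequence of non-degeneracy. Your argument spells that out by applying Definition~\ref{defn:degen} to the length-$2$ cycle $(i,e_1,j,e_2,i)$, with strict objectiveness guaranteeing nonzero denominators.
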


We now define $\sortAndSplit$, a method of splitting items between two agents.

\begin{definition}[$\sortAndSplit$]
\label{defn:sort-and-split}
Let $\Ical \defeq (N, M, v, w)$ be a strictly objective non-degenerate fair division instance,
where $M^+ \subseteq M$ is the set of goods and $M^- \subseteq M$ is the set of chores.
Let $i,j\in N$ be distinct and $M'\subseteq M$, and write $m'=|M'|$.
If $M'=\varnothing$, define $\sortAndSplit_{\Ical}\left(i,j,M',0\right)=\left(\varnothing,\varnothing\right)$. Otherwise, proceed as follows.
Let $\sigma = (\sigma_1, \ldots, \sigma_{m'})$ be the ordering of $M'$ such that
\begin{equation*}
    \frac{v_{i}(\sigma_1)}{v_{j}(\sigma_1)} > \cdots
    > \frac{v_{i}(\sigma_{m'})}{v_{j}(\sigma_{m'})}.
\end{equation*}
For any $k\in\left\{0,1,\ldots,m'\right\}$, with $[0]=\varnothing$, let $L^{(k)} \defeq \{\sigma_t: t \in [k]\}$ be the first $k$ items in $\sigma$
and $R^{(k)} \defeq \{\sigma_t: t \in [m'] \setminus [k]\} = M'\setminus L^{(k)}$ be the last $m'-k$ items in $\sigma$.
Define $\sortAndSplit_{\Ical}(i, j, M', k)$ as the pair $(A_1, A_2)$, where
\begin{equation*}
    A_1 \defeq (L^{(k)} \cap M^+) \cup (R^{(k)} \cap M^-) \quad \text{and}
    \quad A_2 \defeq (L^{(k)} \cap M^-) \cup (R^{(k)} \cap M^+).
\end{equation*}
\end{definition}

Note that in the above definition, $(A_1, A_2)$ is a partition of $M'$.
We prove that every fPO allocation between two agents is an output of $\sortAndSplit$.

\begin{lemma}
\label{thm:n2-fpo-is-sas}
Let $\Ical \defeq ([2], M, (v_i)_{i=1}^2, w)$ be a strictly objective non-degenerate fair division instance.
Let $X = (X_1, X_2)$ be an fPO allocation.
Then there exists $k \in \{0,1,\ldots,|M|\}$ such that
$(X_1, X_2) = \sortAndSplit_{\Ical}(1, 2, M, k)$.
\end{lemma}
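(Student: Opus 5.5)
The plan is an exchange argument on pairs of items, using the fact that fPO forbids any mutually beneficial fractional trade. Since the instance is strictly objective, every value is nonzero, and for each item $e$ the ratio $r(e) \defeq v_1(e)/v_2(e)$ is positive. By \cref{thm:semi-non-degen}, these ratios are pairwise distinct, so the ordering $\sigma$ in \cref{defn:sort-and-split} is well defined. For each item define its ``owner label'': an item is \emph{labeled left} if it is a good held by agent $1$ or a chore held by agent $2$, and \emph{labeled right} otherwise. The claim is exactly that every left-labeled item precedes every right-labeled item in $\sigma$. Then $k$ is the number of left-labeled items: $L^{(k)}$ is the set of left-labeled items and $R^{(k)}$ the set of right-labeled ones, which matches the definition of $\sortAndSplit$. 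If no item is left-labeled we take $k=0$, and if all are, $k=|M|$.

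Suppose for contradiction that there are items $a,b$ with $b$ left-labeled, $a$ right-labeled, and $r(a)>r(b)$. I will build a fractional trade that strictly improves both agents. The key step is to phrase each item's transfer as moving a signed amount of ``agent-1 value'' from one agent to the other. Moving an $\varepsilon$-fraction of a right-labeled item $a$ from its current holder so that agent $1$ gains $\varepsilon|v_1(a)|$ costs agent $2$ exactly $\varepsilon|v_2(a)|$. This covers both cases: if $a$ is a good of agent $2$, we give it to agent $1$; if $a$ is a chore of agent $1$, we give it to agent $2$. Symmetrically, moving a $\varepsilon'$-fraction of the left-labeled item $b$ in the opposite direction costs agent $1$ the amount $\varepsilon'|v_1(b)|$ and gains agent $2$ the amount $\varepsilon'|v_2(b)|$. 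Strict objectivity is what guarantees these signs.

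Choose $\varepsilon'$ so that agent $1$'s net change is zero, namely $\varepsilon|v_1(a)|=\varepsilon'|v_1(b)|$. Agent $2$'s net change is then
\[
\varepsilon'|v_2(b)|-\varepsilon|v_2(a)| \;=\; \varepsilon|v_1(a)|\left(\frac{1}{r(b)}-\frac{1}{r(a)}\right) \;>\; 0 .
\]
Here I use that $|v_1(e)|/|v_2(e)|=r(e)$ for every item. Taking $\varepsilon$ small enough that both fractions lie in $[0,1]$ gives a fractional allocation that weakly improves agent $1$ and strictly improves agent $2$. This contradicts fPO.

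The main obstacle is purely bookkeeping: checking that all four good/chore combinations for $a$ and $b$ reduce to the same signed computation. The left/right labeling is designed to make that uniform, so I would verify each of the four cases once to confirm the sign conventions.
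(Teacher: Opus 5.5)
Your proposal is correct and is essentially the paper's proof: you use the same labeling (your left/right are the paper's labels 1/2), the same inverted pair, and the same two-item fractional exchange contradicting fPO. The only difference is cosmetic normalization. The paper scales the transfers by $1/\sqrt{v_1(e)v_2(e)}$ so that both agents strictly gain, whereas you hold agent 1's utility fixed and make agent 2 strictly better off, which equally suffices under the paper's definition of Pareto domination.
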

\begin{proof}
Let $M^+$ be the set of goods, $M^-$ be the set of chores, and $m \defeq |M|$. W.l.o.g., assume that
\begin{equation*}
    \frac{v_1(1)}{v_2(1)} > \cdots > \frac{v_1(m)}{v_2(m)}.
\end{equation*}
For each item $e \in M$, we give it 
\begin{itemize}
    \item a label of `1' if $e \in X_1 \cap M^+$ or $e \in X_2 \cap M^-$,
    \item a label of `2' if $e \in X_1 \cap M^-$ or $e \in X_2 \cap M^+$.
\end{itemize}

Let $k$ be the number of items labeled 1.
We now show that all items in $[k]$ are labeled 1, and the rest are labeled 2, which implies that $(X_1, X_2) = \sortAndSplit_{\Ical}(1, 2, M, k)$.

Suppose there exist items $e_1 < e_2$ such that $e_1$ is labeled 2 and $e_2$ is labeled 1.
We now show how to transfer some amount of $e_1$ and $e_2$ between the agents
to increase the utility of both agents, which contradicts $X$ being fPO.
For all $e \in M$, define 
\begin{equation*}
    \rho_e \defeq \frac{v_1(e)}{v_2(e)} \quad \text{and} \quad \beta_e \defeq \sqrt{v_1(e)\cdot v_2(e)}.
\end{equation*}

Then we have $\rho_1 > \cdots > \rho_m$.
Let $0 < \eps \le \min\left\{\beta_{e_1},\beta_{e_2}\right\}$.

First, let us change the owner of an $\eps/\beta_{e_1}$ fraction of $e_1$.
Recall that $e_1$ has label 2.
If $e_1 \in M^+$, this involves transferring a fraction of $e_1$ from agent 2 to agent 1,
and if $e_1 \in M^-$, this involves transferring a fraction of $e_1$ from agent 1 to agent 2.
In either case, agent 1's utility increases by
\begin{equation*}
    |v_1(e_1)|\cdot \frac{\eps}{\beta_{e_1}} = \sqrt{\rho_{e_1}}\cdot \eps,
\end{equation*}
and agent 2's utility decreases by
\begin{equation*}
    |v_2(e_1)|\cdot \frac{\eps}{\beta_{e_1}} = \frac{\eps}{\sqrt{\rho_{e_1}}}.
\end{equation*}

Next, let us change the owner of an $\eps/\beta_{e_2}$ fraction of $e_2$.
If $e_2 \in M^+$, this involves transferring a fraction of $e_2$ from agent 1 to agent 2,
and if $e_2 \in M^-$, this involves transferring a fraction of $e_2$ from agent 2 to agent 1.
In either case, agent 1's utility decreases by 
\begin{equation*}
    |v_1(e_2)|\cdot \frac{\eps}{\beta_{e_2}} = \sqrt{\rho_{e_2}}\cdot \eps,
\end{equation*}
and agent 2's utility increases by
\begin{equation*}
    |v_2(e_2)|\cdot \frac{\eps}{\beta_{e_2}} = \frac{\eps}{\sqrt{\rho_{e_2}}}.
\end{equation*}

Since $\rho_{e_1} > \rho_{e_2}$, agent 1's net utility increase is $(\sqrt{\rho_{e_1}} - \sqrt{\rho_{e_2}})\cdot \eps > 0$,
and agent 2's net utility increase is $\left(\frac{1}{\sqrt{\rho_{e_2}}} - \frac{1}{\sqrt{\rho_{e_1}}}\right)\cdot \eps > 0$.
Thus, the new allocation obtained by making these transfers strictly increases both agents' utilities, which contradicts $X$ being fPO.
\end{proof}

\subsubsection{More Than Two Agents}
\label{sec:fpo-enum-details:n2+}

We first make the simple observation that in any fPO allocation,
the sub-allocation between any pair of agents is also fPO.

\begin{lemma}
\label{thm:fpo-restr-2}
Let $A$ be an fPO allocation for the fair division instance
$\Ical \defeq (N, M, v, w)$.
Let $i, j \in N$.
Let $N' \defeq \{i, j\}$ and $M' \defeq A_{i} \cup A_{j}$.
Let $\Ical' \defeq (N', M', (v_i, v_j), (\frac{w_i}{w_i+w_j}, \frac{w_j}{w_i+w_j}))$.
Then $(A_i, A_j)$ is an fPO allocation for the instance $\Ical'$.
\end{lemma}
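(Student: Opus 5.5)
The plan is a direct argument by contradiction. Suppose $(A_i, A_j)$ is not fPO for $\Ical'$. Then there is a fractional allocation $\by = (y_i, y_j)$ of $M' = A_i \cup A_j$ between agents $i$ and $j$ that Pareto dominates it in $\Ical'$. Concretely, $\sum_{e\in M'} v_i(e)\, y_i(e) \ge v_i(A_i)$ and $\sum_{e\in M'} v_j(e)\, y_j(e) \ge v_j(A_j)$, with at least one of these inequalities strict. The weights of $\Ical'$ play no role, because fPO depends only on valuations.

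From $\by$ we build a fractional allocation $\bx$ for the full instance $\Ical$:
\begin{itemize}
\item for $e \in M'$, set $x_i(e) = y_i(e)$, $x_j(e) = y_j(e)$, and $x_k(e) = 0$ for every $k \notin \{i,j\}$;
\item for $e \notin M'$, let $\bx$ agree with $A$, so that $x_k(e) = 1$ if $e \in A_k$ and $0$ otherwise.
\end{itemize}
This is a feasible fractional allocation, since every item is fully assigned and all entries are nonnegative.

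We then compare utilities under $\bx$ and $A$.
\begin{itemize}
\item Every agent $k \notin \{i,j\}$ receives exactly $A_k$ (note $A_k \cap M' = \varnothing$), so its utility is unchanged.
\item Agent $i$ receives no items outside $M'$, because items outside $M'$ go to agents other than $i$ and $j$. Hence its utility under $\bx$ equals its utility under $\by$, which is at least $v_i(A_i)$. The same holds for agent $j$.
\item At least one of these two inequalities is strict, by the choice of $\by$.
\end{itemize}
Therefore $\bx$ Pareto dominates $A$ in $\Ical$, contradicting that $A$ is fPO.

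There is no real obstacle here. The only point to check carefully is that agents $i$ and $j$ receive nothing outside $M'$ under $A$, so that their utilities in $\Ical$ coincide with their utilities in $\Ical'$. This follows because $M'$ is defined as exactly $A_i \cup A_j$. The degenerate case $i = j$ is excluded implicitly, since the lemma concerns a pair of agents.
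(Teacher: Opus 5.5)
Your proposal is correct and is essentially the paper's argument: extend the dominating fractional allocation of $M'$ to all of $M$ by keeping every other agent's bundle from $A$, and observe that $i$ and $j$ receive nothing outside $M'$. The extended allocation then Pareto-dominates $A$ in $\Ical$, contradicting that $A$ is fPO.
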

\begin{proof}
Suppose a fractional allocation $\bx$ Pareto-dominates $(A_i,A_j)$ in $\Ical'$.
Let $\by$ be a fractional allocation for $\Ical$ where $y_i=x_i$, $y_j=x_j$ on $M'$, with zero fractions for these agents on $M\setminus M'$,
and for all $t \in N \setminus N'$ and $e \in M$, we have $y_{t,e} \defeq \boolOne(e \in A_t)$.
Then $\by$ Pareto-dominates $A$ in $\Ical$, which contradicts $A$ being fPO.
\end{proof}

By the above lemma and our previous analysis on two agents, we show that any fPO allocation can be obtained by the following process:
make each pair of agents \emph{fight over} the items using $\sortAndSplit$ (by enumerating $k$),
and each agent's bundle is based on the items she \emph{wins} in every fight.

\begin{definition}[$\condorcetSplit$]
\label{defn:condorcet-split}
Let $\Ical \defeq (N, M, v, w)$ be a strictly objective non-degenerate fair division instance.
Let $P_N \defeq \{(i, j): i \in N, j \in N, i < j\}$
be the set of all pairs of agents.
For any $(i, j) \in P_N$, let $k_{i, j} \in \{0,1,\ldots,|M|\}$,
and $(A^{(i, j)}_{i}, A^{(i, j)}_{j}) \defeq \sortAndSplit(i, j, M, k_{i, j})$.
For any agent $t \in N$, define
\[ A_t \defeq \left(\bigcap_{t' \in [t-1]} A^{(t', t)}_t\right)
    \cap \left(\bigcap_{t' \in N \setminus [t]} A^{(t, t')}_t\right). \]
Let $\mathbf{k} = (k_{i,j})_{(i,j) \in P_N}$.
Then $\condorcetSplit_{\Ical}(\mathbf{k})$ is defined as the sequence $(A_1, \ldots, A_n)$.
\end{definition}

Clearly, the sets $(A_1,A_2,\ldots,A_n)$ in the output of $\condorcetSplit$ are pairwise-disjoint, because if an item $e$ is allocated to both agents $i$ and $j$, then it means that both agents win the item in $\sortAndSplit_{\Ical}(i, j, M, k_{i,j})$, which is impossible.

\begin{lemma}
\label{thm:condorcet-is-alloc}
The sets $(A_1, \ldots, A_n) \defeq \condorcetSplit_{\Ical}(\mathbf{k})$ are pairwise-disjoint.
\end{lemma}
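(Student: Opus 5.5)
The plan is to argue by contradiction, pair by pair. The key fact is that $\sortAndSplit$ always returns a partition of the ground set. Fix two distinct agents $s<t$ in $N$ and suppose some item $e$ lies in $A_s\cap A_t$. By \cref{defn:condorcet-split}, $A_s$ is an intersection of sets that includes $A^{(s,t)}_s$, since $t\in N\setminus[s]$. Likewise, $A_t$ is an intersection that includes $A^{(s,t)}_t$, since $s\in[t-1]$. Hence $e\in A^{(s,t)}_s\cap A^{(s,t)}_t$.

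Next I invoke \cref{defn:sort-and-split} with $M'=M$ and $k=k_{s,t}$, using the notation $L^{(k)}$ and $R^{(k)}$ from that definition. The pair $(A^{(s,t)}_s,A^{(s,t)}_t)$ equals
\[
\bigl((L^{(k)}\cap M^+)\cup(R^{(k)}\cap M^-),\;(L^{(k)}\cap M^-)\cup(R^{(k)}\cap M^+)\bigr).
\]
Here $L^{(k)}$ and $R^{(k)}$ partition $M$. Because the instance is strictly objective, $M^+$ and $M^-$ also partition $M$. So every item falls into exactly one of the four cells $L\cap M^+$, $L\cap M^-$, $R\cap M^+$, $R\cap M^-$. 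The first component uses two of these cells and the second component uses the other two, so the components are disjoint. This contradicts $e$ lying in both, so $A_s\cap A_t=\varnothing$ for every pair $s<t$.

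The only point needing care is indexing. Each unordered pair $\{s,t\}$ is processed once, with the ordered pair $(\min,\max)$ in $P_N$, and both $A_s$ and $A_t$ intersect in their component of that same split. Once this is checked, the claim is immediate, and there is no real obstacle. The degenerate case $M=\varnothing$ is trivial, since then every $A_t$ is empty.
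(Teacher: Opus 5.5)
Your proposal is correct and follows the paper's argument: an item in $A_s\cap A_t$ would lie in both components of $\sortAndSplit_{\Ical}(s,t,M,k_{s,t})$, which is impossible. Your four-cell check that the two components form a partition of $M$ just spells out what the paper calls "impossible" in one line.
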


However, the output of $\condorcetSplit$ need not be a full allocation, i.e., some items might be unallocated.
For example, for $n = 3$, $k_{1,2} = 0$, $k_{2,3} = 0$, and $k_{1,3} = m$, all sets output by $\condorcetSplit(\mathbf{k})$ are empty.
In the following, we show that every fPO allocation
is the output of $\condorcetSplit(\mathbf{k})$ for some vector $\mathbf{k}$, which implies that by identifying the full and fPO allocations that are output by $\condorcetSplit(\mathbf{k})$, we can enumerate all fPO allocations.

\begin{lemma}
\label{thm:fpo-pair-is-sas}
Let $\Ical \defeq (N, M, v, w)$ be a strictly objective non-degenerate fair division instance.
Let $\bX$ be an fPO allocation. Then for any two agents $i < j$,
there exists $k_{i,j} \in \{0,1,\ldots,|M|\}$ such that
$X_{i} \subseteq A^{(i, j)}_{i}$ and $X_{j} \subseteq A^{(i, j)}_{j}$, where
$(A^{(i, j)}_{i}, A^{(i, j)}_{j}) \defeq \sortAndSplit(i, j, M, k_{i, j})$.
\end{lemma}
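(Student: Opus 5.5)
The plan is to reduce to the two-agent case through \cref{thm:fpo-restr-2} and \cref{thm:n2-fpo-is-sas}, and then to enlarge the resulting split so that it covers all of $M$, not just $X_i \cup X_j$.

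\textbf{Step 1 (restriction).} Fix agents $i<j$ and set $M' = X_i \cup X_j$. By \cref{thm:fpo-restr-2}, $(X_i, X_j)$ is fPO for the two-agent instance $\Ical'$ on $M'$. This instance is still strictly objective and non-degenerate, since both properties are inherited by sub-instances. So \cref{thm:n2-fpo-is-sas}, applied with agents $i,j$ in place of $1,2$, gives a $k'$ with
\[
(X_i, X_j) = \sortAndSplit_{\Ical'}(i,j,M',k').
\]

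\textbf{Step 2 (ordering).} By \cref{thm:semi-non-degen}, the ratios $v_i(e)/v_j(e)$ are distinct across all of $M$. Hence the ordering $\sigma'$ of $M'$ used in $\Ical'$ is exactly the restriction of the global ordering $\sigma$ of $M$ by decreasing ratio. Concretely, $L'^{(k')}$ consists of the items of $M'$ whose ratio exceeds some threshold, and $R'^{(k')}$ consists of the remaining items of $M'$.

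\textbf{Step 3 (choosing $k_{i,j}$).} If $k'=0$, take $k_{i,j}=0$. Otherwise, let $e^\ast$ be the last item of $L'^{(k')}$, and let $k_{i,j}$ be the position of $e^\ast$ in $\sigma$. Then:
\begin{itemize}
\item every item of $M'$ in the global prefix $L^{(k_{i,j})}$ lies in $L'^{(k')}$;
\item every item of $M'$ in the global suffix $R^{(k_{i,j})}$ lies in $R'^{(k')}$.
\end{itemize}
That is, $L^{(k_{i,j})} \cap M' = L'^{(k')}$ and $R^{(k_{i,j})} \cap M' = R'^{(k')}$.

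\textbf{Step 4 (containment).} Intersecting the definition of $\sortAndSplit_{\Ical}(i,j,M,k_{i,j})$ with $M'$ gives
\[
A^{(i,j)}_i \cap M' = (L'^{(k')}\cap M^+) \cup (R'^{(k')}\cap M^-) = X_i,
\]
and similarly $A^{(i,j)}_j \cap M' = X_j$. The goods/chores partition is global, so it agrees on $M'$. This yields $X_i \subseteq A^{(i,j)}_i$ and $X_j \subseteq A^{(i,j)}_j$.

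The main obstacle is Step 3. There I need the split point to be consistent between the sub-ordering and the full ordering, and the case $k'=0$ must be handled separately. Both rely on the ratio ordering being a strict total order on all of $M$. That is exactly what \cref{thm:semi-non-degen} provides, and strict objectiveness keeps every ratio positive and well defined.
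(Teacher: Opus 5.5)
Your proposal is correct and takes essentially the same approach as the paper. You restrict to $M' = X_i \cup X_j$ via \cref{thm:fpo-restr-2}, apply \cref{thm:n2-fpo-is-sas}, and lift the split point using that the ordering of $M'$ is a subsequence of the global ratio ordering $\sigma$. Your explicit choice of $k_{i,j}$ (the position in $\sigma$ of the last item of $L'^{(k')}$, or $0$ when $k'=0$) spells out the step the paper states only as ``there exists some $k'$''.
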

\begin{proof}
Let $m=|M|$ and $\sigma = (\sigma_1, \ldots, \sigma_m)$ be an ordering of the items such that
\begin{equation*}
    \frac{v_{i}(\sigma_1)}{v_{j}(\sigma_1)} > \cdots > \frac{v_{i}(\sigma_m)}{v_{j}(\sigma_m)}.
\end{equation*}

Let $\Ical'$ be the instance induced by $N' \defeq \{i, j\}$ and $M' \defeq X_{i} \cup X_{j}$.
Then $(X_i, X_j)$ is an fPO allocation for the instance $\Ical'$ (by \cref{thm:fpo-restr-2}).
By \cref{thm:n2-fpo-is-sas},
there exists $k$ such that $(X_{i},X_{j}) =
\mathrm{sortAndSplit}(i, j, M',k)$.
Note that the ordering of items in $\mathrm{sortAndSplit}(i, j, M',k)$ is a subsequence of $\sigma = (\sigma_1, \ldots, \sigma_m)$, which implies that there exists some $k'$ such that $X_{i} \subseteq A^{(i, j)}_{i}$ and $X_{j} \subseteq A^{(i, j)}_{j}$, where
$(A^{(i, j)}_{i}, A^{(i, j)}_{j}) \defeq \sortAndSplit(i, j, M, k')$.
\end{proof}

\begin{lemma}
\label{thm:fpo-is-sas}
Let $\Ical \defeq (N, M, v, w)$ be a strictly objective non-degenerate fair division instance.
Let $\bX$ be an fPO allocation. Then $\bX = \condorcetSplit_{\Ical}(\mathbf{k})$ for some vector $\mathbf{k}$.
\end{lemma}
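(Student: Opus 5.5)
The plan is to invoke \cref{thm:fpo-pair-is-sas} once for every pair of agents and then check that the resulting condorcetSplit reproduces $\bX$ exactly.

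\textbf{Choosing $\mathbf{k}$.} For each pair $(i,j)\in P_N$ with $i<j$, \cref{thm:fpo-pair-is-sas} gives an index $k_{i,j}\in\{0,1,\ldots,|M|\}$ such that $X_i\subseteq A^{(i,j)}_i$ and $X_j\subseteq A^{(i,j)}_j$, where $(A^{(i,j)}_i,A^{(i,j)}_j)=\sortAndSplit_{\Ical}(i,j,M,k_{i,j})$. Collect these into $\mathbf{k}=(k_{i,j})_{(i,j)\in P_N}$, and let $(A_1,\ldots,A_n)=\condorcetSplit_{\Ical}(\mathbf{k})$. The goal is to show $A_t=X_t$ for every $t$.

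\textbf{Inclusion $X_t\subseteq A_t$.} Fix an agent $t$. For every pair containing $t$, the corresponding sortAndSplit output assigns to $t$ a set containing $X_t$. This holds whether $t$ is the first or the second coordinate of the pair, since \cref{thm:fpo-pair-is-sas} gives containment for both agents. By \cref{defn:condorcet-split}, $A_t$ is the intersection of these sets over all $t'\neq t$, so $X_t\subseteq A_t$. In the degenerate case $n=1$ there are no pairs, the intersection is empty-indexed, and that case is handled separately anyway.

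\textbf{Equality.} Since $\bX$ is a complete allocation, the sets $X_1,\ldots,X_n$ partition $M$. By \cref{thm:condorcet-is-alloc}, the sets $A_1,\ldots,A_n$ are pairwise disjoint. Now suppose some item $e\in A_t\setminus X_t$. Then $e\in X_s$ for some $s\neq t$, and hence $e\in A_s$ by the inclusion step. This puts $e$ in both $A_t$ and $A_s$, contradicting disjointness. Therefore $A_t=X_t$ for every $t$, which gives $\bX=\condorcetSplit_{\Ical}(\mathbf{k})$.

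\textbf{Main obstacle.} The only delicate point is making sure \cref{thm:fpo-pair-is-sas} applies uniformly. This requires that for each pair, one index $k_{i,j}$ simultaneously works for both agents relative to the \emph{full} item set $M$, not just $M'=X_i\cup X_j$. That lemma already provides exactly this. So the proof reduces to the intersection-plus-disjointness argument above, and I expect no real difficulty beyond keeping the index conventions ($i<j$, and $t$ appearing as either coordinate) straight.
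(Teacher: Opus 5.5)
Your proof is correct and matches the paper's own argument step for step. Like the paper, you choose each $k_{i,j}$ via \cref{thm:fpo-pair-is-sas}, obtain $X_t \subseteq A_t$ from the definition of $\condorcetSplit$, and upgrade this to equality using the completeness of $\bX$ together with the pairwise disjointness from \cref{thm:condorcet-is-alloc}.
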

\begin{proof}
By \cref{thm:fpo-pair-is-sas}, for any two agents $i < j$,
there exists $k_{i,j} \in \{0,1,\ldots,|M|\}$ such that
$X_{i} \subseteq A^{(i, j)}_{i}$ and $X_{j} \subseteq A^{(i, j)}_{j}$, where
$(A^{(i, j)}_{i}, A^{(i, j)}_{j}) \defeq \sortAndSplit(i, j, M, k_{i, j})$.
Recall that $P_N$ is the set of all pairs of agents (as defined in \cref{defn:condorcet-split}).
Let $\mathbf{k} = (k_{i,j})_{(i,j) \in P_N}$.
Define $A \defeq \condorcetSplit_{\Ical}(\mathbf{k})$.
Then for any $i \in N$, we have $X_i \subseteq A_i$.
Moreover, the sets $A_1, \ldots, A_n$ are pairwise-disjoint by \cref{thm:condorcet-is-alloc}.

Suppose $X_i \neq A_i$ for some $i \in N$. Let $e \in A_i \setminus X_i$.
Since $\bX$ is a full allocation, there exists $j \in N$ such that $e \in X_j$, and so $e \in A_j$.
Thus, $A_i$ and $A_j$ are not disjoint, which is a contradiction.
Hence, we have $X_i = A_i$ for all $i \in N$ and $\bX = \condorcetSplit_{\Ical}(\mathbf{k})$.
\end{proof}

Finally, we show that given any allocation $A$ (e.g., an output of $\condorcetSplit_{\Ical}(\mathbf{k})$), we can decide whether it is fPO in polynomial time by solving a linear program.
Note that the same method applies to fractional allocations.

\begin{observation}
\label{thm:check-fpo-using-lp}
A fractional allocation $\by$ is fPO for the fair division instance
$(N, M, v, w)$ iff the optimal objective value
of the following linear program is at most the objective value of $\by$:
\begin{align*}
    \max_{\bx \in \mathbb{R}_{\ge 0}^{n \times m}} \quad
    &  \sum_{i\in N}\sum_{e\in M} v_i(e)\cdot x_{i}(e) & \\
    \text{subject to} \quad
    &  \sum_{e\in M} v_i(e)\cdot x_{i}(e) \ge \sum_{e\in M} v_i(e)\cdot y_{i}(e)
    & \text{for all } i \in N \\
    \text{and} \quad
    &  \sum_{i\in N} x_{i}(e) = 1  & \text{for all } e \in M
\end{align*}
\end{observation}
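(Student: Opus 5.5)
The plan is to prove both directions directly from the definition of fPO, using only that $\by$ is itself feasible for the linear program and that the objective is the sum of the agents' utilities. First I will note that the program is well posed. Its feasible region is nonempty, since $\bx=\by$ satisfies every constraint with equality. It is also bounded, since each $x_i(e)\in[0,1]$. Hence the optimum is attained, and it is always at least $\sum_{i\in N}\sum_{e\in M} v_i(e)\cdot y_i(e)$, the objective value of $\by$.

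For the forward direction, I will assume $\by$ is fPO and take any feasible $\bx$. The constraints give $\sum_{e} v_i(e)\,x_i(e)\ge \sum_e v_i(e)\,y_i(e)$ for every agent $i$. If any of these inequalities were strict, $\bx$ would be a fractional allocation Pareto dominating $\by$, contradicting fPO. So every agent's utility is unchanged. Summing over $i$ shows that the objective value of $\bx$ equals that of $\by$. Thus the optimum is at most, and in fact equal to, the objective value of $\by$.

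For the converse, I will assume the optimum is at most the objective value of $\by$, and suppose some fractional allocation $\bx$ Pareto dominates $\by$. Then $\bx$ satisfies all the utility constraints, so it is feasible. Its objective value is a sum of per-agent utilities, each at least that of $\by$ and at least one strictly larger. Hence it strictly exceeds the objective value of $\by$, and therefore so does the optimum, a contradiction. So $\by$ is fPO.

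No step is a real obstacle. The only point needing care is that the unweighted sum of utilities strictly increases under a Pareto improvement. This holds because all coefficients in the sum are $1$, so no positivity condition on weights is needed. Finally, the program has polynomial size, so the test can be carried out in polynomial time, as used in \cref{sec:fpo-enum}.
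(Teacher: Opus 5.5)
Your proof is correct: both directions follow from the definition of fPO, since any fractional allocation Pareto dominating $\by$ is feasible for the program and strictly increases the unweighted utility sum, while $\by$ itself is always feasible. The paper states this as an observation without a written proof, and your argument is the routine verification it leaves implicit.
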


\subsection{Two-Stage Allocation}
\label{sec:two-stage-extra}

\begin{proofof}{Lemma~\ref{thm:hier-subsidy}}
For each group $j \in [k]$, let $S_j \defeq W_j \cdot u_j(M) - u_j(\Ahat_j)$.
Then
\[ \subsidy(\Ahat, \Icalhat) = \sum_{j=1}^k (S_j)^+. \]

For each $j \in [k]$, we have
\begin{align*}
\subsidy\left(A^{(j)}, \Ical^{(j)}\right) &= \sum_{i \in N_j} \left(\frac{w_i}{W_j}\cdot u_j(\Ahat_j) - u_j(A_i^{(j)})\right)^+ \\
& = \sum_{i \in N_j} \left(w_i \cdot u_j(M) - u_j(A_i^{(j)}) - \frac{w_i}{W_j}\cdot S_j\right)^+.
\end{align*}

Thus,
\begin{align*}
\subsidy(A, \Ical) &= \sum_{j=1}^k \sum_{i \in N_j} \left(w_i\cdot  u_j(M) - u_j(A_i^{(j)}) \right)^+ \\ 
& = \sum_{j=1}^k \sum_{i \in N_j} \left(\left(w_i\cdot  u_j(M) - u_j(A_i^{(j)}) - \frac{w_i}{W_j}\cdot S_j\right)
    + \frac{w_i}{W_j}\cdot S_j\right)^+ \\ 
& \le \sum_{j=1}^k \sum_{i \in N_j} \left(\left(w_i\cdot u_j(M) - u_j(A_i^{(j)}) - \frac{w_i}{W_j}\cdot S_j\right)^+
    + \frac{w_i}{W_j}\cdot (S_j)^+\right) \\ 
& = \sum_{j=1}^k \left( \subsidy(A^{(j)}, \Ical^{(j)}) + (S_j)^+ \right) \\ 
& = \subsidy(\Ahat, \Icalhat) + \sum_{j=1}^k \subsidy( A^{(j)}, \Ical^{(j)} ).
\end{align*}
\end{proofof}

\end{document}